\newcommand{\out}{\mathcal S}
\newcommand{\vett}[1]{\mathbf{#1} }
\renewcommand{\a}{{\vett a}}
\newcommand{\ai}{{\vett a}_{-i}}
\renewcommand{\b}{\vett b}
\newcommand{\bi}{\b_{-i}}
\renewcommand{\xi}{\vett{x}_{-i}}
\newcommand{\x}{\vett{x}}
\newcommand{\y}{\vett{y}}
\newcommand{\z}{\vett{z}}
\newcommand{\w}{\vett{w}}
\newcommand{\p}{\mathbf{p}}
\newcommand{\real}{\mbox{$\mathbb{R}$}}
\newcommand{\M}{\mathcal{M}}
\newcommand{\T}{\mathcal{T}}
\newcommand{\vgraph}{OSP-graph}
\newcommand{\ver}{{\mathcal O}_{i}^{\mathcal{T}}}
\newcommand{\A}{\mathcal{A}}
\newcommand{\Sm}{\mathcal{M}_\mathsf{set}^\mathsf{opt}}
\newcommand{\OPT}{\mathcal{OPT}}
\newcommand{\notP}{\bar{P}}
\newcommand{\shp}{{\textsc{SP}}}
\newtheorem{theorem}{Theorem}
\newtheorem{lemma}[theorem]{Lemma}
\newtheorem{proposition}[theorem]{Proposition}
\newtheorem{definition}[theorem]{Definition}
\newtheorem{corollary}[theorem]{Corollary}
\newtheorem{obs}[theorem]{Observation}
\newtheorem{fact}[theorem]{Fact}
\newtheorem{example}[theorem]{Example}
\DeclareFontFamily{U}{mathx}{\hyphenchar\font45}
\DeclareFontShape{U}{mathx}{m}{n}{
	<5> <6> <7> <8> <9> <10>
	<10.95> <12> <14.4> <17.28> <20.74> <24.88>
	mathx10
}{}
\DeclareSymbolFont{mathx}{U}{mathx}{m}{n}
\DeclareMathSymbol{\bigtimes}{1}{mathx}{"91}
\newcommand{\block}[1]{
	\begin{center}
		{
			\fbox{
				\begin{minipage}[t]{0.95\textwidth}
					#1
				\end{minipage}
			}
		}
	\end{center}
}
\begin{document}

\title{On the approximation guarantee of obviously strategyproof mechanisms} 
\author{Diodato Ferraioli\thanks{Universit\`a di Salerno, Email: {\tt dferraioli@unisa.it}.} \and Adrian Meier\thanks{ETH Zurich, Email: {\tt meiera@student.ethz.ch}, {\tt paolo.penna@inf.ethz.ch}.} \and Paolo Penna\footnotemark[2] \and Carmine Ventre\thanks{University of Essex, Email: {\tt c.ventre@essex.ac.uk}.}}
\date{}

\maketitle

\begin{abstract}
Catering to the incentives of people with limited rationality is a challenging research direction that requires novel paradigms to design mechanisms and approximation algorithms. Obviously strategyproof (OSP) mechanisms have recently emerged as the concept of interest to this research agenda. However, the majority of the literature in the area has either highlighted the shortcomings of OSP or focused on the ``right'' definition rather than on the construction of these mechanisms. 

We here give the first set of \emph{tight} results on the approximation guarantee of OSP mechanisms for scheduling related machines and a \emph{characterization} of optimal OSP mechanisms for set system problems. By extending the well-known cycle monotonicity technique, we are able to concentrate on the algorithmic component of OSP mechanisms and provide some novel paradigms for their design. 
\end{abstract}


\section{Introduction}
Mechanism design has been a very active research area that aims to develop algorithms that align the objectives of the designer (e.g., optimality of the solution) with the incentives of self-interested agents (e.g., maximize their own utility). One of the main obstacles to its application in real settings is the assumption of full rationality. Where theory predicts that people should not strategize,
lab experiments in the field show that they do (to their own disadvantage):
this is, for example, the case for the renown Vickrey's second-price auction;
proved to be truthful and yet bidders lie when submitting sealed bids.
Interestingly, however, lies are less frequent when the very same mechanism is implemented via an ascending auction \citep{kagel1987information}.

A vague explanation of this phenomenon is that, from the point of view of a bidder, the truthfulness (a.k.a. strategyproofness) of an ascending price auction is \emph{easier to grasp} than the truthfulness of the second-price sealed bid auction \citep{ausubel2004efficient}. The 
key difference here is on how these auctions are \emph{implemented}:
\begin{itemize}
	\item In the second-price sealed-bid auction (direct-revelation implementation), each bidder  submits her own bid \emph{once} (either her true valuation or a different value). This mechanism  is truthful meaning that truth-telling is a dominant strategy: \emph{for every report of the other bidders}, the utility when truth-telling is not worse than the utility when bidding untruthfully.
	\item In the ascending price auction  (extensive-form implementation), each bidder is repeatedly offered some price which she can accept (stay in the auction) or reject (leave the auction). In this auction,  momentarily \emph{accepting a good price} guarantees a non-negative utility, while \emph{rejecting a good price} or \emph{accepting a bad price} yield non-positive utility. Here good price refers to the private valuation of the bidder and, intuitively, truth-telling in this auction means accepting prices as long as they are not above the true valuation. 
\end{itemize}
Intuitively speaking, in the second type of auction, it is \emph{obvious} for a bidder to decide her strategy, because the utility for the \emph{worst scenario} when truth-telling is at least as good as that of the \emph{best scenario} when cheating. The recent definition of \emph{obviously strategyproof (OSP)} mechanisms \citep{li2015obviously} formalizes this argument:
ascending auctions are OSP mechanisms, while sealed-bid auctions are not. Interestingly, \citet{li2015obviously} proves that a mechanism is OSP \emph{if and only if} truth-telling is dominant even for bidders who lack contingent reasoning skills.

As being OSP is \emph{stronger} than being strategyproof,  it is natural to ask if this has an impact on what can be done by such mechanisms. For instance, the so-called \emph{deferred-acceptance} (DA) mechanisms \citep{Milgrom2014DeferredacceptanceAA} are OSP (as they essentially are ascending price auctions), but unfortunately their performance (approximation guarantee) for several optimization problems is quite poor compared to what strategyproof mechanisms can do \citep{DGR17}. Whether this is an inherent limitation of OSP mechanisms or just of this technique is not clear.

One of the reasons behind this open question might be the absence of a general technique for designing OSP mechanisms and the lack of an algorithmic understanding of OSP mechanisms. Specifically, it is well known that strategyproofness is equivalent to certain \emph{monotonicity} conditions of the \emph{algorithm} used by the mechanism for computing the solution (being it an allocation of goods or a path in a network with self-interested agents). Therefore, one can essentially focus on the algorithmic part and study questions regarding the approximation and the complexity. The same type of questions seem much more challenging for OSP mechanisms, as such characterizations are not known. 
Recent work in the area \citep{BadeG17,pycia2016obvious,mackenzie2017revelation}
aims at simplifying the notion of OSP, by looking at versions of the revelation principle
for OSP mechanisms. This, for example, allows to think, without loss of generality,
at deterministic (rather than randomized) extensive-form mechanisms
where each agent moves sequentially (rather than concurrently).
Nevertheless, this line of work does not seem to help much in thinking algorithmically.

\paragraph{An illustrative example (path auctions).}
To better exemplify this conundrum, let us consider path auctions introduced by \cite{NisRon99}. In this problem, 
each edge corresponds to a link that is owned by a selfish agent, the cost for using link $i$ is some private nonnegative value $t_i$ which is known only to
agent $i$, and the goal is to pick a path of minimal total cost (shortest path) between two given nodes $s$ and  $t$. Since the standard VCG mechanism yields an (exact) strategyproof mechanism for this problem \citep{NisRon99}, one might consider the following natural question:  
\begin{quote}
	\emph{Can we compute the shortest path whilst guaranteeing OSP?}
\end{quote}
 For a graph consisting of \emph{parallel links}, we know from  \citet{li2015obviously} that the answer is yes via a simple 
descending auction to select the cheaper edge. 
Already for \emph{slightly} more general graphs, the answer is unclear. Consider, for example, the graph in Figure~\ref{graph21}. To make things even simpler let us restrict to a two-value domain $\{L,H\}$,
i.e., edges cost either $L$ or $H>L$.
In this setting, a simple OSP mechanism can be designed by querying the agents according to the \emph{implementation tree} (i.e., a querying protocol where different actions are taken according to the answers received) in Figure~\ref{tree21}. This algorithm is augmented with the following payments: $H$ for edges in the selected path, $0$ otherwise.
It is not hard to see that, for every edge $e$,
it is not possible that $e$ is selected when she declares that her type is $H$ and it is not selected when she says $L$. In particular, edge $(s,t)$ is always selected when she says $L$, while the remaining edges are never selected when they declare $H$. Then, if $e$ declares her true type, she receives a utility of $H-L$ if the true type is $L$ and $e$ is selected, and $0$ otherwise; by inspection, she would receive at most the same utility when cheating. It turns out that this argument is enough to prove that the mechanism is indeed OSP. 

Does the same approach work, for example, on the \emph{slightly} more general graph in Figure~\ref{graph22}?
Consider an edge $e$ that is queried before the type of the remaining edges is known
(that is, the first edge to be queried in a sequential mechanism,
or an arbitrary edge in a direct revelation mechanism).
Suppose that the type of this edge is $L$.
If she declares her type truthfully, then the worst that may occur is
that the corresponding path is not selected
(that occurs when this path costs $H+L$ and the alternative path costs $2L$),
and thus $e$ receives utility $0$.
If this edge, instead, cheats and declares $H$, then 
it is possible that the corresponding path is selected
(if it costs $H+L$ and the alternative path costs $2H$)
and $e$ receives utility $H-L$.
Thus, it is not obvious for an edge $e$ lacking contingent reasoning skills,
to understand that revealing the type truthfully is dominant.
This raises the following questions:
Is there a different payment rule that enables to design
an optimal OSP mechanism in the latter case?
If not, how good an approximation can we compute?

\begin{figure}[htbp]
	\centering
	\begin{subfigure}[b]{0.25\linewidth}
		\centering
		\begin{tikzpicture}[shorten >=1pt,node distance=1.5cm]
		\tikzstyle{place}=[circle,draw]
		\node[place] (x)  		     {$s$};
		\node[place] (y)   [above right of=x]  {$u$};
		\node[place] (z) [below right of=y]  {$t$};
		\path[-] (x) edge (y);
		\path[-] (y) edge (z);
		\path[-] (x) edge (z);
		\end{tikzpicture}
		\caption{\label{graph21}}
	\end{subfigure}%
	\begin{subfigure}[b]{0.5\linewidth}
		\centering
		\begin{tikzpicture}[shorten >=1pt,node distance=1.5cm,inner sep=1pt]
		\tikzstyle{place}=[circle,draw]
		\tikzstyle{placew}=[rectangle]
		\node[place] (x)  		     {$(s,t)$};
		\node[place] (y)   [below left of=x]  {$(s,u)$};
		\node[placew] (a)   [below right of=x]  {$(s,t)$};
		\node[place] (z) [below left of=y]  {$(u,t)$};
		\node[placew] (b)   [below right of=y]  {$(s,t)$};
		\node[placew] (c)   [below left of=z]  {$(s,u,t)$};
		\node[placew] (d)   [below right of=z]  {$(s,t)$};
		\path[-] (x) edge node [label={$H$}] {} (y);
		\path[-] (x) edge node [label={$L$}] {} (a);
		\path[-] (y) edge node [label={$L$}] {} (z);
		\path[-] (y) edge node [label={$H$}] {} (b);
		\path[-] (z) edge node [label={$L$}] {} (c);
		\path[-] (z) edge node [label={$H$}] {} (d);
		\end{tikzpicture}
		\caption{\label{tree21}}
	\end{subfigure}%
	\begin{subfigure}[b]{0.25\linewidth}
		\centering
		\begin{tikzpicture}[shorten >=1pt,node distance=1.5cm]
		\tikzstyle{place}=[circle,draw]
		\node[place] (x)  		     {$s$};
		\node[place] (y)   [above right of=x]  {$u$};
		\node[place] (z) [below right of=y]  {$t$};
		\node[place] (w) [below right of=x]  {$v$};
		\path[-] (x) edge (y);
		\path[-] (y) edge (z);
		\path[-] (x) edge (w);
		\path[-] (w) edge (z);
		\end{tikzpicture}
		\caption{\label{graph22}}
	\end{subfigure}
	\caption{OSP mechanisms for path auctions}
\end{figure}
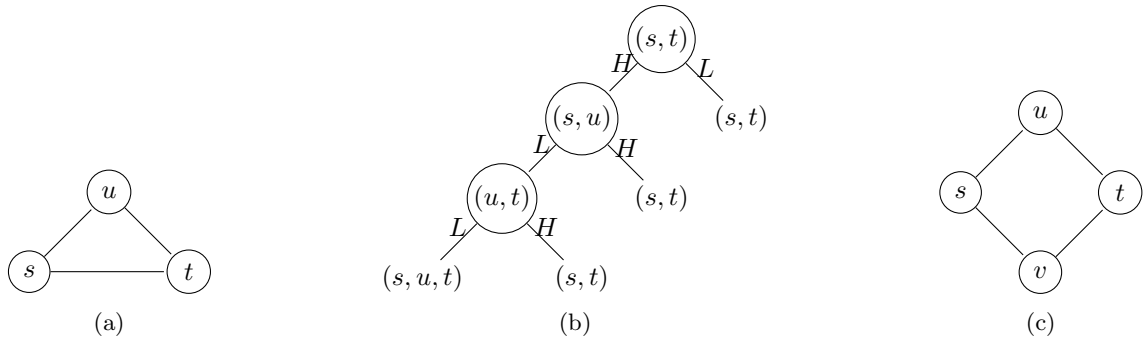

%
%
%
%
%
\bigskip
The goal of this work is  build the foundations to reason about OSP
algorithmically and answer this kind of questions.  In particular, we  advance
the state of the art by providing an algorithmic characterization of OSP mechanisms. Among others, our results show why deferred acceptance (DA) auctions \citep{Milgrom2014DeferredacceptanceAA} -- essentially the only known technique to design OSP mechanisms -- do not fully capture the power of a ``generic'' OSP mechanism, as the latter may exploit some aspects of the implementation (extensive-form game) in a crucial way.
\subsection{Our contribution}
To give an algorithmic characterization of OSP mechanisms,
we extend the well known cycle-monotonicity (CMON) technique.
This approach allows to abstract the truthfulness of an algorithm in terms of non-negative weight cycles on suitably defined graphs. 
We show that non-negative weight cycles continue to characterize OSP when the graph of interest is carefully defined. Our main conceptual contribution is here a way to accommodate the OSP constraints, which \emph{depend on the particular extensive-form implementation} of the mechanism, in the machinery of CMON, which is designed to focus on the algorithmic output of mechanism. 
Interestingly, our technique shows the interplay between algorithms (which outcome/solution to return) and how this is implemented as an extensive-form game (the implementation tree). Roughly speaking, our characterization says which algorithms can be used 
for \emph{any} choice of the implementation tree. The ability to choose between different implementation trees is what gives extra power to the designer: for example, the construction of OSP mechanisms based on  DA auctions \citep{Milgrom2014DeferredacceptanceAA} uses always \emph{the same} fixed tree for all problems and instances. Though this yields a simple algorithmic condition, it can be wasteful in terms of optimality (approximation guarantee)  as we show herein. 
In fact, for our results, we will use CMON \emph{two ways} to characterize both algorithmic properties (having fixed an implementation tree) and implementation properties (having fixed the approximation guarantee we want to achieve). 


Armed with the OSP CMON technique, we are able to give \emph{tight} bounds on the approximation ratio of OSP mechanisms for scheduling related machines (for identical jobs) and a \emph{characterization} of optimal OSP mechanisms for set system problems (which include path auctions as a special case). We study mechanisms for two- and three-value domains, as we prove that these are the only cases in which non-negative two-cycles are necessary and sufficient. 

We show that the optimum for machine scheduling can be implemented OSP-ly when the agents' domains have size two. We prove that given a ``balanced'' optimum (i.e., a greedy allocation of jobs to machines) we can always find an implementation tree for which OSP is guaranteed. The mechanism directly asks the queried agents to reveal their type; given that the domain only contains two values, this is basically a descending/ascending auction. For domains of size three, instead, we give a lower bound of $\sqrt{n}$ and an essentially tight upper bound of $\lceil \sqrt{n} \rceil$. Interestingly, the upper bound is proved with two different OSP mechanisms -- one assuming more than  $\lceil \sqrt{n} \rceil^2$ number of jobs and the second under the hypothesis that there are less than that. On the technical level, these results are shown by using our approach of CMON two ways. We prove that any better than  $\sqrt{n}$-approximate OSP mechanism must have the following structure: for a number of rounds, the mechanism must (i) separate,  in its implementation tree, largest and second largest value in the domain; (ii) assign nothing to agents who have maximum value in the domain. The former property restricts the family of implementation trees we can use, whilst the latter restricts the algorithmic output. Our lower bound shows that there is nothing in this intersection. Our matching upper bounds need to find \emph{both} implementation tree and algorithm satisfying OSP and approximation guarantee. 

\bigskip
\centerline{
\begin{minipage}[c]{0.8\textwidth}
	{\bf Main Theorem on Scheduling (informal).} {\em The tight approximation guarantee of OSP mechanisms is $\sqrt{n}$. The OSP mechanisms use a descending auction (to find the $\lceil \sqrt{n} \rceil$ slowest machines) followed by an ascending auction (to find the fastest machine(s)).}
\end{minipage}
}
\bigskip
\noindent While the general idea of the implementation is that of a descending auction followed by an ascending auction independently of the number of jobs, we need to tailor the design of the mechanisms (namely, their ascending phase) according to the number of jobs to achieve OSP and desired approximation simultaneously. This proves two important points. On one hand, the design of OSP mechanisms is challenging yet interesting as one needs to carefully balance algorithms and their implementation. On the other hand, it proves why fixing the implementation, as in DA auctions, might be the wrong choice. We in fact extend and adapt our analysis to prove that any ascending and descending (thus including DA) auction has an approximation of $n$ (cf. Appendix \ref{apx:asc_desc_sched}). 

For set systems, we study under what conditions it is possible to design optimal OSP mechanisms. We begin by considering a restricted version of set systems, shortest path on graphs comprised of two parallel paths, whose edges have two-value domains. We show how the topology (i.e., number of edges of either path) and values in the domains can change the OSP-implementability of optimal algorithms. Specifically, we show that our observation for shortest path on the graph in Figure \ref{graph21} is not an accident as for all the graphs where a path is a direct edge, we can design an optimum OSP mechanism no matter what the alternative path looks like. Similarly, we prove that there are no optimal OSP mechanisms for the graph in Figure \ref{graph22} and all the graphs where the two paths are composed of the same number (larger than one) of edges. As for the graphs where neither path is direct and each has a different number of edges, the existence of an OSP mechanism returning the shortest path depends on the values in the domains. Given the simple setting considered for shortest paths, the mechanisms that we design are not very complex and it actually turns out that any implementation tree would lead to optimal OSP mechanisms, when these are feasible. We then generalize the setting to any set system problem, wherein agents have three-value domains and fully characterize the properties needed to design OSP mechanisms. 

\bigskip
\centerline{
	\begin{minipage}[c]{0.8\textwidth}
		{\bf Main Theorem on Set Systems (informal).} {\em There is an optimal OSP mechanism if and only if the set of feasible solutions are ``aligned'' for every agent's subdomain. The optimal OSP mechanism, if any, combines ascending and descending auctions depending on the structure of the feasible solution set.}
	\end{minipage}
}
\bigskip
\noindent 
The intuition behind the characterization is simple. From OSP CMON, we know that if an OSP mechanism selects an agent $e$ when she has a ``high'' cost then it must select $e$ when she has a ``low'' cost (akin to monotonicity for strategyproofness). Therefore, to design an optimal OSP mechanism we need to define an implementation tree which satisfies this property. 
At each node of the tree, the domain of the agents is restricted to a particular subdomain of the agents' domain, depending  on the particular history. When $e$ is asked to separate a high cost from a low cost at node $u$ of the tree, we then need the feasible solutions to be ``aligned'' for the subdomain at $u$, which roughly means that it should never be the case that there are two bid profiles in this subdomain for which $e$ belongs to an optimal solution when she has a high cost and is not part of an optimal solution when she has a low cost.  
The somehow surprising extra aspect is that even if the feasible solutions were not aligned for one subdomain then there would be no way to design an implementation tree to bypass this misalignment, since, roughly speaking, any implementation tree for an optimal algorithm must explore all subdomains to find an optimal solution. The technical definition of alignment has some nuisance to do with the particular ways in which the OSP monotonicity can be broken, but on the positive side, rather immediately suggests how to cleverly interleave ascending and descending phases to design an optimal OSP mechanism. 



We remark that our mechanisms are, to the best of our knowledge,
the first examples of OSP mechanisms with money that do not follow a clock or a posted price auction format
(other mechanisms that do not follow these formats have been proposed only for setting without money,
namely matching and voting \citep{li2015obviously,ashlagi2015no,BadeG17,pycia2016obvious}). One of the main messages of our work is exactly that it is possible to combine ascending and descending phases for the implementation trees of algorithms with good approximation guarantees and obtain OSP mechanisms.

\subsection{Related work}

The notion of OSP mechanism has been introduced recently by \citep{li2015obviously} and has received a lot of attention in the community. 
As mentioned above, the class of deferred-acceptance (DA) auctions \citep{Milgrom2014DeferredacceptanceAA} yields OSP mechanisms since every DA auction can be implemented as a (suitable) ascending price auction. One of the main advantages of DA auctions is that the construction boils down to the problem of defining a suitable \emph{scoring function} for the bidders \citep{Milgrom2014DeferredacceptanceAA}. \cite{DGR17} studied the approximability of DA auctions for several optimization problems, and showed that in some cases DA auctions must have an approximation guarantee significantly worse than the best strategyproof mechanism; \citep{DGR17,Kim15} provide a number of positive results where DA auctions are instead optimal. 

Several works have focused on understanding better the notion of OSP mechanism, and to apply it mainly to settings without money, namely matching and voting.
In particular \citet{ashlagi2015no,BadeG17,mackenzie2017revelation} mainly attempt to simplify the notion,
whilst \citet{pycia2016obvious,zhang2017partition} define, among other results, stronger and weaker versions of OSP, respectively. A recent work related to ours is \citep{FerraioliV17} where the authors consider OSP mechanisms with and without money for machine scheduling
and facility location. Their lower bound for machine scheduling is \emph{constant} and uses a particular definition of payments,
while here we prove a $\sqrt{n}$ lower bound that follows from the CMON characterization of OSP; 
their upper bounds instead use monitoring, a model wherein agents pay their reported costs whenever they overbid (instead of their actual cost). 

Research  in algorithmic mechanism design \citep{sandholm2003sequences,hartline2009simple,chawla2010multi}
has  suggested to focus on ``simple'' mechanisms to deal with bounded rationality.
For example, posted-price mechanisms received huge attention very recently and have been applied to many different settings:
from revenue maximization \citep{babaioff2014simple} to makespan minimization \citep{feldman2017makespan},
for buyers with correlated valuations \citep{adamczyk2015sequential}, or with complements \citep{eden2017simple}, or for streams of buyers \citep{correa2017posted}.
In these mechanisms one's own bid is immaterial for the price paid to get some goods of interest
-- this should immediately suggest that trying to play the mechanism is worthless no matter the cognitive abilities of the agents.
However, posted price mechanisms do not fully capture the concept of simple mechanisms:
e.g., ascending price auctions are not posted price mechanisms and still turn out to be ``simple''.
An orthogonal approach is that of verifiably truthful mechanisms \citep{BP15},
wherein agents can run some algorithm to effectively check that the mechanism is incentive compatible.
Nevertheless, these verification algorithms can run for long (i.e., time exponential in the input size) and are so far known only for quite limited scenarios.
Importantly, moreover, they seem to transfer the question from the mechanism itself to the verification algorithm.
Another line of work \citep{hartline2009simple,Eden} focused on extending to more complex settings the well-known result of \citet{bulow1996auctions},
that states that the revenue-maximizing single item auction, i.e., the Myerson auction,
performs worse than the ``simpler'' social welfare maximizing auction, i.e., the Vickrey auction, with one extra bidder.
However, simplicity in this research is vaguely stated, since the focus is more on the complexity of designing a mechanism,
than on the complexity of understanding it.

Another paper relevant in the context of ``simple'' mechanisms is  \citep{GR96}, which shares some of the motivation and
philosophy behind OSP. Specifically, the authors notice that the implementation of a
social choice function via a normal-form game is
simpler/more obvious if there exists a ``simple guide''
that can always be used to discover the strategy
combination which survives successive elimination of
dominated strategies. They showed that backwards induction
implementation via an extensive game is equivalent to
such a ``guided'' implementation via a normal-form game.

CMON is a widely used technique in mechanism design that dates back to \citet{Roc85} -- a general treatment is given in \citep{vohraNet,GuiMulVoh04}.
This method has been used quite extensively to prove truthfulness of mechanisms in the classical setting, cf., e.g., \citep{NisMulti,LavSwa09}
and when some form of verification can be adopted, see \citep{Ven14,KV15}.
Particularly relevant for our work is the research which shows that in order to establish truthfulness it is sufficient to study cycles of length $2$ as in \citep{SakYu05}.

\section{Preliminaries}
A mechanism design setting is defined by a set of $n$ \emph{selfish agents} and a set of allowed \emph{outcomes} $\out$.
Each agent $i$ has a \emph{type} $t_i \in D_i$, where $D_i$ is called the \emph{domain} of $i$.
The type $t_i$ is usually assumed to be \emph{private knowledge} of agent $i$.
We will let $t_i(X) \in \real$ denote the \emph{cost} (opposite of the valuation) of agent $i$ with type $t_i$ for the outcome $X \in \out$. In our applications, we will assume that costs are non-negative; however, our OSP CMON framework and characterizations hold in general no matter the sign (and thus also for valuations).


A \emph{mechanism} is a process for selecting an outcome $X \in \out$.
To this aim, the mechanism interacts with agents.
Specifically, agent $i$ is observed to take \emph{actions} (e.g., saying yes/no)
that may depend on her presumed type $b_i \in D_i$
(e.g., saying yes could ``signal'' that the presumed type has some properties that $b_i$ alone might enjoy).
We say that agent $i$ takes \emph{actions compatible with (or according to) $b_i$} to stress this.
We highlight that the presumed type $b_i$ can be different from the real type $t_i$.

For a mechanism $\M$, we let $\M(\b)$ denote the outcome returned by the mechanism
when agents take actions according to their presumed types $\b = (b_1, \ldots, b_n)$.
In our context, this outcome is given by a pair $(f,\p)$,
where $f = f(\b)$ (termed \emph{social choice function} or, simply,  algorithm) maps the actions taken by the agents according to $\b$ to a feasible solution in $\out$,
and $\p=\p(\b)=(p_1(\b),\ldots,p_n(\b)) \in \real^n$ maps the actions taken by the agents according to $\b$ to \emph{payments} from the mechanism to the agents. 

Each selfish agent $i$ is equipped with a \emph{utility function} $u_i \colon D_i \times \out \rightarrow \real$.
For $t_i \in D_i$ and for an outcome $X = (f,p) \in \out$ returned by a mechanism $\M$,
$u_i(t_i, X)$ is the utility that agent $i$ has for outcome $X$ when her type is $t_i$.
We define utility as a quasi-linear combination of payments and costs, i.e., 
$
u_i(t_i, \M(b_i, \bi)) = p_i(b_i, \bi) - t_i(f(b_i, \bi)).
$

A mechanism $\M$ is \emph{strategy-proof} if $u_i(t_i,\M(t_i, \bi)) \geq u_i(t_i,\M(b_i, \bi))$
for every $i$, every $\bi=(b_1, \ldots, b_{i-1}, b_{i+1}, \ldots, b_n)$
and every $b_i \in D_i$, with $t_i$ being the true type of $i$.
That is, in a strategy-proof mechanism the actions taken according to the true type are dominant for each agent.

For our applications, we will be focusing on \emph{single-parameter} settings, that is, the case in which
the private information of each bidder $i$ is a single real number $t_i$ and $t_i(X)$ can be expressed as $t_i w_i(X)$ for some publicly known function $w_i$.
To simplify the notation, we will write $t_i f_i(\b)$ when we want to express the cost of a single-parameter agent $i$ of type $t_i$
for the output of social choice function $f$ on input the actions corresponding to a bid vector $\b$.

\smallskip \noindent {\bf Obvious Strategyproofness.}
We now formally define the concept of obviously strategy-proof mechanism.
This concept has been introduced in~\citep{li2015obviously}.
However, our definition is built on the more accessible ones given by \citet{ashlagi2015no} and \citet{FerraioliV17}. As shown in \citep{BadeG17,mackenzie2017revelation}, our definition is equivalent to Li's.\footnote{In particular, we highlight that the absence of information sets and chance nodes -- i.e., concurrent moves of the agents and randomized choices of the mechanism -- is without loss of generality \citep{BadeG17}. As for the latter, it is worth noting that this is because in Li's definition, a randomized OSP mechanism turns out to be a probability distribution over deterministic OSP mechanisms; we refer the interested reader to \citep{OSP2} for a notion of OSP in expectation.}


Let us first formally model how a mechanism works.
An \emph{extensive-form mechanism} $\M$ is defined by a directed tree $\T=(V,E)$, called the \emph{implementation tree}, such that:
\begin{itemize}[noitemsep,leftmargin=10pt]
	\item every leaf $\ell$ of the tree is labeled with a possible outcome $X(\ell) \in \out$ of the mechanism;
	\item every internal vertex $u \in V$ is labeled by an agent $S(u) \in [n]$;
	\item every edge $e=(u,v) \in E$ is labeled by a subset $T(e) \subseteq D = \times_i D_i$ of type profiles such that:
	\begin{itemize}[noitemsep,leftmargin=10pt]
		\item the subsets of profiles that label the edges outgoing from the same vertex $u$ are disjoint,
		i.e., for every triple of vertices $u, v, v'$ such that $(u,v) \in E$ and $(u,v') \in E$, we have that $T(u,v) \cap T(u,v') = \emptyset$;
		\item the union of the subsets of profiles that label the edges outgoing from a non-root vertex $u$
		is equal to the subset of profiles that label the edge going in $u$,
		i.e., $\bigcup_{v \colon (u,v) \in E} T(u,v) = T(\phi(u),u)$, where $\phi(u)$ is the parent of $u$ in $\T$; 
		\item the union of the subsets of profiles that label the edges outgoing from the root vertex $r$
		is equal to the set of all profiles, i.e., $\bigcup_{v \colon (r,v) \in E} T(r,v) = D$;
		\item for every $u, v$ such that $(u,v) \in E$ and for every two profiles $\b, \b' \in T(\phi(u),u)$ such that $b_i = b'_i$, $i = S(u)$,
		if $\b$ belongs to $T(u,v)$, then $\b'$ must belong to $T(u,v)$ also.
	\end{itemize}
\end{itemize}

Roughly speaking, the tree represents the steps of the execution of the mechanism.
As long as the current visited vertex $u$ is not a leaf, the mechanism interacts with the agent $S(u)$.
Different edges outgoing from vertex $u$ are used for modeling the different actions that agents can take during this interaction with the mechanism.
In particular, each possible action is assigned to an edge outgoing from $u$.
As suggested above, the action that agent $i$ takes may depend on her presumed type $b_i \in D_i$.
That is, different presumed types may correspond to taking different actions, and thus to different edges.
The label $T(e)$ on edge $e=(u,v)$ then lists the type profiles that enable the agent $S(u)$ to take those actions that have been assigned to $e$.
In other words, when the agent takes the actions assigned to edge $e$, 
then the mechanism (and the other agents) can infer that the type profile must be contained in $T(e)$.
The constraints on the edges' label can be then explained as follows:
first we can safely assume that different actions must correspond to different type profiles
(indeed, if two different actions are enabled by the same profiles we can consider them as a single action);
second, we can safely assume that each action must correspond to at least one type profile
that has not been excluded yet by actions taken before node $u$ was visited
(otherwise, we could have excluded this type profile earlier);
third, we have that the action taken by agent $S(u)$ can only inform about her types and not about the type of the remaining agents.
The execution ends when we reach a leaf $\ell$ of the tree. In this case, the mechanism returns the outcome that labels $\ell$.

Observe that, according to the definition above, for every profile $\b$ there is only one leaf $\ell = \ell(\b)$
such that $\b$ belongs to $T(\phi(\ell),\ell)$. For this reason we say that $\M(\b) = X(\ell)$.
Moreover, for every type profile $\b$ and every node $u \in V$, we say that $\b$ is \emph{compatible} with $u$ if $\b \in T(\phi(u),u)$.
Finally, two profiles $\b$, $\b'$ are said to \emph{diverge} at vertex $u$ if there are two vertices $v, v'$
such that $(u,v) \in E$, $(u,v') \in E$ and $\b \in T(u,v)$, whereas $\b' \in T(u,v')$.
For every node $u$ in a mechanism $\M$ such that there are two profiles $\b, \b'$ that diverge at $u$, we say that $u$ is a \emph{divergent node},
and $i = S(u)$ the corresponding \emph{divergent agent}.
For each agent $i$, we define the \emph{current domain} at node $u$, denoted $D_i(u)$,
such that $D_i(r) = D_i$ for the root $r$ and $D_i(u) = \cup_{\b \in T(\phi(u),u)} b_i$.
In words, this is the set of types of $i$ that are compatible with the actions that $i$ took during the execution of the mechanism until node $u$ is reached.
Indeed, according to the definition above, at each node $u$ in which $i$ diverges,
$\M$ partitions $D_i(u)$ in $k$ subsets, where $k$ is the number of children of $u$,
and where for every child $v$ of $u$, $D_i(v) \subset D_i(u)$ contains the types of bidder $i$ compatible with the action
that she takes when interacting with the mechanism at node $u$.

We are now ready to define obvious strategyproofness.
An extensive-form mechanism $\M$ is \emph{obviously strategy-proof (OSP)} if for every agent $i$ with real type $t_i$,
for every vertex $u$ such that $i = S(u)$, for every $\bi, \bi'$ (with $\bi'$ not necessarily different from $\bi$),
and for every $b_i \in D_i$, with $b_i \neq t_i$,
such that $(t_i, \bi)$ and $(b_i, \bi')$ are compatible with $u$, but diverge at $u$,
it holds that $u_i(t_i, \M(t_i, \bi)) \geq u_i(t_i,\M(b_i, \bi'))$. 
Roughly speaking, an obvious strategy-proof mechanism requires that, at each time step
agent $i$ is asked to take a decision that depends on her type, the worst utility that
she can get if at this time step she behaves according to her true type
is at least the best utility achievable by behaving as she had a different type. 
Hence, if a mechanism is obviously strategy-proof, then it is also strategy-proof.
Indeed, the latter requires that truthful behavior is a dominant strategy when agents know the entire type profile,
whereas the former requires that it continues to be a dominant strategy
even if agents have only a partial knowledge of profiles
limited to what they observed in the mechanism up to the time they are called to take their choices.

We say that an extensive-form mechanism is \emph{trivial} if for every vertex $u \in V$ and for every two type profiles $\b,\b'$,
it holds that $\b$ and $\b'$ do \emph{not} diverge at $u$.
That is, a mechanism is trivial if it never requires that agents take actions that depend on their type.
If a mechanism is not trivial, then there is at least one divergent node.
On the other hand, every execution of a mechanism (i.e., every path from the root to a leaf in the mechanism implementation tree)
may go through at most $\sum_i (|D_i|-1)$ divergent nodes,
the upper bound being the case in which at each divergent node $u$, the agent $i = S(u)$ separates $D_i(u)$,
in $D_i(u) \setminus \{b\}$ and $\{b\}$ for some $b \in D_i(u)$.


\smallskip \noindent {\bf Machine scheduling.} Here, we are given a set of $m$ identical jobs to execute and the $n$ agents control related machines.
That is, agent $i$ has a job-independent processing time $t_i$ per unit of job
(equivalently, an execution speed $1/t_i$ that is independent from the actual jobs).
The social choice function $f$ must choose a possible schedule $f(\b) = (f_1(\b), \ldots, f_n(\b))$ of jobs to the machines,
where $f_i(\b)$ denotes the job load assigned to machine $i$ when agents take actions according to $\b$.
The cost that agent $i$ faces for the schedule $f(\b)$ is $t_i(f(\b))=t_i \cdot f_i(\b)$. We focus on social choice functions $f^*$ minimizing the \emph{makespan}, i.e.,
$
f^*(\b) \in \arg\min_{\x} \max_{i=1}^n b_i(\x).
$
We say that $f$ is $\alpha$-approximate if it returns a solution whose cost is a factor $\alpha$ away from the optimum.

\smallskip \noindent {\bf Set systems.}
In a \emph{set system} $(E, {\mathcal{F}})$ we are given a set $E$ of elements and a family $\mathcal{F} \subseteq 2^E$ of feasible subsets of $E$.
Each element $i \in E$ is controlled by a selfish agent, that is, the cost for using $i$ is known only to agent $i$ and is equal to some non-negative value $t_i$.
The social choice function $f$ must choose a feasible subset in $\mathcal{F}$; we can use the same notation used above for machine scheduling with the restriction that $f_i(\b) \in \{0,1\}$ to mean that the element controlled by agent $i$ is either chosen by $f$, with $f_i(\b)=1$, or not, with $f_i(\b)=0$.
Our objective is here social cost minimization, that is, 
$
f^*(\b) \in \arg\min_{\x} \sum_{i=1}^n b_i(\x).
$
Several problems on graphs, such as the path auction discussed above, can be cast in this framework.

\section{Cycle-monotonicity for OSP mechanisms}\label{sec:cyclemon}
We  now show how to generalize the cycle-monotonicity technique to design OSP mechanisms. 

Let us fix an extensive-form mechanism $\M$ and the corresponding implementation tree $\T$.
For every $i$ and $a,b \in D_i$, we let $u_{a,b}^i$ 
be a vertex $u$ in $\T$ such that $i = S(u)$ and $(a, \bi)$ and $(b, \bi')$ are compatible with $u$, but diverge at $u$, for some $\bi, \bi' \in D_{-i}(u) = \times_{j \neq i} D_j(u)$. Note that such a vertex might not be unique as player $i$ will be asked to separate $a$ from $b$ in different paths from the root to a leaf (but only once for every such path).
We call these vertices of $\T$ $ab$-separating for player $i$.

The algorithmic characterization of OSP we provide herein is based on the following 
observation.
\begin{obs} \label{obs:def}
    A mechanism that uses a payment function $\p$ and an implementation tree $\T$ is an OSP mechanism for the social choice function $f$
    if and only if for all $i$, for all $a, b \in D_i$, $a \neq b$, for all vertices $u_{a,b}^i$ that are $ab$-separating for $i$, and $\ai, \bi \in D_{-i}(u_{a,b}^i)$, 
    \begin{equation}
     \label{eq:osp_constraint}
     p_i(\b) - p_i(\a) \leq a(f(b, \bi))-a(f(a, \ai)).
    \end{equation}
\end{obs}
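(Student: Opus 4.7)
The plan is to prove Observation~\ref{obs:def} by showing that Equation~(\ref{eq:osp_constraint}) is a direct algebraic restatement of the OSP definition, obtained by unpacking the quasi-linear utility and carefully aligning the quantifiers; both directions of the ``if and only if'' then follow from the same chain of equivalences.

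First, I would align the quantifications. In the OSP definition, the universal quantification ranges over an agent $i$, a real type $t_i$, a lie $b_i \neq t_i$, a vertex $u$ with $S(u) = i$, and pairs $\bi, \bi'$ such that $(t_i, \bi)$ and $(b_i, \bi')$ are compatible with $u$ and diverge at $u$. Setting $a = t_i$ and $b = b_i$, the existence of at least one such pair is precisely the definition of $u$ being $ab$-separating for $i$. Moreover, by the fourth property in the definition of an implementation tree -- the one that forces the outgoing edge from $u$ to depend only on the $i$-th coordinate when $S(u)=i$ -- the set $T(\phi(u),u)$ of compatible profiles has a product structure of the form $D_i(u) \times D_{-i}(u)$ along the coordinates. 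Hence $(a, \ai)$ is compatible with $u$ iff $a \in D_i(u)$ and $\ai \in D_{-i}(u)$, and once $u$ is $ab$-separating any two compatible profiles with distinct $i$-coordinates $a, b$ automatically diverge at $u$. Consequently, the OSP quantification over $(\bi,\bi')$ coincides with the quantification over $(\ai,\bi)\in D_{-i}(u_{a,b}^i)\times D_{-i}(u_{a,b}^i)$ appearing in the observation.

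Next, I would substitute the quasi-linear utility $u_i(t_i,\M(\b))=p_i(\b)-t_i(f(\b))$ into the OSP inequality. With $a = t_i$, the condition $u_i(t_i,\M(t_i,\bi)) \geq u_i(t_i,\M(b_i,\bi'))$ becomes
\[
p_i(a,\ai) - a(f(a,\ai)) \;\geq\; p_i(b,\bi) - a(f(b,\bi)),
\]
which rearranges, writing $\a = (a,\ai)$ and $\b = (b,\bi)$, to $p_i(\b) - p_i(\a) \leq a(f(\b)) - a(f(\a))$, i.e.\ exactly Equation~(\ref{eq:osp_constraint}). The ``only if'' direction is obtained by reading the chain of equivalences top-to-bottom, and the ``if'' direction by reading it bottom-to-top; no additional argument is needed.

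The only genuine subtlety is the product-structure observation underpinning the alignment of quantifiers, which I would want to state carefully since the set $D_{-i}(u)$ is a priori just the product of coordinate-wise projections of $T(\phi(u),u)$. Once one invokes the fourth tree property iteratively along the root-to-$u$ path, the product structure is immediate, and the remainder of the proof is purely notational. No fresh idea is required; the content of the observation is simply that, for quasi-linear agents, the OSP definition can be absorbed into a single inequality on payment differences, exactly in the form needed to drive the cycle-monotonicity machinery developed in the rest of Section~\ref{sec:cyclemon}.
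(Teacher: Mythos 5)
Your proposal is correct and follows essentially the same route as the paper: unpack the quasi-linear utility in the OSP inequality, rearrange, and note that since $t_i$ ranges over all of $D_i$ the condition must hold for every ordered pair $(a,b)$. The one place you are slightly more careful than the paper's own proof is the explicit justification, via the product structure of $T(\phi(u),u)$ induced by iterating the fourth implementation-tree property along the root-to-$u$ path, that quantifying over $\ai,\bi \in D_{-i}(u_{a,b}^i)$ is equivalent to quantifying over the compatible diverging pairs in the OSP definition; the paper's proof treats this alignment as self-evident, but the point is genuine and your argument for it is right.
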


\begin{definition}[\vgraph]\label{def:graph:verification:all}
	Let $f$ be a social choice function and $\T$ be an implementation tree.
	For every $i$, the {\em \vgraph} $\ver$ has a node for each type profile in $D$,
	and an edge $((a,\bi), (b, \bi'))$ for every $a, b \in D_i$, $a \neq b$, and $\bi,\bi' \in D_{-i}(u)$, where $u$ is an $ab$-separating vertex of $\mathcal{T}$.
	The weight of the edge is $a(f(b, \bi))-a(f(a, \bi'))$.
\end{definition}

\begin{theorem}\label{thm:cmon}
    A mechanism with implementation tree $\T$ is an OSP mechanism for a social function $f$ on finite domains
	if and only if, for all $i$, the graph $\ver$ does not have negative weight cycles.
\end{theorem}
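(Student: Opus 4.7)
My plan is to prove this as the natural OSP-specific instance of the classical cycle-monotonicity theorem. The key conceptual move has already been made in Observation~\ref{obs:def}, which converts the OSP condition for the mechanism $(f,\p,\T)$ into a finite linear system in the payment variables $p_i(\cdot)$, with one inequality per edge of $\ver$ of the generic form $p_i(\y)-p_i(\x) \leq w(\x,\y)$, where $w$ is the edge weight given in Definition~\ref{def:graph:verification:all}. So the theorem reduces to the statement that this linear system is feasible if and only if $\ver$ has no negative cycle — which is the textbook shortest-path / potentials duality.

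For the only-if direction, I would assume the mechanism is OSP and apply Observation~\ref{obs:def} to each edge of $\ver$ for an arbitrary fixed agent $i$. Summing the resulting inequalities along any directed cycle $\x_1 \to \x_2 \to \cdots \to \x_k \to \x_1$, the payment differences telescope to zero and the surviving right-hand side equals the total cycle weight, which must therefore be nonnegative.

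For the if direction, I would construct payments from shortest-path distances, agent by agent. For each $i$, augment $\ver$ with an auxiliary super-source $s$ connected to every profile $\b \in D$ by a zero-weight arc. Since by assumption $\ver$ has no negative cycle and the new arcs all originate at $s$, no negative cycle can appear in the augmented graph, and the single-source shortest-path distance $d_i(s,\b)$ is finite for every $\b$ in the finite domain $D$. Setting $p_i(\b) := -d_i(s,\b)$, the shortest-path optimality $d_i(s,\y) \leq d_i(s,\x) + w(\x,\y)$ along each edge of $\ver$ rearranges precisely to $p_i(\y)-p_i(\x) \leq w(\x,\y)$. Invoking Observation~\ref{obs:def} in the converse direction then certifies that $(f,\p,\T)$ is OSP.

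I expect the only real obstacle to be the bookkeeping behind Observation~\ref{obs:def} itself: one must check that the edges of $\ver$ — restricted to $ab$-separating vertices of $\T$ and to others' profiles lying in the current domain $D_{-i}(u)$ — encode exactly the constraints coming from the OSP definition and no extraneous ones. Once that identification is in place, the argument above is purely mechanical, the finite-domain hypothesis removes any existence issue for shortest paths, and the result follows verbatim from the classical cycle-monotonicity paradigm \citep{Roc85,GuiMulVoh04}.
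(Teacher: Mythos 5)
Your proposal takes essentially the same route as the paper: necessity by telescoping the constraints of Observation~\ref{obs:def} around any directed cycle of $\ver$, and sufficiency by adjoining a zero-weight super-source and paying agent $i$ according to shortest-path potentials (the paper calls the source $\omega$ and sets $p_i(\b)=\shp(\omega,\b)$), with finiteness of $D$ and the no-negative-cycle hypothesis guaranteeing the distances are well defined.

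There is, however, a sign slip in your payment definition that, as written, breaks the sufficiency direction. With $p_i(\b):=-d_i(s,\b)$, the triangle inequality $d_i(s,\y)\leq d_i(s,\x)+w(\x,\y)$ for an edge $(\x,\y)$ of $\ver$ rearranges to $p_i(\x)-p_i(\y)\leq w(\x,\y)$, which is the \emph{reverse} of the constraint $p_i(\y)-p_i(\x)\leq w(\x,\y)$ required by Observation~\ref{obs:def}; so the payments you construct satisfy the reversed system, not the OSP constraints. Since the constraints ask that the potential difference along each edge be bounded above by that edge's weight, the correct potential is the distance itself: set $p_i(\b):=d_i(s,\b)$, exactly as the paper does, and the same triangle inequality yields $p_i(\y)-p_i(\x)=d_i(s,\y)-d_i(s,\x)\leq w(\x,\y)$ directly. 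With that one-character correction (and the routine check of Observation~\ref{obs:def} that you already flag), your argument coincides with the paper's proof.
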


The proof of the theorem follow standard arguments used 
for the classical definition of strategyproofness
(a full proof is given in appendix for 
completeness).
We call the cycle monotonicity property of Theorem \ref{thm:cmon} OSP CMON; when this holds true for cycles of length 2 only then we talk about OSP two-cycle monotonicity. It is useful, for our applications, to recast the OSP two-cycle monotonicity between profiles $\a$ and $\b$ for a single-parameter agent $i$ as follows:
\[
-a_i f_i(\a) + a_i f_i(\b) - b_i f_i(\b) + b_i f_i(\a) \geq 0 \Longleftrightarrow (a_i-b_i) (f_i(\b)-f_i(\a)) \geq 0.
\]

\subsection{Warm-up I: using OSP CMON to bound approximation guarantee}
In this section and in the next one, we show some preliminary results about machine scheduling and set systems. These results, despite (or, rather, because of) their simplicity, show how  powerful a tool OSP CMON can be, when one wants to answer algorithmic questions,
such as approximability, about OSP.

Indeed, we give here a simple lower bound 
for the machine scheduling problem.
\begin{proposition}\label{prop:scheduling:three-values:lb}
    For the machine scheduling problem, no OSP mechanism can be better than $2$-approximate, even for two jobs and two agents with three-value domains $D_i=\{L,M,H\}$, where $L<M<H$, with $M>2L$ and $H > 2M$.
\end{proposition}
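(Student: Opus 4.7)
The plan is to pick a specific instance in the stated family, pin down the algorithm's allocations via a forced-allocation argument, and then derive a contradiction by a single application of OSP two-cycle monotonicity (Theorem~\ref{thm:cmon}). First I would fix $L, M, H$ satisfying, on top of $M > 2L$ and $H > 2M$, the stronger gaps $M \geq 4L$ and $H \geq 4M$ (for instance $L=1$, $M=4$, $H=16$); such values are still in the stated family. Under these values any algorithm $f$ achieving approximation strictly less than $2$ is completely pinned down: on each profile with two distinct types, the schedule that dumps both jobs on the faster machine attains the optimum, whereas the ``one job per machine'' schedule has makespan equal to the larger type and hence at least $2 \cdot \mathrm{OPT}$, and putting everything on the slower machine is worse still. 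Thus $f_1(L, L) = f_1(M, M) = f_1(H, H) = 1$, while $f_1(L, M) = f_1(L, H) = f_1(M, H) = 2$ and symmetrically $f_1(M, L) = f_1(H, L) = f_1(H, M) = 0$.

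Next I would take any OSP mechanism $\M$ that implements such an $f$, with implementation tree $\T$, and let $u$ be the first divergent node of $\T$. Since non-divergent ancestors do not shrink any agent's current domain, at $u$ one has $D_1(u) = D_2(u) = \{L, M, H\}$. By the symmetry of the instance in the two agents it is enough to handle $S(u) = 1$ (the case $S(u) = 2$ is the same under relabelling). The non-trivial partitions of $D_1(u)$ split into exactly two cases according to whether $L$ and $M$ land in the same part.
\begin{itemize}[leftmargin=15pt]
    \item If $L$ and $M$ are placed in different parts (i.e.\ one of $\{L\}|\{M\}|\{H\}$, $\{L\}|\{M, H\}$, or $\{L, H\}|\{M\}$), then $u$ is $LM$-separating for agent~$1$; since $L, H \in D_2(u)$, the profiles $(L, L)$ and $(M, H)$ are both compatible with $u$ and diverge there. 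The two-cycle condition $(a_1 - b_1)(f_1(\b) - f_1(\a)) \geq 0$ with $\a = (L, L)$ and $\b = (M, H)$ demands $(L - M)(2 - 1) \geq 0$, which is false.
    \item Otherwise $L$ and $M$ sit in the same part and the partition is $\{L, M\}|\{H\}$; now $u$ is $MH$-separating for agent~$1$ and the profiles $(M, L), (H, H)$ are compatible with and diverge at $u$. The two-cycle condition with $\a = (M, L)$ and $\b = (H, H)$ demands $(M - H)(1 - 0) \geq 0$, again false.
\end{itemize}

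Both cases therefore violate OSP CMON, contradicting Theorem~\ref{thm:cmon} and completing the proof. The main obstacle I anticipate is the first step, namely the forced-allocation bookkeeping together with the structural observation that, no matter how agent~$1$ partitions her domain at the first divergent node, the pair $\{L, H\}$ in the \emph{other} agent's (still full) current domain is what triggers a violating two-cycle; once this is in place, the final contradiction is a one-line computation per case.
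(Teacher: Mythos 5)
Your proof follows the paper's route exactly: look at the first divergent node, note that the partition there must separate either $L$ from $M$ or $M$ from $H$, and exhibit a violating two-cycle using the very same profile pairs, $(L,L)$ vs.\ $(M,H)$ in the first case and $(M,L)$ vs.\ $(H,H)$ in the second. Your explicit restriction to $M \geq 4L$ and $H \geq 4M$ is a welcome tightening rather than a deviation: the paper silently assumes that a better-than-$2$ mechanism is forced to set $f_i(M,L)=0$ and $f_i(M,H)=2$, which in fact only follows once $M \geq 4L$ and $H \geq 4M$ (for $2L < M < 4L$ the one-job-each schedule on $(M,L)$ has ratio $M/(2L)<2$ and the forcing fails); since the statement is a lower bound it suffices to exhibit one hard triple, so both proofs are sound, but your version makes the required instance concrete.
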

\begin{proof}
    Assume by contradiction that there is an OSP mechanism $\M$
    that is better than $2$-approximate,
    and let $\T$ be its implementation tree.
    Since $M>2L$ and $H > 2M$, every trivial OSP mechanism
    must have approximation guarantee at least $2$.
    Hence $\M$ must be non trivial.
    Let $i$ be the first divergent agent of $\M$ implemented with $\mathcal T$, and let $u$ be the node where this agent diverges (such an agent exists because the mechanism is not trivial).
           
    If $i$ diverges at $u$ on $M$ and $H$, then
    consider $\x=(x_i,x_{-i})=(H,H)$ and $\y=(y_i,y_{-i})=(M,L)$.
    Since the mechanism is better than $2$-approximate,
    it must satisfy $f_i(\x)=1$ and $f_i(\y)=0$.
    Then, the cycle $(\y, \x, \y)$ in $\ver$
    has cost $M(1 - 0) + H(0 - 1) = M - H < 0$, thus a contradiction.
           
    If $i$ diverges at $u$ on $L$ and $M$, then
    consider $\x=(x_i,x_{-i})=(L,L)$ and $\y=(y_i,y_{-i})=(M,H)$.
    Since the mechanism is better than $2$-approximate,
    it must satisfy $f_i(\x)=1$ and $f_i(\y)=2$.
    Then, the cycle $(\x, \y, \x)$ in $\ver$
    has cost $L(2 - 1) + M(1 - 2) = L - M < 0$, thus a contradiction.
\end{proof}

In Appendix~\ref{apx:scheduling} we give a complete characterization of the approximability for two jobs and two agents in the three-value domains, showing that in this specific case the above bound is tight.
However, note that for this bound we require the domain to have at least three different values;
we will in fact prove in Section~\ref{sec:scheduling}
that we can design an optimal OSP mechanism for scheduling related machines
when $D_i = \{L_i, H_i\}$ for every $i$.
We will also show how to use a more involved argument to prove a substantially higher (and tight) bound of $\sqrt{n}$.

\subsection{Warm-up II: using OSP CMON to characterize OSP mechanisms} \label{sec:warmup2}

While the above example shows how OSP CMON can be used for giving approximation bounds,
we now give an 
example that shows how OSP CMON helps to characterize
under which conditions a given mechanism is OSP. 

Consider the path auction problem discussed in the introduction;
this is a special case of a set system problem
where the set of feasible solutions is the set of all the paths
between the source node $s$ and the destination node $t$ in a given graph $G$. Consider the case in which $G$ has two parallel paths from the source to the destination; the first is comprised of a set $T$ of $t$ edges, that we will sometimes call top edges, whilst the second is comprised of a set $B$ of $b$ edges, that we will call bottom edges. W.l.o.g., we assume that $t \geq b$. 

\newcommand{\be}{\vett{b}_{-e}}

\begin{proposition}
	There is an optimal OSP mechanism for the shortest path problem on parallel paths and two-value domains $D=\{L,H\}^n$ if and only if either (1)~$b=1$ or (2)~$t > b > 1$ and $\frac{H}{L} \leq \frac{t-1}{b-1}$.
\end{proposition}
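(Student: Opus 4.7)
The plan is to invoke the OSP two-cycle monotonicity of Theorem~\ref{thm:cmon}, which on two-value domains reduces to $(a_i-b_i)(f_i(\b)-f_i(\a)) \geq 0$ at every node where $i$ is the divergent agent. Since the shortest path varies across profiles, an optimal mechanism must be non-trivial; and because a non-divergent node has a single child, the path from the root to the first divergent node $u$ is unique, so at $u$ no agent has had her domain restricted and every profile in $\{L,H\}^n$ is compatible with $u$.

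For necessity, suppose we are in the ``bad'' regime, i.e., $b>1$ and either (a) $t=b$, or (b) $t>b$ and $H/L>(t-1)/(b-1)$. I split on whether the first divergent agent $i=S(u)$ is a top edge $e_T$ or a bottom edge $e_B$. In regime (b) with $i=e_T$, take $\a=(L,\dots,L)$ so that the bottom cost $bL$ is strictly less than the top cost $tL$ and $f_i(\a)=0$; take $\b$ with $b_{e_T}=H$, all other top edges $L$, and all bottom edges $H$, so that the top cost $H+(t-1)L$ is strictly less than the bottom cost $bH$ exactly because $H/L>(t-1)/(b-1)$, giving $f_i(\b)=1$. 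A symmetric profile pair works for $i=e_B$. In regime (a) the all-$L$ profile ties, so I perturb $\a$ by setting one other top edge to $H$, which strictly tips the balance to bottom ($f_{e_T}(\a)=0$), while the same $\b$ still satisfies $H+(t-1)L<tH$ (using $t>1$ and $L<H$), giving $f_{e_T}(\b)=1$; symmetrically for $i=e_B$. In every case $\a,\b$ are compatible with $u$ and diverge there, and $(L-H)(f_i(\b)-f_i(\a))=L-H<0$, violating two-cycle monotonicity.

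For sufficiency I exhibit explicit mechanisms. If $b=1$, run an ascending auction on $e_B$ first: a report of $L$ returns the bottom path (optimal since the bottom cost $L$ is at most the top cost $tL$); a report of $H$ triggers an ascending auction on the top edges, returning bottom as soon as any top edge reports $H$ (because then the top cost is at least $H+(t-1)L\geq H$, equaling the bottom cost $H$) and, if all top edges report $L$, returning the cheaper of the top cost $tL$ and the bottom cost $H$. If $t>b>1$ and $H/L\leq(t-1)/(b-1)$, first run an ascending auction on the top edges: a report of $H$ from some $e_{T_k}$ returns bottom because the top cost is at least $H+(t-1)L\geq bH\geq$ bottom cost (using the hypothesis); once all top edges are $L$, run a descending auction on the bottom edges, where a report of $L$ from some $e_{B_k}$ returns bottom because the bottom cost is at most $L+(b-1)H\leq tL=$ top cost (again by the hypothesis), and if all bottom edges report $H$ return the cheaper of $tL$ and $bH$. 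At every divergent node the reporting value of $i$ unambiguously decides whether $i$ is in the returned path; a short case check shows $(L-H)(f_i(\b)-f_i(\a))\geq 0$ for every compatible pair.

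The main obstacle I anticipate is the necessity direction, where the profiles $\a,\b$ must be tuned so that the strict inequality $H+(t-1)L<bH$---which requires precisely $b>1$ and $H/L>(t-1)/(b-1)$---or the alternative $tL<bL$---which requires $t>b$---flips the shortest path against monotonicity; the regime split mirrors exactly which of these inequalities is available. The sufficiency side, by contrast, is essentially a mechanical inspection once one recognizes that the hypothesis is what justifies early termination at each auction with a guaranteed-optimal allocation.
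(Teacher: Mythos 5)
Your proof is correct, and the necessity half is essentially the paper's argument: exploit that the first divergent node $u$ has every profile compatible with it, and exhibit a negative $2$-cycle for the divergent agent by choosing profiles with \emph{strict} cost comparisons (your all-$L$ profile in regime (b) and the perturbed profile in regime (a) play the same role as the paper's constructions, and the bottom-edge ``symmetric'' case does check out, with the hypothesis inequality entering through the other profile). The sufficiency half takes a genuinely different route. The paper fixes the optimal algorithm with ties broken in favor of the bottom path and shows it is OSP \emph{for every implementation tree} (including direct revelation), by arguing that a negative $2$-cycle would force $H+(t-1)L < bH$ (resp.\ the analogous inequality for bottom edges), contradicting $\frac{H}{L}\le\frac{t-1}{b-1}$; your proof instead builds concrete extensive-form mechanisms (ascending over the top edges, then a phase on the bottom edges) whose early-termination rules are justified by exactly those inequalities. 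Both are fine for the proposition as stated; the paper's version buys the extra information that in this regime OSP-ness is implementation-tree-independent, while yours is more constructive and makes the auction format explicit. Two small points of hygiene: the step ``$2$-cycle monotonicity $\Rightarrow$ OSP'' used in your sufficiency is Theorem~\ref{thm:2cycle3} (two-cycles suffice for domains of size at most three), not Theorem~\ref{thm:cmon}, which you only need in the necessity direction (a negative $2$-cycle is in particular a negative cycle); and the phrase ``the reporting value of $i$ unambiguously decides whether $i$ is in the returned path'' is stronger than what holds (a top edge reporting $L$ is not yet decided) -- what your check actually uses, and what is true, is the one-sided determinism that a top edge reporting $H$ is never selected and a bottom edge reporting $L$ always is.
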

\begin{proof}
	Let us start by proving the sufficient condition; the mechanism shall return the bottom path in case of ties. The sufficiency is even stronger in that we will prove it no matter the implementation tree (thus including direct-revelation mechanisms). (Note that it will be enough to prove OSP $2$-cycle monotonicity since, as prove below in Theorem \ref{thm:2cycle3}, this is sufficient for two-value domains in this setting.)

	Consider first the case $t \geq b=1$. For a top edge $e$, we have $f_e(H, \be') =0$ no matter $\be'$, thus implying that OSP $2$-cycle monotonicity is satisfied since
	$
	(H-L)(f_e(L, \be)-f_e(H, \be')) \geq 0.
	$ 
	For a bottom edge $e$, instead, we can observe that $f_e(L, \be)=1$ for all $\be$ and therefore the OSP $2$-cycle monotonicity is satisfied again. 
	
	Let us now consider  the case $t > b > 1$ and $\frac{H}{L} \leq \frac{t-1}{b-1}$. 
	By contradiction, assume that the mechanism is not OSP for some agent $e$. Consider  the case that $e$ is a top edge (the proof is very similar for the case in which $e$ is a bottom edge). 
	Since the mechanism is not OSP for $e$, and we have two-value domains, Theorem~\ref{thm:cmon}  says that the OSP-graph for $e$ has a negative $2$-cycle. 
	Let $\tau$ and $\beta$ denote the subset of top and bottom agents that have been queried before $e$ in the implementation tree. Since the OSP-graph for $e$ has a negative $2$-cycle, then for some $\b_{\tau}$ and $\b_{\beta}$ there exist $\b_{T'}$ and $\b_{B'}$, $T'=T\setminus (\tau \cup \{e\})$ and $B'=B \setminus \beta$ such that $f_e(H, (\b_{\tau},\b_{T'}, \b_{\beta}, \b_{B'}))=1$. Since ties are broken in favor of the bottom path, this means that
	$
	H+(t-1)L \leq H + \b_{\tau}+\b_{T'} < \b_{\beta} + \b_{B'} \leq bH,
	$ 
	thus implying  $\frac{H}{L} > \frac{t-1}{b-1}$, a contradiction.
	
	
	We next prove the necessity and show that when either (1) $t=b>1$ or (2) $t > b > 1$ and $\frac{H}{L} > \frac{t-1}{b-1}$,  no optimal mechanism $\M$ can be OSP. Since $\M$ is optimal, it is not trivial and at some point it must separate $L$ from $H$ for at least one agent. We consider the first divergent agent $e$, and show a negative $2$-cycle in the OSP-graph for this agent $e$, thus implying that mechanism $\M$  is \emph{not} OSP (Theorem~\ref{thm:cmon}). Note that, since $e$ is the first divergent agent, in the implementation tree $\T$, there is a unique $LH$-separating node $u$ for agent $e$; moreover, $\be, \be'\in D_{-e}(u)$ for every $\b, \b' \in \{L,H\}^n$. Assume that $e$ is a top edge (the proof is exactly the same for the case in which $e$ is a bottom edge). For the case in which $t=b>1$, it is enough to consider $\be$ comprised of at least one $H$ for the top edges and all $L$ for all the bottom edges, so that $f_e(L, \be)=0$; $\be'$ is, instead, a bid vector where all the top edges have value $L$ and at least two of the bottom edges have bid $H$ so that $f_e(H, \be')=1$. Therefore, the $2$-cycle between $(L, \be)$ and $(H, \be')$ is negative. When $t > b > 1$ and $\frac{H}{L} > \frac{t-1}{b-1}$, consider instead the vectors $\be, \be'$ where $b_{e'}=H$,  $e' \in T \setminus \{e\}$, and $b_f=L$, $f \in B$, and $b'_{e'}=L$,  $e' \in T \setminus \{e\}$, and $b'_f=H$, $f \in B$. Since $L+(t-1)H > bL$ then $f_{e}(L,\be)=0$, while as $H+(t-1)L< bH$ then $f_{e}(H,\be')=1$. Therefore, the $2$-cycle between $(L, \be)$ and $(H, \be')$ is negative also in this case. 
\end{proof}

We shall extend this analysis to general set system problems,
and we shall characterize exactly  when optimal algorithms can be implemented in an OSP way.
 
\subsection{Two-cycles are sufficient for single-parameter domains of size at most three}

Two-cycle monotonicity is a property easier to work with than CMON.
We will now show that, for single parameter settings, these properties turns out to be equivalent if and only if 
$D_i = \{L_i, M_i, H_i\}$ for each $i$, with $L_i \leq M_i \leq H_i$. (For the full proofs see Appendix~\ref{app:CMON-characterizes-OSP}.)
\begin{theorem}
	\label{thm:2cycle3}
	Consider a single-parameter setting where $|D_i| \leq 3$ for each agent $i$.
	A mechanism with implementation tree $\T$ and social choice function $f$ is OSP 
	iff OSP two-cycle monotonicity holds.
\end{theorem}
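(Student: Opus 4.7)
The plan is to invoke Theorem~\ref{thm:cmon}, which reduces OSP to the absence of negative-weight cycles of any length in the OSP-graph $\ver$. The forward direction is immediate because every 2-cycle is a cycle. For the converse I need to show that ruling out negative 2-cycles is enough to rule out negative cycles of arbitrary length once $|D_i|\leq 3$.

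Fix agent $i$ with $D_i\subseteq\{L,M,H\}$, $L\leq M\leq H$ (if $|D_i|<3$ I allow any of the equalities), and pick an arbitrary cycle $C:\a^1\to\cdots\to\a^k\to\a^1$ in $\ver$; write $v_j:=a^j_i$ and $\phi_j:=f_i(\a^j)$, so that in the single-parameter setting
\[
W(C)=\sum_{j=1}^{k} v_j\bigl(\phi_{j+1}-\phi_j\bigr)\qquad(\text{indices mod }k).
\]
The key algebraic step, and the only place where $|D_i|\leq 3$ is used essentially, is the level-set decomposition
\[
v_j \;=\; L \;+\; (M-L)\,\mathbf{1}[v_j\geq M] \;+\; (H-M)\,\mathbf{1}[v_j=H].
\]
Substituting this into $W(C)$ and using the telescoping identity $\sum_j(\phi_{j+1}-\phi_j)=0$ to kill the constant $L$-part, I get
\[
W(C) \;=\; (M-L)\,S^{\geq M} \;+\; (H-M)\,S^{=H},
\]
with $S^{\geq M}:=\sum_{j:\,v_j\geq M}(\phi_{j+1}-\phi_j)$ and $S^{=H}:=\sum_{j:\,v_j=H}(\phi_{j+1}-\phi_j)$. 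Applying the same telescoping one level at a time gives $S^{=L}+S^{=M}+S^{=H}=0$, hence $S^{\geq M}=-S^{=L}$.

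The final step signs the two remaining sums using the 2-cycle hypothesis edge by edge. Every edge $(\a^j,\a^{j+1})$ of $C$ lies in $\ver$, and by the symmetric form of Definition~\ref{def:graph:verification:all} (the same vertex of $\T$ is both $ab$- and $ba$-separating) its reverse is in $\ver$ too, so the 2-cycle has non-negative weight: $(v_j-v_{j+1})(\phi_{j+1}-\phi_j)\geq 0$. When $v_j=L$ we have $v_{j+1}>L$, forcing $\phi_{j+1}-\phi_j\leq 0$; summing over such $j$ gives $S^{=L}\leq 0$ and hence $S^{\geq M}\geq 0$. Dually, $v_j=H$ forces $\phi_{j+1}-\phi_j\geq 0$, so $S^{=H}\geq 0$. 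Since $(M-L),(H-M)\geq 0$, we conclude $W(C)\geq 0$, and Theorem~\ref{thm:cmon} delivers OSP.

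The main obstacle is simply spotting the right decomposition; once in place, the 2-cycle inequalities at each edge of $C$ give the sign conclusions almost mechanically. The decomposition works precisely because the two ``rungs'' $(M-L)$ and $(H-M)$ cleanly bridge the three ranks in $\{L,M,H\}$, and it genuinely fails with four values (a fourth indicator sum would appear which two-cycle monotonicity no longer pins down), matching the ``iff $|D_i|\leq 3$'' flavor highlighted in the paragraph preceding the theorem.
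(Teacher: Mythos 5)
Your proof is correct, and it takes a genuinely different route from the paper. The paper partitions the cycle into sub-paths of four explicit shapes, based on how agent $i$'s type alternates among $L_i,M_i,H_i$ along the path, and then bounds the cost of each sub-path $\mathcal{P}_j$ from below by $M_i\bigl(f_i(\x^{j-1})-f_i(\x^{j})\bigr)$ so that a final telescoping sum collapses to zero. Your argument instead works globally on the whole cycle: the level-set identity $v = L + (M-L)\,\mathbf{1}[v\geq M] + (H-M)\,\mathbf{1}[v=H]$ rewrites the cycle weight as a non-negative combination of two ``level sums'', and the edge-wise two-cycle inequalities sign those sums directly. Two small points worth being explicit about for a polished write-up: you should note why the reverse of every cycle edge exists in $\ver$ (it does, because $ab$-separating is symmetric and both $\bi,\bi'$ lie in the same $D_{-i}(u)$, so the paired edge sits over the same separating vertex $u$); and you should note that each cycle edge has $v_j\neq v_{j+1}$, which is what makes ``$v_j=L$ forces $\phi_{j+1}\leq\phi_j$'' (and its dual at $H$) a strict consequence of two-cycle monotonicity rather than vacuous. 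Compared to the paper's proof, yours is shorter, entirely avoids the delicate case enumeration of sub-path shapes, degrades gracefully when $|D_i|<3$ (coefficients $M-L$ or $H-M$ just vanish), and makes the threshold at four values transparent -- a fourth indicator level would produce an interior sum $S^{=M_2}$ that two-cycle monotonicity leaves unsigned, which is exactly the phenomenon the paper's Theorem~\ref{thm:neg_res} exhibits with a concrete counterexample.
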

\begin{proof}[Proof Sketch]
	One direction follows from Theorem~\ref{thm:cmon}. As for the other direction,
	we prove that OSP two-cycle monotonicity implies OSP CMON.
	
	Fix agent $i$, and consider a cycle $\mathcal{C} = (\x, \x', \x'', \ldots, \x)$ in the graph $\ver$.
	Observe that
	$\mathcal{C}$ can be partitioned in paths $(\mathcal{P}_1, \ldots, \mathcal{P}_t)$ such that
	for every $j = 1, \ldots, t$, $\mathcal{P}_j$ is as one of the following:
	\begin{enumerate}[noitemsep,leftmargin=20pt]
		\item $\mathcal{P}_j = (\x^{j-1},\y,\x^{j})$, where $x^{j-1}_i = x^{j}_i < y_i$;
		\item $\mathcal{P}_j = (\x^{j-1},\y,\z,\x^{j})$, where $x^{j-1}_i = x^{j}_i = L_i$, but $y_i \notin \{x^j_i, z_i\}$ and $z_i \notin \{x^j_i, y_i\}$;
		\item $\mathcal{P}_j = (\x^{j-1}, \mathcal{P}_j^1, \ldots, \mathcal{P}_j^s, \x^{j})$, where $s \geq 1$,
		$x^{j-1}_i = x^{j}_i = L_i$ and $\mathcal{P}_j^h = (\y^{j,h-1}, \z^{j,h}, \y^{j,h})$, where either $y^{j,h-1}_i = y^{j,h}_i = M_i$ and $z^{j,h}_i = H_i$ for every $h \in \{1, \ldots, s\}$
		or $y^{j,h-1}_i = y^{j,h}_i = H_i$ and $z^{j,h}_i = M_i$ for every $h \in \{1, \ldots, s\}$;
		\item $\mathcal{P}_j = (\x^{j-1}, \mathcal{P}_j^1, \ldots, \mathcal{P}_j^s, \w^{j}, \x^{j})$, where $s \geq 1$,
		$x^{j-1}_i = x^{j}_i = L_i$, $w^{j}_i \neq L_i$ and, for every $h \in \{1, \ldots, s\}$, $\mathcal{P}_j^h = (\y^{j,h-1}, \z^{j,h}, \y^{j,h})$, where $y^{j,h-1}_i = y^{j,h}_i \notin \{L_i, w^{j}_i\}$ and $z^{j,h}_i = w^{j}_i$.
	\end{enumerate}
	Since we are focusing on single-parameter settings, the cost of a path of
	%
	type~\ref{item:len3} 
	is
	$
	C(\mathcal{P}_j) = \Big(x^{j-1}_i \cdot f_i(\x^{j-1}) - x^{j-1}_i \cdot f_i(\y)\Big) + \Big(y_i \cdot f_i(\y) - y_i \cdot f_i(\z)\Big) + \Big(z_i \cdot f_i(\z) - z_i \cdot f_i(\x^j)\Big).
	$
	We can rewrite this as 
	{\allowdisplaybreaks
		\begin{align*}
		C(\mathcal{P}_j) & = - x^{j-1}_i(f_i(\y)-f_i(\x^{j-1})) + y_i(f_i(\y)-f_i(\z))\\
		& \qquad + z_i(f_i(\z) - f_i(\y) + f_i(\y) - f_i(\x^{j-1}) + f_i(\x^{j-1}) - f_i(\x^j))\\
		& \geq (z_i - x^{j-1}_i)(f_i(\y)-f_i(\x^{j-1})) + (z_i - y_i)(f_i(\z)-f_i(\y)) + z_i(f_i(\x^{j-1}) - f_i(\x^j))\\
		& \geq M_i (f_i(\x^{j-1}) - f_i(\x^j)),
		\end{align*}
	}
	where we used that $z_i \geq M_i > L_i = x^{j-1}_i$ and, by two-cycle monotonicity, $f_i(\y) \geq f_i(\x^{j-1})$ (since $y_i > x^{j-1}_i$),
	and either $f_i(\z) - f_i(\y) = 0$ or $\mathtt{sign}(f_i(\z) - f_i(\y)) = \mathtt{sign}(z_i - y_i)$.

	With similar arguments, we can prove that $C(\mathcal{P}_j) \geq M_i (f_i(\x^{j-1}) - f_i(\x^j))$ even for the other kinds of path.
	Hence,
	$
	C(\mathcal{C}) = \sum_{j=1}^{t} C(\mathcal{P}_j) \geq M_i \sum_{j=1}^{t} (f_i(\x^{j-1}) - f_i(\x^j)) = M_i (f_i(\x^0) - f_i(\x^t)) = 0.
	$
\end{proof}

Theorem~\ref{thm:2cycle3} is tight. Indeed, we next show that if there is at least one agent
whose type domain contains at least four values, then OSP two-cycle monotonicity does not imply OSP cycle-monotonicity (and thus OSP-ness).
Specifically, we have the following theorem.
\begin{theorem}
	\label{thm:neg_res}
	There exists a mechanism with implementation tree $\T$ such that $\ver$ is OSP two-cycle monotone for every $i$,
	but there is an agent for which the mechanism does not satisfy OSP CMON,
	whenever there is at least one agent $i$ such that $|D_i| \geq 4$.
	The claim holds even for a single-item auction setting and $D_j = D$ for every $j \neq i$.
\end{theorem}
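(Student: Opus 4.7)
The plan is to exhibit a concrete counterexample for the case $|D_i|=4$; the case $|D_i|>4$ then reduces to this one by restricting attention to any four of agent $i$'s types and extending the allocation trivially on the remainder. I would work in a two-agent, single-item auction setting where both $i$ and $j$ share the common domain $D = \{v_1 < v_2 < v_3 < v_4\}$ and the item is allowed to remain unsold. A crucial simplification is to take $f_j \equiv 0$, so the item is awarded to $i$ whenever $f_i = 1$ and is unsold otherwise. This trivializes agent $j$'s OSP two-cycle monotonicity, since every edge of $\mathcal{O}_j^{\T}$ then carries 2-cycle weight zero, and concentrates all of the real work on agent $i$.

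For the implementation tree $\T$ I would use three alternating layers of separation. The root is labeled by agent $i$ and partitions $D$ into $\{v_1,v_4\}$ and $\{v_2,v_3\}$; each of its two children is labeled by agent $j$ and partitions $D$ into $\{v_1,v_2\}$ and $\{v_3,v_4\}$; each of the four grandchildren is again labeled by agent $i$, who resolves her two remaining types via a binary split, producing the eight leaves. This design sorts the edges of $\mathcal{O}_i^{\T}$ into two classes: \emph{cross-pair} edges on pairs in $\{v_1,v_4\}\times\{v_2,v_3\}$, contributed by the root with $j$-type unrestricted; and \emph{within-block} edges on the pairs $\{v_1,v_4\}$ and $\{v_2,v_3\}$, contributed by the grandchildren and existing only when both endpoints' $j$-types lie in a common block $\{v_1,v_2\}$ or $\{v_3,v_4\}$. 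I would then set $f_i(v_k,y) = 1$ exactly when $k=1$, or when $k\in\{2,3\}$ and $y\in\{v_3,v_4\}$. Verifying OSP two-cycle monotonicity for agent $i$ is then a small case analysis: the four cross-pair constraints from the root reduce to $\max_y f_i(v_l,y)\le\min_y f_i(v_k,y)$ for $(k,l)\in\{(1,2),(1,3),(4,2),(4,3)\}$, which hold trivially with our $f_i$; the within-block constraints reduce to the same inequalities with $y$ and $y'$ restricted to a single block, and are readily checked by inspection.

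The final step is to exhibit the negative cycle through the four profiles $\x^1=(v_1,v_1)$, $\x^2=(v_3,v_3)$, $\x^3=(v_4,v_4)$, $\x^4=(v_2,v_2)$. Each of the four consecutive edges is a cross-pair and therefore lies in $\mathcal{O}_i^{\T}$, while the two diagonal edges $\{\x^1,\x^3\}$ (a $v_1v_4$-pair) and $\{\x^2,\x^4\}$ (a $v_2v_3$-pair) are within-block pairs whose $j$-type coordinates lie in opposite blocks of agent $j$'s partition and therefore do not appear in $\mathcal{O}_i^{\T}$. Summing the cycle edge weights with $f_i(\x^1)=f_i(\x^2)=1$ and $f_i(\x^3)=f_i(\x^4)=0$ telescopes to $v_2-v_3<0$, violating OSP CMON for agent $i$ by Theorem~\ref{thm:cmon}. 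The hardest part of the argument is the tree design itself: a shallower tree in which agent $i$ fully separates at the root would make every pair of agent $i$'s types separating under a node with $D_{-i}$ full, reinstating the diagonal edges and forcing 2-cycle monotonicity to imply CMON on the cycle profiles; dually, fully revealing agent $j$ before any agent-$i$ separation would restrict all cross-pair edges to same-$j$-type profiles and kill the cycle edges themselves. The three-layer interleaved structure is precisely what threads this needle.
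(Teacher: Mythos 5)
Your proof is correct, and it takes a genuinely different route from the paper's. The paper instantiates the theorem in a natural procurement auction: $n$ agents with $D_1 = \{L, B, M, H\}$ and $D_j = \{L, M, H\}$ for $j \neq 1$, the social choice picks the agent with the smallest reported type, and the implementation tree is a descending auction (eliminating slow agents) that switches to an ascending auction once two candidates remain. The negative 4-cycle through $\x,\y,\z,\w$ (with $x_1=H, y_1=B, z_1=L, w_1=M$) has weight $B - M < 0$; the diagonal $\{\y,\w\}$ pair ($B$ vs.\ $M$) is absent because $1$ is only asked to separate $B$ from $M$ deep in the tree, after some $j$-type has already been pinned down. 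Your construction instead isolates the abstract mechanism behind that phenomenon: a hand-built allocation ($f_j \equiv 0$, $f_i$ chosen so that $\x^1,\x^2$ win and $\x^3,\x^4$ do not) together with the three-layer $i$--$j$--$i$ interleaving, which deletes exactly the two diagonal comparisons $\{v_1,v_4\}$ and $\{v_2,v_3\}$ at full $D_{-i}$, while keeping all cross-pair edges at the root. The resulting cycle weight telescopes to $v_2 - v_3 < 0$. The paper's version buys you an appearance in a recognizable auction format (descending followed by ascending), reinforcing the paper's theme that such hybrid formats are where the interesting phenomena live; your version buys economy (two agents, a common four-value domain, essentially no auction semantics) and makes the combinatorial core --- ``separate types into pairs, hide the diagonals behind a layer of $j$-queries'' --- completely explicit. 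One small nit: when you reduce the root's four cross-pair 2-cycle constraints to $\max_y f_i(v_l,y) \le \min_y f_i(v_k,y)$ for $(k,l)\in\{(1,2),(1,3),(4,2),(4,3)\}$, recall that $v_4 > v_2, v_3$, so for the last two pairs the inequality should read the other way around (lower $i$-type on the right); the conclusion is unchanged because $\max_y f_i(v_4,y) = 0 = \min_y f_i(v_2,y) = \min_y f_i(v_3,y)$, but the direction is worth stating carefully.
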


\section{Scheduling related machines}
\label{sec:scheduling}

\newcommand{\stpath}[4]{
	\node[circle, fill=gray!50] (s) at (0,1) {$s$};
	\node[circle, fill=gray!50] (t) at (4,1) {$t$};
	\node[circle, fill=gray!50] (u) at (2,2) {\phantom{$u$} };
	\node[circle, fill=gray!50] (d) at (2,0) {\phantom{$d$}};
	\draw[thick] (s) -- (u) node[pos=.5,sloped,above] {$#1$} -- (t) node[pos=.5,sloped,above] {$#2$} -- (d) node[pos=.5,sloped,below] {$#4$} -- (s) node[pos=.5,sloped,below] {$#3$};
}

\newcommand{\upperpath}{\draw[line width=4pt] (s) -- (u) -- (t);}

\newcommand{\lowerpath}{\draw[line width=4pt] (s) -- (d) -- (t);}

\newcommand{\treeedges}[6]{
	\def\edgea{#1}
	\def\edgeb{#2}
	\def\edgec{#3}
	\def\edged{#4}
	\def\edgee{#5}
	\def\edgef{#6}
}

\newcommand{\tree}[4]{
	\node[circle, fill=gray!50] (LL) at (0,0) {$#1$};
	\node[circle, fill=gray!50] (LH) at (4,0) {$#2$};
	\node[circle, fill=gray!50] (HL) at (8,0) {$#3$};
	\node[circle, fill=gray!50] (HH) at (12,0) {$#4$};
	\node[circle, fill=gray!50] (L) at (2,2) {$2$};
	\node[circle, fill=gray!50] (H) at (10,2) {$2$};
	\node[circle, fill=gray!50] (R) at (6,4) {$1$};
	\draw[thick] (R) -- (L) node[pos=.5,sloped,above] {\edgea} -- (LL) node[pos=.5,sloped,above] {\edgec};
	\draw[thick] (R) -- (H) node[pos=.5,sloped,above] {\edgeb} -- (HH) node[pos=.5,sloped,above] {\edgef};
	\draw[thick] (L) -- (LH) node[pos=.5,sloped,above] {\edged};
	\draw[thick] (H) -- (HL) node[pos=.5,sloped,above] {\edgee};
}

\newcommand{\treetwovalues}[4]{\treeedges{L}{H}{L}{H}{L}{H}\tree{#1}{#2}{#3}{#4}}

\newcommand{\stpathup}[4]{
	\begin{tikzpicture}[scale=.8]
	\stpath{#1}{#2}{#3}{#4}
	\upperpath
	\end{tikzpicture}
}

In this section, we show how the domain structure impacts on the performance guarantee of OSP mechanisms, for the problem of scheduling related machines. Roughly speaking, the problem is easy for \emph{two-value} domains, while it becomes difficult already for \emph{three-value} domains and \emph{two} jobs.

\subsection{Two-value domains are easy}
OSP-CMON allows us to prove that an OSP optimal mechanism exists whenever each agent's domain has size two. Specifically, we have the following theorem.
\begin{theorem}\label{th:scheduling:many-agents:identical-jobs-}
	For the machine scheduling problem, there exists an optimal polynomial-time OSP mechanism for any number of agents with two-value domains $D_i=\{L_i,H_i\}$. 
\end{theorem}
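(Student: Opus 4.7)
The plan is to design a sequential extensive-form mechanism in which, at each internal node, one un-queried agent is asked her type — a binary ``is your cost $L_i$?'' question — and, once all agents have revealed, a makespan-optimal schedule is output. Since $|D_i|=2$ for every $i$, Theorem~\ref{thm:2cycle3} lets us verify OSP by checking two-cycle monotonicity at every divergent node; in this single-parameter setting this rewrites as requiring, at the unique node $u$ where agent $i$ is queried, that
\[
\min_{\x_{-i} \in D_{-i}(u)} f_i(L_i,\x_{-i}) \;\geq\; \max_{\y_{-i} \in D_{-i}(u)} f_i(H_i,\y_{-i}).
\]

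First, I would fix $f$ to be the ``balanced greedy'': assign jobs one at a time to the machine minimizing the resulting completion time $b_j(x_j+1)$, with a fixed lexicographic tie-breaking rule. For identical jobs on related machines this is makespan-optimal and runs in time polynomial in $n$ and $\log m$, so optimality and polynomial running time will follow from the construction, leaving OSP as the only condition to verify.

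Next, I would build the implementation tree $\T$ adaptively: at every internal node $u$, pick as the next agent to query an un-queried agent $i^{*}$ for which the safety inequality above holds on the currently reachable subdomain $D_{-i^{*}}(u)$. The heart of the proof is the \emph{Key Lemma}: whenever at least one agent is still un-queried, some such safe agent exists. My plan for this is structural: among un-queried agents, select $i^{*}$ to be the most ``extremal'' one (for instance, the agent with the largest ratio $H_{i^{*}}/L_{i^{*}}$, or any whose interval $[L_{i^{*}},H_{i^{*}}]$ sits at the top or bottom relative to the others'). Intuitively, such an $i^{*}$ either becomes the unique bottleneck when she declares $H_{i^{*}}$ (bounding $f_{i^{*}}(H_{i^{*}},\y_{-i^{*}})$ from above via the balance equations $b_j x_j \approx T^{*}$) or else becomes one of the fastest machines when she declares $L_{i^{*}}$ (bounding $f_{i^{*}}(L_{i^{*}},\x_{-i^{*}})$ from below by the fastest-machine share). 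A short calculation, combined with the tie-breaking rule to handle integrality, then yields the required $\min \ge \max$.

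Finally, I would set the payments via the standard threshold rule along $\T$ — each queried agent $i$ is paid the largest declaration that would still entitle her to her current load — and verify the inequality in Observation~\ref{obs:def} on each edge of the OSP-graph of agent $i$; together with the Key Lemma and Theorem~\ref{thm:cmon} this establishes OSP. The resulting mechanism is optimal (by $f$), polynomial-time (both the tree of depth at most $n$ and the greedy schedule are polynomial), and OSP (by Theorem~\ref{thm:2cycle3}). The hardest part will be the Key Lemma: while the ``extremal agent is safe'' intuition is clean for the fractional relaxation, the rigorous argument must deal with integrality in the balanced schedule and with the combinatorial possibilities of overlapping or interleaving domains, which is where I expect the bulk of the technical work to go.
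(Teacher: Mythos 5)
Your architecture coincides with the paper's: the same balanced greedy optimum, the same adaptive implementation tree that repeatedly queries an agent for whom the ``min load at $L_i$ $\geq$ max load at $H_i$'' inequality holds over the current subdomain, and the same reduction of OSP to this two-cycle condition. However, the entire weight of the proof rests on your Key Lemma --- that a safe agent exists at every node --- and this is exactly the part you do not prove. The ``extremal agent'' selection rule (largest $H_{i}/L_{i}$, or top/bottom interval) is offered only as an intuition about bottlenecks and fastest-machine shares in a fractional relaxation; with heterogeneous two-value domains whose intervals can nest and interleave arbitrarily, and with integral loads, there is no argument given that this specific agent satisfies the inequality, and you yourself flag this as the place where ``the bulk of the technical work'' remains. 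As it stands the proposal is a plan for the crucial step, not a proof of it.

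The paper closes this gap by a different and cleaner mechanism-independent argument that your sketch misses. It first proves a \emph{cross-monotonicity} property of the balanced greedy with its fixed tie-breaking (Lemma~\ref{lem:opt_prop}): when any one machine raises her cost, the load of \emph{every other} machine weakly increases. This is strictly stronger than the standard monotonicity of an optimal algorithm (which only controls the deviating machine's own load) and its proof requires a careful analysis of the tie-breaking rule; nothing in your proposal supplies it. Given this property, the existence of a safe agent follows by a simple counting argument: moving all un-queried agents from their low to their high types weakly increases the total load on the already-queried set, hence weakly decreases the total load on the un-queried set, hence \emph{some} un-queried agent's load weakly decreases --- that agent is safe (and the same cross-monotonicity shows the min and max in your inequality are attained at the all-$L$ and all-$H$ completions, which your verification step also needs). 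Note that this argument is purely existential; it does not identify the safe agent by any fixed structural rule such as extremal ratio, and proving that your particular rule always works would be an additional, unsubstantiated claim. To repair your proof, replace the extremal-agent heuristic with (i) a proof of the cross-monotonicity of the balanced greedy and (ii) the averaging argument over total load.
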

The full proof of this theorem is given in Appendix~\ref{apx:2het}.
Here we instead show a proof of the above theorem for the simpler setting in which domains are homogeneous, i.e., $D_i = \{H, L\}$ for every $i$.
This case turns out to be interesting, 
since we can here use an explicit, simple payment function $P$ which rewards each agent an amount $p$, between $L$ and $H$, for each unit of allocated work,
i.e., $p_i(\b) = p \cdot f_i(\b)$
for $L<p<H$.
We call such a mechanism a mechanism with \emph{proportional payments}.
The main intuition behind these payments is that agents with low cost have positive utility, while those with high cost have a negative utility.  In particular, agents with low cost aim to maximize their assigned work, and agents with high cost aim to minimize it. 
The following definition and theorem formalize this intuition.
\begin{definition}
	An algorithm $f$ is \emph{strongly monotone} for a two-value domain $D_i=\{L,H\}$ if
	$f_{i}(L,\b_{-i}) \geq f_{i}(H,\b'_{-i})$
	for all $i$ and for all $\b_{-i}, \b'_{-i} \in D_{-i}$.
\end{definition}

\begin{theorem}
	A mechanism $\M=(f,\p)$ with the proportional payments is OSP with direct-revelation implementation for a two-value domain  if and only if algorithm $f$ is strongly monotone for the corresponding two-value domain.
\end{theorem}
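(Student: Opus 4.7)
The plan is to directly unpack the OSP definition and show that, for proportional payments, the OSP constraints at each divergent node in the direct-revelation implementation reduce precisely to the strong monotonicity condition on $f$.

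First, I will rewrite the utility of agent $i$ with true type $t_i \in \{L,H\}$ when the reported profile is $\b$: under proportional payments,
\[
u_i(t_i, \M(\b)) \;=\; p \cdot f_i(\b) - t_i \cdot f_i(\b) \;=\; (p - t_i)\, f_i(\b).
\]
Since $L < p < H$, the coefficient satisfies $p-L > 0$ and $p-H < 0$, a sign change I will use crucially.

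Next, I will consider the node $u$ at which agent $i$ reports her type in the direct-revelation implementation; by definition of direct revelation, every $\bi, \bi' \in D_{-i}$ are compatible with $u$, so the OSP requirements at $u$ must hold for all such pairs. I will instantiate the OSP inequality $u_i(t_i,\M(t_i,\bi)) \geq u_i(t_i,\M(b_i,\bi'))$ in the two possible ways:
\begin{itemize}
\item For $t_i = L$ and $b_i = H$, dividing through by $p-L > 0$ yields $f_i(L,\bi) \geq f_i(H,\bi')$.
\item For $t_i = H$ and $b_i = L$, dividing through by $p-H < 0$ flips the inequality and, after renaming the free profiles, yields the very same condition $f_i(L,\bi') \geq f_i(H,\bi)$.
\end{itemize}
Thus both OSP constraints at $u$ coincide with strong monotonicity of $f$ at agent $i$, and the claimed equivalence follows by conjoining these conditions over all agents $i$.

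The argument is essentially a one-line calculation once the utilities are rewritten; the only care needed is to track the sign change in the second case, which is precisely what makes the under-bidding and over-bidding deviations collapse to a single algorithmic property. The direct-revelation assumption enters only insofar as it guarantees that $\bi$ and $\bi'$ range over the entire $D_{-i}$: in a generic implementation tree one would obtain the weaker restriction $\bi, \bi' \in D_{-i}(u)$, but under direct revelation these two sets coincide, so the OSP constraints match the ``for all $\bi,\bi'$'' quantifier in the definition of strong monotonicity exactly.
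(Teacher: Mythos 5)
Your calculation is the same one the paper uses -- rewrite the utility as $(p-t_i)\,f_i(\b)$ and exploit the signs of $p-L>0$ and $p-H<0$ -- so for the sufficiency direction the two proofs coincide. You actually go beyond the paper's written proof, which only checks that strong monotonicity implies the OSP inequalities, by noting that the OSP constraints and strong monotonicity are literally the same set of inequalities, which gives the ``only if'' direction as well.

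The one step to be careful about is your closing claim that under direct revelation every $\bi,\bi'\in D_{-i}$ is compatible with the node $u$ at which $i$ reports, i.e.\ that $D_{-i}(u)=D_{-i}$. In the paper's formalization an implementation tree has no simultaneous moves, so a direct-revelation mechanism queries the agents in some order; for any agent other than the first one queried, the components of $D_{-i}(u)$ corresponding to previously queried agents are singletons, hence $D_{-i}(u)\subsetneq D_{-i}$. This does not affect sufficiency (an inequality that holds for all $\bi,\bi'$ holds in particular for the compatible pairs), but your necessity argument, as written, only yields $f_i(L,\bi)\ge f_i(H,\bi')$ for pairs $\bi,\bi'$ that agree on the reports of the agents queried before $i$, which is strictly weaker than strong monotonicity. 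To recover the full ``only if'' you should either read ``direct-revelation implementation'' as simultaneous sealed bids, so that the relevant worst/best cases for agent $i$ range over all of $D_{-i}$ (this is evidently the reading the paper intends, since its own proof quantifies over all $\b_{-i},\b'_{-i}\in D_{-i}$, and it never writes out the necessity direction at all), or else require OSP for every ordering of the queries and apply your argument to the ordering in which $i$ moves first. One sentence making this explicit would close the gap; the rest of the proof is fine.
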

\begin{proof}
	Since $p_i(\b) = p f_i(\b)$, with $p>L$, and $f_{i}(L,\b_{-i}) \geq f_{i}(H,\b'_{-i})$
	we have $u_i(L,\M(L,\b_{-i})) = (p -L)\cdot f_i(L,\b_{-i}) \geq (p -L)\cdot f_{i}(H,\b'_{-i})$. Moreover, since $p<H$,  we have $u_i(H,\M(H,\b_{-i})) = (p -H)\cdot f_i(H,\b_{-i}) \geq (p -H)\cdot f_{i}(L,\b'_{-i})$. As these two conditions hold for all $i$ and for all $\b_{-i}, \b'_{-i} \in D_{-i}$, we have shown that OSP holds.
\end{proof}

From this result we easily obtain an optimal OSP mechanisms for the scheduling problem.

\begin{proposition}\label{prop:scheduling:many-agents:identical-jobs}
	For the machine scheduling problem, there exists an optimal polynomial-time OSP mechanism for any number of agents with two-value domains $D_i=\{L,H\}$. 
\end{proposition}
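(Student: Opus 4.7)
The plan is to combine the previous theorem about proportional payments with an optimal polynomial-time algorithm $f^*$ that is strongly monotone for the homogeneous domain $\{L,H\}$. Once such an $f^*$ is in hand, setting $p_i(\b) = p \cdot f^*_i(\b)$ for any fixed $p \in (L,H)$ immediately yields an OSP mechanism via a direct-revelation implementation, and polynomial running time reduces to showing that $f^*$ is computable in polynomial time.

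First, I would define $f^*$ as follows. Given a bid profile $\b$, let $n_L = |\{i : b_i = L\}|$ and $n_H = n - n_L$, and compute the optimal makespan $T^*$ as the smallest $T$ satisfying $n_L \lfloor T/L \rfloor + n_H \lfloor T/H \rfloor \geq m$ (by binary search over the $O(m)$ relevant values). Set $k_L := \lfloor T^*/L \rfloor$ and $k_H := \lfloor T^*/H \rfloor$; note $k_L \geq k_H$ since $L \leq H$. The canonical allocation assigns either $k_L$ or $k_L - 1$ jobs to each $L$-machine and either $k_H$ or $k_H - 1$ jobs to each $H$-machine, with totals summing to $m$ and preferring to offload to $L$-machines whenever slack allows. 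Within each class, the agents with smaller index receive the higher of the two possible loads; crucially, the same index-based tie-breaking rule is used across all bid profiles. This runs in polynomial time and yields a schedule of makespan exactly $T^*$, hence optimal.

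The main step is then to verify strong monotonicity, i.e., $f^*_i(L,\b_{-i}) \geq f^*_i(H,\b'_{-i})$ for every $i$ and every $\b_{-i}, \b'_{-i} \in D_{-i}$. I would establish two monotonicity lemmas on the canonical allocation: (i)~$f^*_i(L,\b_{-i})$ is non-increasing in the number of $L$-bidders appearing in $\b_{-i}$, so its minimum is attained at $\b_{-i} = (L,\ldots,L)$; and (ii)~$f^*_i(H,\b'_{-i})$ is non-decreasing in the number of $H$-bidders in $\b'_{-i}$, so its maximum is attained at $\b'_{-i} = (H,\ldots,H)$. At both extremes all machines belong to the same class, and by the consistent index-based tie-breaking, agent $i$ occupies the exact same rank in both profiles, so she receives the same load. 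This chain of (in)equalities gives $\min_{\b_{-i}} f^*_i(L,\b_{-i}) = f^*_i(L,(L,\ldots,L)) = f^*_i(H,(H,\ldots,H)) = \max_{\b'_{-i}} f^*_i(H,\b'_{-i})$, which is the desired strong monotonicity, and Theorem on proportional payments closes the argument.

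The main obstacle lies in establishing the two monotonicity lemmas. A direct approach is an exchange argument: flipping one agent's bid from $H$ to $L$ weakly decreases $T^*$ and expands the aggregate capacity of the $L$-class, so under the canonical allocation the $L$-load is spread at least as thinly on the remaining $L$-machines, while a flip in the other direction has the symmetric effect. Formalising this requires tracking how the prefix-based tie-breaking interacts with a single-bid flip, i.e., showing that a fixed agent's rank within her class (and hence her chance of receiving the ``full'' load $k_L$ or $k_H$) moves monotonically under the flip. With these lemmas in place, the proposition follows immediately.
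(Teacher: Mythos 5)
Your proposal is correct and follows essentially the same route as the paper's: a balanced, index-tie-broken optimal schedule $f^*$, combined with the theorem on proportional payments, so that the entire burden falls on establishing strong monotonicity by comparing to the two homogeneous extremes. The paper's proof of strong monotonicity uses exactly the same scheme (sorted load vectors $\w^{(\ell)}$, a fixed within-class ordering, and a reallocation argument moving jobs from the most-loaded $H$-machine to the least-loaded $L$-machine), with the only superficial difference being the pivot: the paper sandwiches both sides around the quantity $w^{(n)}_n$ via the inequalities $w^{(\ell)}_\ell \geq w^{(n)}_\ell \geq w^{(n)}_n \geq w^{(\ell')}_{\ell'+1}$, whereas you sandwich them around $f^*_i(L,\L_{-i})=f^*_i(H,\H_{-i})=w^{(n)}_{r_i}$, where $r_i$ is $i$'s index; both chains are valid. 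One caution on your Lemma (i)/(ii) as literally written: $f^*_i(L,\b_{-i})$ is not a function of the \emph{number} of $L$-bidders alone --- e.g.\ with three machines, $m=5$, $L=1$, $H$ large, one gets $f^*_2(L,(L,H))=2$ but $f^*_2(L,(H,L))=3$, same $L$-count --- so the lemma should be phrased as monotonicity under a single $H\!\to\!L$ flip (which is what your ``exchange argument'' discussion actually addresses), or simply as the weaker claim $f^*_i(L,\b_{-i})\geq f^*_i(L,\L_{-i})$ and $f^*_i(H,\b'_{-i})\leq f^*_i(H,\H_{-i})$. With that restatement your plan matches the paper's argument.
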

\begin{proof}
	We show that there is an optimal allocation that is strongly monotone.
	Let $\w^{(\ell)}=(w^{\ell}_1,w^{\ell}_2,\ldots,w^{\ell}_n)$ denote some (suitably defined) optimal allocation for the case in which there are $\ell$ agents with cost $L$ and the remaining $n-\ell$ with cost $H$, with these workloads in non-decreasing order (i.e., $w_1^{(\ell)}\geq w_2^{(\ell)}\geq \cdots \geq w_n^{(\ell)}$). Note that $\w^{(n)}=\w^{(0)}$ is the allocation when all types are the same (all $L$ or all $H$). For a generic input $\b$ with $\ell$ types being $L$, we allocate jobs according to $\w^{(\ell)}$, following a fixed order of the agents among those with the same type: The $k$-th agent among those of type $L$ gets $w^{(\ell)}_k$ and the $j$-th agent of type $H$ gets $w^{(\ell)}_{\ell+j}$. Note that $w^{(\ell)}_\ell$ is the minimum load assigned to a machine declaring type $L$, while $w^{(\ell)}_{\ell+1}$ is the maximum load assigned to a machine that declare type $H$. Let us define $\w^{(n)}_{n+1}=0$. Then, if the following key inequalities hold for all $\ell$,
	\begin{align}\label{eq:vectors}
	w^{(\ell)}_\ell \geq w^{(n)}_\ell && w^{(n)}_n\geq w^{(\ell)}_{\ell+1}\ ,
	\end{align}
	then the allocation is strongly monotone. Indeed,
	$f_i(L,\b_{-i}) \geq w^{(\ell)}_{\ell}$, where $\ell$ is the number of machines of type $L$ in $(L,\b_{-i})$; similarly, $f_i(H,\b_{-i}') \leq w^{(\ell')}_{\ell'+1}$,  where $\ell'$ is the number of machines of type $L$ in $(H,\b_{-i}')$; finally, \eqref{eq:vectors} implies $w^{(\ell')}_{\ell'+1} \leq w^{(n)}_n \leq w^{(n)}_\ell \le w^{(\ell)}_\ell$. 
	
	To conclude the proof, we show how to guarantee \eqref{eq:vectors}. First, in the optimal allocation  $\w^{(n)}$, the load between machines of the same type differ by at most $1$. Then, every other optimal allocation $\w^{(\ell)}$ can be obtained starting from $\w^{(n)}$ by repeatedly reallocating one job in  the most loaded machine of type $H$ to the least loaded machine of type $L$. Every step increases the least loaded machine of type $L$ and decreases the most loaded machine of type $H$, and thus \eqref{eq:vectors} holds at every intermediate step until we obtain $\w^{(\ell)}$.
\end{proof}

\subsection{Three-value domains are hard}
In this section, we show that the problem becomes hard as soon as we have \emph{three-value} domains. Our contributions are a \emph{lower bound} of $\sqrt{n}$ on the approximation of any OSP mechanism, and a \emph{matching upper bound}.


\noindent {\bf Lower bound.} 
\label{sec:sched3_lb}
The $2$-inapproximability result in Proposition~\ref{prop:scheduling:three-values:lb} for three-value domains can be strengthen  to $\sqrt{n}$-inapproximability by considering a larger number of machines and jobs.
\begin{theorem}
	\label{thm:lower}
	For the machine scheduling problem, no OSP mechanism can be better than $\sqrt{n}$-approximate. This also holds for 
	three-value domains $D_i=\{L,M,H\}$.
\end{theorem}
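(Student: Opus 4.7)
The plan is to amplify the 2-cycle argument of Proposition~\ref{prop:scheduling:three-values:lb} to $n$ machines and $m=n$ jobs by running an inductive analysis along the execution path of the all-$H$ profile. Set $L=1$, $M=n$, and $H$ large enough that $H$ dominates $\sqrt{n}\cdot nM$ (e.g., $H=n^{3}$), and assume, for contradiction, that some OSP mechanism achieves approximation $\alpha<\sqrt{n}$. Let $\y^{*}=(H,\ldots,H)$; its optimum makespan is $H$, so the approximation bound gives $\max_{i} f_{i}(\y^{*})<\sqrt{n}$.

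The first step is to observe that along $\y^{*}$'s execution path the mechanism must query every agent at some divergent node that separates $M$ from $H$ for that agent. Indeed, on the profile $\y^{k}$ obtained from $\y^{*}$ by replacing coordinate $k$ with $M$, the approximation forces $f_{k}(\y^{k})=n$ (any $H$-machine receiving a job would produce makespan $\geq H \gg \sqrt{n}\cdot nM$), so the outputs on $\y^{*}$ and $\y^{k}$ must differ, which can only happen at a node where $M$ and $H$ lie in different subtrees for $k$. Let $u_{1},u_{2},\ldots$ denote the divergent nodes of $\y^{*}$'s path in depth order, and let $i_{j}=S(u_{j})$.

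I would then prove, by induction on $j\leq n-\sqrt{n}$, the following alternative: either a negative 2-cycle appears in the OSP-graph of $i_{j}$ (contradicting Theorem~\ref{thm:cmon}), or the partition used at $u_{j}$ is exactly $\{H\}\mid\{L,M\}$ and $f_{i_{j}}(\y^{*})=0$. Suppose first that the partition at $u_{j}$ also separates $L$ from $M$. Take $\x^{I}$ and $\y^{I}$ that both agree with $\y^{*}$ at the agents queried at $u_{1},\ldots,u_{j-1}$ (all fixed to $H$, which by the inductive hypothesis coincides with the restricted domain $\{H\}$ of each such agent), with $x^{I}_{i_{j}}=L$ and the remaining coordinates $L$, and $y^{I}_{i_{j}}=M$ and the remaining coordinates $H$. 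The choice of $H$ forces $f_{i_{j}}(\y^{I})=n$, while $j\leq n-\sqrt{n}$ combined with $\alpha<\sqrt{n}$ forces $f_{i_{j}}(\x^{I})<n$ (a standard calculation starting from the optimum $\lceil n/(n-j+1)\rceil<\sqrt{n}$ of $\x^{I}$). The 2-cycle weight $(L-M)(f_{i_{j}}(\y^{I})-f_{i_{j}}(\x^{I}))=(1-n)(n-f_{i_{j}}(\x^{I}))$ is then strictly negative. Suppose instead the partition at $u_{j}$ is $\{H\}\mid\{L,M\}$. Take $\x^{II}$ that agrees with $\y^{*}$ on the previously queried agents, has $M$ at $i_{j}$, and $L$ everywhere else; the same threshold $j\leq n-\sqrt{n}$ implies that placing any job on the $M$-machine $i_{j}$ would raise the ratio to at least $\sqrt{n}$, so $f_{i_{j}}(\x^{II})=0$. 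OSP 2-cycle monotonicity applied at $u_{j}$ (which separates $M$ and $H$) then yields $f_{i_{j}}(\y^{*})\leq f_{i_{j}}(\x^{II})=0$.

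If the induction never produces a negative cycle, the first $n-\sqrt{n}$ agents queried along $\y^{*}$'s path each contribute zero jobs in the output, so all $n$ jobs are absorbed by the remaining $\sqrt{n}$ agents. Combined with $\max_{i} f_{i}(\y^{*})<\sqrt{n}$, those agents carry at most $\sqrt{n}(\sqrt{n}-1)=n-\sqrt{n}<n$ jobs in total, the desired contradiction. The main technical hurdle is maintaining the compatibility of the comparison profiles $\x^{I},\y^{I},\x^{II}$ with the deep divergent nodes $u_{j}$; the induction is organized precisely around the $\{H\}\mid\{L,M\}$ partition so that each previously queried agent's restricted domain equals $\{H\}$, which is exactly what the $H$-prefix of every comparison profile encodes.
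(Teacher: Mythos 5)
Your proposal is correct and follows essentially the same strategy as the paper: you consider the all-$H$ path, show that (barring a negative 2-cycle) each of the first $n-\sqrt{n}$ divergent agents must separate $\{H\}$ from $\{L,M\}$ and receive zero jobs on the all-$H$ profile, and then derive a makespan contradiction. The differences are cosmetic — the paper packages the two inductive claims as a standalone lemma (Lemma~\ref{lem:separ3}) and uses parametric domain gaps rather than $L=1,M=n,H=n^3$, while you make explicit (via the comparison of $\y^*$ with $\y^k$) that every agent must actually diverge on $M$-vs-$H$ along the path, a point the paper handles more implicitly with its dummy-node convention.
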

In order to prove this lower bound, we consider  $m = n = c^2$, for some $c > 1$, and a three-value domain $D_i=\{L,M,H\}$ such that 
$M \geq m \cdot L$ and 
$H \geq m \sqrt{n} \cdot M.$ 

First observe that, in such domains, every trivial mechanism must have an approximation ratio not lower than $\sqrt{n}$.
Consider then a non-trivial mechanism $\M$ and let $\T$ be its implementation tree. Let us rename the agents as follows: Agent $1$ is the $1$st agent in that diverges in $\T$; since the mechanism is not trivial agent $1$ exists. We now call agent $2$, the $2$nd agent that diverges in the subtree of $\T$ defined by agent $1$ taking an action compatible with type $H$; if no agent diverges in this subtree of $\T$ we simply call $2$ a random agent different from $1$. More generally, agent $i$ is the $i$th agent that diverges, if any, in the subtree of $\T$ that corresponds to the case that the actions taken by agents that previously diverged are compatible with their type being $H$. As above, if no agent diverges in the subtree of interest, we just let $i$ denote a random agent different from $1, 2, \ldots, i-1$. We denote with $u_i$ the node in which $i$ diverges in the subtree in which all the other agents have taken actions compatible with $H$; if $i$ does not diverge (i.e., got her id randomly) we denote with $u_i$ a dummy node in which we will say that $i$ does not diverge and $i$ takes an action compatible with every type in $D_i$.
We then have the following lemma.
\begin{lemma}
	\label{lem:separ3}
	Any OSP mechanism $\M$ which is $k$-approximate, with $k<\sqrt{n}$, must satisfy the following conditions: 
	\begin{enumerate}
		\item For every $i \leq n - \sqrt{n} + 1$, if agent $i$ diverges at node $u_i$, it must diverge on $M$ and $H$.\label{itm:lem:separ3}
		\item For every $i \leq n - \sqrt{n}$, if agent $i$ at node $u_i$ takes an action compatible with her type being $H$, then $\M$ does not assign any job to $i$, regardless of the actions taken by the other agents. \label{itm:lem:notb3}
	\end{enumerate}
\end{lemma}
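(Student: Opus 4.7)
The plan is to derive both items from OSP two-cycle monotonicity (Theorem~\ref{thm:cmon}) combined with the approximation guarantee, exploiting the huge gaps $M\geq nL$ and $H\geq n\sqrt{n}\,M$. In each item I will construct two profiles that are compatible with $u_i$ and diverge there, with incompatible forced values of $f_i$; the resulting negative two-cycle in $\ver$ rules out the offending configuration.

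For item~\ref{itm:lem:separ3}, since $i$ has not yet diverged before $u_i$ we have $D_i(u_i)=\{L,M,H\}$, and the only non-trivial partition of $D_i(u_i)$ that keeps $M$ and $H$ in the same child is $\{L\}$ vs $\{M,H\}$. Assume for contradiction that this is the partition at $u_i$. Consider $\a=(L,\bi)$ with $b_j=H$ for $j<i$ (consistent with the $H$-compatible actions that led to $u_i$) and $b_j=L$ for $j>i$, against $\b=(M,\bi')$ with $b'_j=H$ for every $j\neq i$; both are compatible with $u_i$ and, under the assumed partition, diverge there. Since $H\geq n\sqrt{n}\,M$, any schedule for $\b$ that places even one job on an $H$-machine has makespan at least $H>k\cdot nM$, so a $k$-approximation with $k<\sqrt{n}$ must assign all $n$ jobs to $i$, giving $f_i(\b)=n$. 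For $\a$, the $n-i+1\geq\sqrt{n}$ $L$-machines (including $i$) support an optimum of makespan at most $\lceil\sqrt{n}\rceil L$, so $f_i(\a)\leq k\lceil\sqrt{n}\rceil<n$. Hence $f_i(\b)>f_i(\a)$ while $L<M$, producing a negative two-cycle.

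For item~\ref{itm:lem:notb3}, apply item~\ref{itm:lem:separ3} (valid since $i\leq n-\sqrt{n}<n-\sqrt{n}+1$) so that $u_i$ is $MH$-separating. Fix any $\yi$ with $(H,\yi)$ compatible with $u_i$ and pair it with $(M,\xi)$, where $\xi$ is built by choosing $\xi_j=L$ whenever $L\in D_j(u_i)$, $\xi_j=H$ otherwise (necessarily $D_j(u_i)=\{H\}$ in that case) for $j<i$, and $\xi_j=L$ for $j>i$. Crucially, applying item~\ref{itm:lem:separ3} along the $H$-path to every earlier divergent agent pins down $D_j(u_i)\subseteq\{L,H\}$ for every $j<i$, so the choice of $\xi_j$ is always sound. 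The number of $L$-machines in $(M,\xi)$ is at least $n-i\geq\sqrt{n}$ (this bound, tighter by one than in item~\ref{itm:lem:separ3}, is exactly the reason the statement requires $i\leq n-\sqrt n$), the optimum makespan is at most $\lceil\sqrt{n}\rceil L$, and the $k$-approximation ceiling $k\sqrt{n}L<nL\leq M$ prevents any job from being placed on $i$, so $f_i(M,\xi)=0$. The $MH$ two-cycle then yields $f_i(H,\yi)\leq f_i(M,\xi)=0$.

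The main technical obstacle is the bookkeeping of the domains $D_j(u_i)$ for $j<i$: a priori they could be any subset of $\{L,M,H\}$ containing $H$, and without knowing their structure one cannot guarantee that the counter-profiles have the required slow machines. The key observation that unlocks the argument is that item~\ref{itm:lem:separ3} applied cumulatively along the $H$-path rules out $M\in D_j(u_i)$ for every such $j$, after which the construction of the counter-profiles is mechanical. The edge case in which $i$ received her index at random (so $u_i$ is a dummy and $i$ does not actually diverge in this subtree) can be handled by noting that $f_i$ is then insensitive to $b_i$ on the subtree, and applying the makespan-versus-$H$ bound directly on a profile with $b_i=H$ and enough $L$-machines among the others yields $f_i\equiv 0$ without invoking two-cycle monotonicity.
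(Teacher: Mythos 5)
Your proof is correct and follows essentially the same blueprint as the paper's: for each item you construct two counter-profiles compatible with $u_i$ that diverge there, use the numeric gaps $M\geq nL$ and $H\geq n\sqrt n\,M$ to show that the approximation guarantee forces $f_i$ to be large on one profile and zero (or small) on the other, and then invoke OSP two-cycle monotonicity to get a contradiction. The specific profiles you pick for item~1 ($b_j=H$ for $j<i$, $b_j=L$ for $j>i$ against all-$H$) match the paper's exactly; for item~2 your $\xi$ differs slightly from the paper's $\y$ (the paper simply sets $y_j=H$ for $j<i$), but since $H\in D_j(u_i)$ is automatic from the $H$-compatible actions, both are legitimate and both yield the needed $\geq\sqrt n$ slow machines.

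One remark worth making: the ``key observation'' you emphasize, that item~1 applied along the $H$-path gives $D_j(u_i)\subseteq\{L,H\}$ for $j<i$, is correct but not actually needed. The paper bypasses the bookkeeping entirely by always choosing $b_j=H$ for $j<i$, which is compatible with $u_i$ unconditionally; your more elaborate choice of $\xi_j$ works but buys nothing, and the framing that this observation ``unlocks the argument'' overstates its role. On the other hand, you do explicitly treat the edge case in which $u_i$ is a dummy node, which the paper's proof of item~2 glosses over (it applies item~1 to conclude ``machine $i$ diverges at $u_i$'', which is only valid when $i$ genuinely diverges). Your handling is in spirit right, but the stated justification is a bit too weak: ``$f_i$ is insensitive to $b_i$'' would not by itself give $f_i\equiv 0$ regardless of the others' actions. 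What actually makes the dummy case work is that once no agent diverges beyond $u_{i-1}$ on the $H$-branch, the entire outcome is a single constant vector $\bar f$ on that subtree; then one well-chosen profile with $\geq\sqrt n$ machines of type $L$ among $j>i$ forces $\bar f_i=0$ by the approximation bound, and constancy propagates this to every profile in the subtree. You should replace ``insensitive to $b_i$'' with ``constant on the subtree'' to make the argument watertight.
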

\begin{proof}
	Let us first prove part~\eqref{itm:lem:separ3}.
	Suppose that there is $i \leq n - \sqrt{n}+1$ such that at node $u_i$ $i$ diverges on $L$ and $\{M, H\}$.
	Consider the type profile $\x$ such that $x_i = M$, and $x_j = H$ for every $j \neq i$.
	Observe that $\x$ is compatible with node $u_i$.
	The optimal allocation for the type profile $\x$ assigns all jobs to machine $i$,
	with cost $OPT(\x) = m M$.
	Since $\M$ is $k$-approximate, then it also assigns all jobs to machine $i$.
	Indeed, if a job is assigned to a machine $j \neq i$, then the cost of the mechanism would be at least
	$H \geq \sqrt{n} \cdot m M > k \cdot OPT(\x)$, that contradicts the approximation bound.
	
	Consider now the profile $\y$ such that $y_i = L$, $y_j = H$ for every $j < i$, and $y_j = L$ for every $j > i$.
	Observe that also $\y$ is compatible with node $u_i$.
	It is not hard to see that $OPT(\y) \leq \left\lceil\frac{m}{n-i+1}\right\rceil \cdot L$.
	Since $\M$ is $k$-approximate, then it cannot assign all jobs to machine $i$.
	Indeed, in this case the cost of the mechanism contradicts the approximation bound, since it would be
	$m L \geq \sqrt{n} \left\lceil\frac{m}{n-i+1}\right\rceil L > k \cdot OPT(\y),$
	where we used that
	$\sqrt{n} \left\lceil\frac{m}{n-i+1}\right\rceil \leq \sqrt{n} \left\lceil\frac{m}{\sqrt{n}}\right\rceil = \sqrt{n} \left\lceil\frac{n}{\sqrt{n}}\right\rceil = \sqrt{n} \cdot \sqrt{n} = n = m$.
	
	Hence, we have that if $i$ takes actions compatible with $M$, then there exists a type profile compatible with $u_i$ such that $i$ receives $n$ jobs,
	whereas, if $i$ takes a different action compatible with a lower type, then there exists a type profile compatible with $u_i$ such that $i$ receives less than $n$ jobs.
	However, this contradicts the OSP CMON property.
	
	\smallskip
	Let us now prove part~\eqref{itm:lem:notb3}.
	Suppose that there is $i \leq n-\sqrt{n}$ and $\x_{-i}$ compatible with $u_i$ such that if $i$ takes an action compatible with type $H$,
	then $\M$ assigns at least a job to $i$.
	According to part~\eqref{itm:lem:separ3},
	machine $i$ diverges at node $u_i$ on $H$ and $M$.
	
	Consider then the profile $\y$ such that $y_i = M$, $y_j = H$ for $j < i$, and $y_j = L$ for $j > i$.
	It is easy to see that the optimal allocation has cost $OPT(\y)=\left\lceil\frac{m}{n-i}\right\rceil \cdot L$.
	Since $\M$ is $k$-approximate, then it does not assign any job to machine $i$.
	Otherwise, the mechanism contradicts the approximation bound since his cost would be at least
	$
	M \geq m L \geq \sqrt{n} \left\lceil\frac{m}{n-i}\right\rceil L > k \cdot OPT(\x),
	$
	where we used that
	$\sqrt{n} \left\lceil\frac{m}{n-i}\right\rceil \leq \sqrt{n} \left\lceil\frac{m}{\sqrt{n}}\right\rceil = \sqrt{n} \left\lceil\frac{n}{\sqrt{n}}\right\rceil = \sqrt{n} \cdot \sqrt{n} = n = m$.
	
	Hence, we have that if $i$ takes actions compatible with $H$, then there exists a type profile compatible with $u_i$ such that $i$ receives one job,
	whereas, if $i$ takes a different action compatible with a lower type, then there exists a type profile compatible with $u_i$ such that $i$ receives zero jobs.
	However, this contradicts the OSP CMON property.
\end{proof}

We are now ready to prove our lower bound.
\begin{proof}[Proof of Theorem~\ref{thm:lower}]
	Suppose that there is an OSP $k$-approximate mechanism $\M$ for some $k < \sqrt{n}$.
	Consider $\x$ such that $x_i = H$ for every $i$.
	Observe that $\x$ is compatible with $u_i$ for every $i$.
	The optimal allocation consists in assigning a job to each machine,
	and has cost $OPT(\x) = H$.
	
	According to Part~\eqref{itm:lem:notb3} of Lemma~\ref{lem:separ3}, if machines take actions compatible with $\x$,
	then the mechanism $\M$ does not assign any job to machine $i$, for every $i \leq n-\sqrt{n}$.
	Hence, the best outcome that $\M$ can return for $\x$ consists in
	assigning $\sqrt{n}$ jobs to each of the other $\sqrt{n}$ machines.
	Therefore, the cost of $\M$ is at least $\sqrt{n} H > k OPT(\x)$,
	which contradicts the approximation ratio of $\M$.
\end{proof}

The arguments  above can be used to prove that ascending and descending auctions do not help in this setting.
Specifically, they cannot return an approximation better than $n$ (see Appendix~\ref{apx:asc_desc_sched}). 

\smallskip \noindent{\bf Upper bound.} 
We describe our mechanisms for a generic, not necessarily three-value, domain, as this turns out to be useful in the analysis.
In what follows, the usual bold notation $\x$ denotes vectors of $n$ entries, while
a ``hat-bold'' notation $\hat{\x}$ denotes vectors of $\left\lceil\sqrt{n}\right\rceil$ entries only.

\smallskip \noindent \emph{A mechanism for many jobs (large $m$).}
\newcommand{\Mmany}{\M_{many}}
We now introduce mechanism $\Mmany$ whose approximation ratio approaches $\left\lceil\sqrt{n}\right\rceil$,
whenever $m \gg \left\lceil\sqrt{n}\right\rceil$.
The mechanism consists of a Descending Phase (Algorithm~\ref{descending}) followed by  an Ascending Phase (Algorithm~\ref{ascending}).
\begin{algorithm}[htbp]
	\DontPrintSemicolon
	Set $A = [n]$, and $t_i = \max \{d \in D_i\}$\; 
	\While{$|A| > \left\lceil\sqrt{n}\right\rceil$}{
		Set $p = \max_{a\in A} \{t_a\}$ and $i = \min\{a\in A \colon t_a = p\}$\;
		Ask machine $i$ if her type is equal to $p$\;
		\lIf{yes}{remove $i$ from $A$, and set $t_i = p$}
		\lElse{set $t_i = \max \{t \in D_i \colon t < p\}$}
	}
	\caption{Descending Phase (for both mechanisms $\Mmany$ and $\M_{few}$).}
	\label{descending}
\end{algorithm}
\begin{algorithm}[htbp]
	\DontPrintSemicolon
	Set $s_i = \min \left\{d \in D_i\right\}$\;
	\While{$|A| > 0$}{
		Set $p =\min_{a\in A} \{s_a\}$ and $i = \min\{a\in A \colon s_a = p\}$\;
		Ask machine $i$ if her type is equal to $p$\;
		\If{yes}{
			Consider the profile $\hat{\z}$ such that $\hat{z}_i = p$ and $\hat{z}_j = \min_{k \notin A} t_k$ for every $j \in A, j \neq i$ \nllabel{line:zeta}\;
			Let $f^\star(\hat{\z}) = \left(f^\star_i(\hat{\z})\right)_{i \in A}$ be the optimal assignment of jobs on input type profile $\hat{\z}$\;
			Assign $f^\star_j(\hat{\z})$ jobs to each machine $j \in A$\;
			Set $A = \emptyset$
		}\lElse{set $s_i = \min \{d \in D_i \colon d > p\}$}
	}
	\caption{Ascending Phase (for mechanism $\Mmany$).}
	\label{ascending}
\end{algorithm}

\begin{proposition}
	\label{prop:algo2mon}
	Mechanism $\Mmany$
	is OSP for any three-value domain $D_i = \{L_i,M_i,H_i\}$.
\end{proposition}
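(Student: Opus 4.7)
My plan is to apply Theorem~\ref{thm:2cycle3} --- applicable since each $D_i=\{L_i,M_i,H_i\}$ has size three --- and thereby reduce OSP-ness of $\Mmany$ to checking OSP two-cycle monotonicity on the graph $\ver$. In the single-parameter setting, this boils down to verifying that at every $ab$-separating node $u$ for every agent $i$, and every $(a,\bi),(b,\bi')$ compatible with $u$, one has $(a-b)\bigl(f_i(b,\bi')-f_i(a,\bi)\bigr)\geq 0$. Since every divergent node of $\Mmany$ is a yes/no query of Algorithm~\ref{descending} or Algorithm~\ref{ascending}, I would split the analysis by phase.

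The descending-phase case should be straightforward. At a node asking ``is your type $p$?'' to agent $i$, the yes branch confirms $t_i=p$, removes $i$ from $A$, and thus ultimately allocates $0$ jobs to her; the no branch leaves her in $A$, where she may be allocated a non-negative load. For any $a=p$ and any compatible $b<p$ this gives $f_i(a,\bi)=0\leq f_i(b,\bi')$, and two-cycle monotonicity follows at once.

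The ascending-phase case is the main obstacle. Here the query $p=s_i$ separates the yes branch ($t_i=p$), which triggers the allocation based on the fake profile $\hat{\z}$ with $\hat{z}_i=p$ and $\hat{z}_\ell=\mu:=\min_{k\notin A}t_k$ for $\ell\in A\setminus\{i\}$, from the no branch ($t_i>p$). Before concluding, I would establish a short preliminary observation: since descending-phase removals happen in non-increasing order of confirmed types, every $j$ surviving to $A$ satisfies $t_j\leq\mu$; furthermore, the ascending-phase queries are non-decreasing, so any subsequent triggering query $p_j$ satisfies $p\leq p_j\leq \mu$.

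Armed with these bounds, I would finish with a case analysis on who triggers the allocation along the no branch. If $i$ herself triggers later with her true type $b$, the new fake profile $\hat{\z}'$ differs from $\hat{\z}$ only in $i$'s coordinate, so the standard weak monotonicity of the minimum-makespan related-machines allocation (more work to a strictly faster machine when a single coordinate is lowered) gives $f^\star_i(\hat{\z})\geq f^\star_i(\hat{\z}')$. If some other $j$ triggers with query $p_j$, the fake profile $\hat{\z}''$ has $\hat{z}''_j=p_j$ and $\hat{z}''_\ell=\mu$ for $\ell\neq j$, in particular $\hat{z}''_i=\mu$. I would then chain a two-step swap argument: permuting $i$ and $j$ in $\hat{\z}''$ produces $\hat{\z}'''$ with $\hat{z}'''_i=p_j$ and the remaining coordinates equal to $\mu$, so $f^\star_i(\hat{\z}''')=f^\star_j(\hat{\z}'')$ by symmetry of the optimum; since $\hat{\z}$ and $\hat{\z}'''$ differ only in $i$'s coordinate and $p\leq p_j$, weak monotonicity yields $f^\star_i(\hat{\z})\geq f^\star_i(\hat{\z}''')$; finally, $p_j\leq\mu$ makes $j$ weakly faster than $i$ in $\hat{\z}''$, so $f^\star_j(\hat{\z}'')\geq f^\star_i(\hat{\z}'')$. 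Chaining these gives $f^\star_i(\hat{\z})\geq f^\star_i(\hat{\z}'')$, which combined with $a=p<b$ is precisely the two-cycle inequality.
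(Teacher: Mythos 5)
Your proof is correct and follows essentially the same route as the paper: reduce to OSP two-cycle monotonicity via Theorem~\ref{thm:2cycle3}, observe that a ``yes'' in the descending phase yields zero load, and in the ascending phase compare the triggered optimal assignments using monotonicity of $f^\star$ together with the fact that the fastest machine receives at least as many jobs. Your explicit bounds $p\leq p_j\leq\mu$ and the permutation step simply spell out what the paper's chain $f^\star_i(x_k,\hat{\z}_{-k})\leq f^\star_k(x_k,\hat{\z}_{-k})\leq f^\star_i(y_i,\hat{\z}_{-i})$ uses implicitly.
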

\begin{proof}
	We prove that $\Mmany$ satisfies the OSP two-cycle monotonicity.
	The claim then follows from Theorem~\ref{thm:2cycle3}.
	Specifically, for each machine $i$, for each node $u$ in which the mechanism makes a query to $i$,
	for each pair of type profiles $\x, \y$ compatible with $u$ such that $i$ diverges at $u$ between $x_i$ and $y_i$,
	we need to prove that if $x_i > y_i$, then $f_i(\Mmany(\x)) \leq f_i(\Mmany(\y))$.
	
	Let us first consider a node $u$ corresponding to the descending phase of the mechanism.
	In this case, $x_i = p$, where $p$ is as at node $u$.
	Moreover, in all profiles compatible with $u$ there are at least $\left\lceil\sqrt{n}\right\rceil$ machines that either have a type lower than $p$,
	or they have type $p$ but are queried after $i$.
	However, for every $\x_{-i}$ satisfying this property, we have that $f_i(\Mmany(\x)) = 0$, which implies that these two-cycles are non-negative.
	
	Suppose now that node $u$ corresponds to the ascending phase of the mechanism.
	In this case, $y_i = p$, where $p$ is as at node $u$.
	Observe that for every $\y_{-i}$ compatible with node $u$,
	$f_i(\Mmany(\y)) = f^\star_i(y_i, \hat{\z}_{-i})$, where $f^\star_i(y_i, \hat{\z}_{-i})$ is the number of jobs assigned to machine $i$
	by the optimal outcome on input profile $(y_i, \hat{\z)_{-i}}$,
	$\hat{\z}_{-i}$ being such that $\hat{z}_j = \max_{k \in A} t_k$ for every $j \in A \setminus \{i\}$.
	Observe that for every $\x$ compatible with $u$, it must be the case that $x_j \geq y_i$ for every $j \in A$.
	Hence, we can distinguish two cases:
	if $\min_j x_j = x_i$, then $f_i(\Mmany(\x)) = f^\star_i(x_i,\hat{\z}_{-i}) \leq f^\star_i(y_i,\hat{\z}_{-i}) = f_i(\Mmany(\y))$;
	if instead $\min_j x_j = x_k$, for some $k \neq i$,
	then $f_i(\Mmany(\x)) = f^\star_i(x_k,\hat{\z}_{-k}) \leq f^\star_k(x_k,\hat{\z}_{-k}) \leq f^\star_i(y_i,\hat{\z}_{-i}) = f_i(\Mmany(\y))$,
	where we used that $\hat{\z}_{-k} = \hat{\z}_{-i}$ and the inequalities follow since:
	(i) in the optimal outcome the fastest machine must receive at least as many jobs as slower machines;
	(ii) the optimal outcome is monotone, (i.e., given the speeds of other machines, the number of jobs assigned to machine $i$ decreases as its speeds decreases).
\end{proof}


\begin{proposition}
	\label{prop:algoapprox}
	Mechanism $\Mmany$ is  $O\left(\sqrt{n} + \frac{n}{m}\right)$-approximate. 
\end{proposition}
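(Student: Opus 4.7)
My plan is to first pin down the structure of what the mechanism outputs, then compare its makespan to $OPT$ via an LP-type lower bound and a two-case analysis. Write $\sigma = \lceil\sqrt{n}\rceil$, let $t^* = \min_i t_i$ be the fastest true type, and let $\mu = \min_{k\notin A} t_k$ be the smallest type recorded by the descending phase. Because that phase removes machines in non-increasing order of their current estimate, three structural facts are immediate: (a) every $a\in A$ has true type $t_a\leq \mu$; (b) every $j\notin A$ has $t_j\geq \mu$; and consequently (c) the globally fastest machine lies in $A$. From (c) the ascending phase must terminate with some machine $i\in A$ whose true type equals $t^*$, so the vector $\hat{\z}$ built on Line~\ref{line:zeta} satisfies $\hat{z}_i = t^*$ and $\hat{z}_j = \mu \geq t_j$ for every other $j \in A$. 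In particular, each true cost satisfies $t_j\,f^\star_j(\hat{\z}) \leq \hat{z}_j\,f^\star_j(\hat{\z})$, so the mechanism's makespan is at most $OPT(\hat{\z})$.

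The next step is to sandwich $OPT(\hat{\z})$ against $OPT(\tp)$. Since $\hat{\z}$ is a two-speed instance (one machine of speed $t^*$ and $\sigma - 1$ of speed $\mu$), two straightforward schedules --- all jobs on $i$, or spread evenly over the $\sigma$ slots --- give
\[
OPT(\hat{\z}) \leq \min\!\bigl(m\,t^*,\ \lceil m/\sigma\rceil\,\mu\bigr).
\]
For the lower bound, the standard fractional argument (any makespan $T$ must satisfy $\sum_i T/t_i \geq m$), together with (a) and (b), yields $OPT(\tp) \geq m/(\sigma/t^* + n/\mu) = m\,t^*\mu/(\sigma\mu + n\,t^*)$.

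A case split on which arm of the $\min$ is tighter now closes the argument. If $m\,t^* \leq \lceil m/\sigma\rceil\,\mu$, the condition itself forces $t^*/\mu \leq 1/\sigma + 1/m$, and plugging $OPT(\hat{\z}) \leq m\,t^*$ into the ratio gives $\sigma + n\,t^*/\mu \leq \sigma + n/\sigma + n/m = O(\sqrt{n} + n/m)$. Otherwise $\mu/t^* < \sigma$, and expanding the ratio with $OPT(\hat{\z}) \leq (m/\sigma + 1)\,\mu$ produces the terms $\mu/t^*$ and $\sigma\mu/(m\,t^*)$, which are absorbed by $\sigma$ and $\sigma^2/m = O(n/m)$ respectively, again giving $O(\sqrt{n} + n/m)$. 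The main technical obstacle is precisely the \emph{a priori} unbounded ratio $\mu/t^*$: the case split forces this ratio to be small whenever it would otherwise dominate, while the $+\mu$ integrality slack from the $\lceil\cdot\rceil$ is what generates the additive $n/m$ piece of the guarantee.
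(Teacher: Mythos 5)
Your argument is correct, and it is a genuinely different route to the same $O(\sqrt{n}+n/m)$ bound. The paper's proof (in the appendix) introduces a componentwise-dominated comparison profile $\y$ with $y_j = w$ for $j\in A$ and $y_j = t$ for $j\notin A$, and then bounds the ratio $OPT(\hat\z)/OPT(\y)$ exactly by $\frac{m+\lceil\sqrt{n}\rceil-1}{m}\lceil\sqrt{n}\rceil$ via a detailed case analysis on which machine class (the single fast one vs.\ the slow ones) determines each of the two optima, tracking the $\lceil\cdot\rceil$ slack throughout. You instead lower bound $OPT(\tp)$ by the fractional speed bound $m\big/\sum_i 1/t_i$ directly on the true types (using only the structural facts that $|A|=\lceil\sqrt{n}\rceil$ machines have type at most $\mu$ and the rest at least $\mu$), and you upper bound $OPT(\hat\z)$ by the smaller of two trivial schedules (all jobs on the fastest machine, or an even spread), splitting cases on which arm of the $\min$ is smaller. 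Your approach is more elementary and avoids the appendix's bookkeeping, at the price of a slightly looser implied constant; both establish the stated $O(\sqrt{n}+n/m)$. You also make explicit a fact the paper leaves implicit, namely that the accepted ascending price $w$ equals the globally minimum type $t^*$, because the descending phase leaves the fastest machine alive in $A$ (your facts (a)--(c)); this is a clean observation worth stating as you did.
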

\begin{proof}[Proof Sketch]
	We show that the mechanism returns an allocation which is $\left(\frac{m+\left\lceil\sqrt{n}\right\rceil-1}{m} \cdot \left\lceil\sqrt{n}\right\rceil\right)$-approx\-imate. 
	We denote with $OPT(\x)$ the makespan of the optimal assignment when machines types are as in the input profile $\x$.
	We will use the same notation both if the optimal assignment is computed on a set of $n$ machines
	and if it is computed and on a set of $\left\lceil\sqrt{n}\right\rceil$ machines, since these two cases
	can be distinguished through the input profile.
	
	Fix a type profile $\x$. Let $OPT(\x)$ be the makespan of the optimal assignment on type profile $\x$.
	Let $A$ be as at the beginning of the ascending phase.
	Let $w$ be the last query done in this phase and let $i$ be the last agent queried, i.e., the one that answered yes.
	Moreover, let $t = \min_{j \notin A} t_j$.
	It is immediate to see that $OPT(\x) \geq OPT(\y)$ where $\y$ is such that $y_j = w$ for every $j \in A$, and $y_j = t$, otherwise.
	Moreover, let $\M(\x)$ be the makespan of the assignment returned by our mechanism on the same input.
	Then, $\M(\x)$ is equivalent to $OPT(\hat{\z})$, where $\hat{\z}$ is such that $\hat{z}_i = w$ and $\hat{z}_j = t$ for each remaining machine $j$.
	Hence, the theorem follows by proving that $\frac{OPT(\hat{\z})}{OPT(\y)} \leq \frac{m+\left\lceil\sqrt{n}\right\rceil-1}{m} \cdot \left\lceil\sqrt{n}\right\rceil$. (The full proof is given in appendix.)
\end{proof}

The next corollary follows by simple algebraic manipulations.
\begin{corollary}
	\label{cor:algoapprox}
	Mechanism $\Mmany$ is $\left(\left\lceil\sqrt{n}\right\rceil+1\right)$-approximate for $m > \left\lceil\sqrt{n}\right\rceil^2$.
\end{corollary}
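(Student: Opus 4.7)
The plan is to derive the corollary purely by algebraic manipulation of the explicit approximation bound stated in the proof sketch of Proposition~\ref{prop:algoapprox}, namely that $\Mmany$ is $\left(\frac{m+\lceil\sqrt{n}\rceil-1}{m}\cdot \lceil\sqrt{n}\rceil\right)$-approximate. There is no additional mechanism-design content to verify here: once Proposition~\ref{prop:algoapprox} is granted, the corollary is a one-line inequality.

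Concretely, set $s=\lceil\sqrt{n}\rceil$ and rewrite
\[
\frac{m+s-1}{m}\cdot s \;=\; s + \frac{s(s-1)}{m}.
\]
To establish $(s+1)$-approximation it therefore suffices to show that $\frac{s(s-1)}{m}\leq 1$, i.e., $m\geq s(s-1)$. This is exactly where the hypothesis $m>s^2=\lceil\sqrt{n}\rceil^2$ kicks in: from $m>s^2$ we get $m\geq s^2+1>s^2-s=s(s-1)$, and hence
\[
\frac{s(s-1)}{m} \;<\; \frac{s(s-1)}{s^2} \;<\; 1,
\]
which gives
\[
\frac{m+s-1}{m}\cdot s \;<\; s+1,
\]
as required.

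There is no real obstacle in this step; the only thing to be careful about is the direction of the inequality in the hypothesis (strict $m>\lceil\sqrt{n}\rceil^2$), which guarantees the strict bound above and thus a clean $(\lceil\sqrt{n}\rceil+1)$-approximation (rather than $(\lceil\sqrt{n}\rceil+1+\varepsilon)$). The substance of the argument already lives in Proposition~\ref{prop:algoapprox}; the corollary is just the asymptotic simplification of that ratio in the regime where the number of jobs dominates the number of machines.
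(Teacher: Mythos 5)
Your proposal is correct and is exactly the ``simple algebraic manipulation'' the paper alludes to: it takes the bound $\frac{m+\lceil\sqrt{n}\rceil-1}{m}\cdot\lceil\sqrt{n}\rceil$ from Proposition~\ref{prop:algoapprox}, rewrites it as $\lceil\sqrt{n}\rceil+\frac{\lceil\sqrt{n}\rceil(\lceil\sqrt{n}\rceil-1)}{m}$, and uses $m>\lceil\sqrt{n}\rceil^2$ to bound the second term by $1$. Nothing is missing.
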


%

\noindent \emph{A mechanism for few jobs (small $m$).}
\renewcommand{\A}{\M_{few}}
We now introduce a mechanism $\A$ which is OSP and $\left\lceil\sqrt{n}\right\rceil$-approximate whenever $m\leq \left\lceil\sqrt{n}\right\rceil^2$. 
Like $\Mmany$, also $\A$ consists of a descending phase followed by an ascending phase. The descending phase is exactly the same (Algorithm~\ref{descending}) with the difference that the ascending phase (Algorithm~\ref{ascending2}) does not need the information on the type of the machines that are not in $A$ at that point.

\begin{algorithm}[htbp]
	\DontPrintSemicolon
	Set $t_a = \min_i \left\{d \in D_i\right\}$ and $C = m$\;
	\While{$|A| > 0$}{
		Set $q =\min_{a\in A} \{t_a\}$ and $i = \min\{a\in A \colon t_a = q\}$\;
		Ask machine $i$ if her type is $q$\;
		\If{yes}{
			Let $\zeta = \left\lceil\frac{C}{|A|}\right\rceil$\;
			Let $z$ be the largest integer in $\left[\zeta, C\right]$ such that $z\cdot q \leq \left\lceil\sqrt{n}\right\rceil \cdot p$ \label{line:z}\; 
			Assign $z$ jobs to $i$ and set $C = C - z$\;
			Remove $i$ from $A$
		}\lElse{set $t_i = \min \{d \in D_i \colon d > p\}$}
	}
	\caption{Ascending Phase (for mechanism $\A$).}
	\label{ascending2}
\end{algorithm}

We first show that mechanism $\A$ is well defined under our assumption on the number of jobs.
\begin{lemma}
	If $m \leq \left\lceil\sqrt{n}\right\rceil^2$ then there exists a $z$ in line \ref{line:z} of Algorithm \ref{ascending2}.
\end{lemma}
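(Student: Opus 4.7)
The goal is to exhibit a valid value of $z$ in line~\ref{line:z} of Algorithm~\ref{ascending2} at every iteration of the ascending phase. Let $p$ be the final value of the variable $p$ inherited from the descending phase (Algorithm~\ref{descending}); this is the smallest price used during the descending phase. Since a machine is kept in $A$ exactly when it answered ``no'' to every descending query, every machine $i \in A$ at the start of the ascending phase satisfies $t_i \leq p$; in particular, the current query value $q = \min_{a \in A} t_a$ satisfies $q \leq p$ throughout the ascending phase. Hence $\lceil \sqrt{n} \rceil \cdot p \geq \lceil \sqrt{n} \rceil \cdot q$, and so \emph{any} integer $z$ with $z \leq \lceil \sqrt{n} \rceil$ automatically fulfils the bound $z \cdot q \leq \lceil \sqrt{n} \rceil \cdot p$.

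The crux is thus to show that $\zeta = \lceil C / |A| \rceil \leq \lceil \sqrt{n} \rceil$ at every iteration (so that, in particular, $z = \zeta$ is a legal choice, giving a non-empty interval $[\zeta, C]$ of admissible $z$-values). I will prove this by induction on the iteration index. For the base case, at the beginning of the ascending phase we have $|A| = \lceil \sqrt{n} \rceil$ and $C = m$, so using the hypothesis $m \leq \lceil \sqrt{n} \rceil^2$,
\[
\zeta \;=\; \left\lceil \tfrac{m}{\lceil \sqrt{n} \rceil} \right\rceil \;\leq\; \left\lceil \tfrac{\lceil \sqrt{n} \rceil^2}{\lceil \sqrt{n} \rceil} \right\rceil \;=\; \lceil \sqrt{n} \rceil.
\]

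For the inductive step I will show that $\zeta$ does not increase from one iteration to the next. Suppose at some iteration we have $|A|, C, \zeta$, and we assign $z \geq \zeta$ jobs, leaving $|A'| = |A|-1$ and $C' = C - z$. Since $\zeta |A| \geq C$ by definition of the ceiling and $z \geq \zeta$, we have
\[
\zeta(|A|-1) \;=\; \zeta|A| - \zeta \;\geq\; C - z \;=\; C',
\]
so $C'/|A'| \leq \zeta$, and because $\zeta$ is an integer this yields $\zeta' = \lceil C'/|A'| \rceil \leq \zeta$. Combining with the base case, $\zeta \leq \lceil \sqrt{n} \rceil$ throughout, which (together with the trivial $\zeta \leq C$ when $|A| \geq 1$) makes $z := \zeta$ a valid witness in line~\ref{line:z}.

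\textbf{Main obstacle.} Morally the argument is straightforward, but the subtle point is the monotonicity of $\zeta$: the algorithm takes the \emph{largest} admissible $z$ (not $\zeta$ itself), and I need this freedom not to spoil the invariant for future iterations. The observation above -- that $\zeta |A| \geq C$ and $z \geq \zeta$ together force $C' \leq \zeta(|A|-1)$ regardless of how large $z$ is -- is precisely what saves the induction, and is the only non-trivial step in the proof.
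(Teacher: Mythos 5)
Your proof is correct, and it reaches the same invariant the paper establishes (that $\zeta \leq \lceil \sqrt{n} \rceil$ at every iteration of the ascending phase, so $z=\zeta$ is always admissible since $q \leq p$). Where you diverge is in the inductive step: the paper argues globally, claiming that after $|Q|$ machines have revealed, at least some cumulative count of jobs has already been assigned, and then deduces a bound on the residual $C$. Your approach instead proves a purely local fact — that $\zeta$ is non-increasing from one iteration to the next, via the short chain $\zeta(|A|-1) = \zeta|A| - \zeta \geq C - z = C'$, using only $\zeta|A| \geq C$ and $z \geq \zeta$. This is cleaner: it entirely sidesteps the bookkeeping with floors and remainders, and it makes transparent why taking $z$ larger than $\zeta$ can only help (which in the paper's phrasing requires one extra sentence to justify). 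In fact, the paper's cumulative formula $m' = \bigl\lfloor |Q|m/|A| \bigr\rfloor + \min\{|Q|, m \bmod |A|\}$ appears to overstate the lower bound in some cases (e.g.\ $m=10$, $|A|=4$, $|Q|=2$ gives $m'=7$ whereas two minimal assignments of $\zeta$ jobs yield only $6$; the intended expression is presumably $|Q|\lfloor m/|A|\rfloor + \min\{|Q|, m \bmod |A|\}$), so your monotonicity argument is also slightly more robust. Both routes yield the lemma; yours is more elementary and easier to verify.
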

\begin{proof}
	We next show that it never occurs during the ascending phase that $\zeta \cdot q > \left\lceil\sqrt{n}\right\rceil \cdot p$.
	Indeed, for the first machine to reveal the type during the ascending phase, we have that $|P_0| = m \leq \left\lceil\sqrt{n}\right\rceil^2$, and, thus $\zeta \leq \left\lceil\sqrt{n}\right\rceil$. Hence, $\zeta \cdot q \leq \left\lceil\sqrt{n}\right\rceil \cdot p$ since $q \leq p$.
	If a set $Q \subset A$ of machines has previously revealed the type during the ascending phase, and the execution of this phase has not been stopped,
	then these machines received at least $m' = \left\lfloor\frac{|Q| m}{|A|}\right\rfloor + \min\{|Q|, m \mod |A|\}$ jobs.
	Then $|P_0| = m - m' \leq (|A| - |Q|)\left\lceil\sqrt{n}\right\rceil$, and thus $\zeta \leq \left\lceil\sqrt{n}\right\rceil$, and,
	since $q \leq p$, $\zeta \cdot q \leq \left\lceil\sqrt{n}\right\rceil \cdot p$.
	%
\end{proof}

\begin{proposition}
	\label{prop:algo2mon2}
	Mechanism $\A$ 
	is OSP for three-value domains $D_i = \{L_i,M_i,H_i\}$.
\end{proposition}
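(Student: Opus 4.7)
The plan is to mirror the proof of Proposition~\ref{prop:algo2mon}: invoke Theorem~\ref{thm:2cycle3} to reduce OSP to OSP two-cycle monotonicity, and then verify the latter by a case split on whether the divergence node $u$ lies in the descending or in the ascending phase of $\A$. Concretely, for every agent $i$, every node $u$ labeled by $i$, and every pair of profiles $\x,\y$ compatible with $u$ and diverging there with $x_i > y_i$, I need to check that $f_i(\A(\x)) \leq f_i(\A(\y))$.

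For $u$ in the descending phase, the argument is identical to that of Proposition~\ref{prop:algo2mon}. The query at $u$ is ``is your type $p$?'' with $p = \max_{a \in A} t_a$, so divergence forces $x_i = p$ (yes) and $y_i < p$ (no). The descending phase removes $i$ from $A$ as soon as she says yes, and Algorithm~\ref{ascending2} assigns jobs only to machines still in $A$ when the ascending phase begins; hence $f_i(\A(\x)) = 0 \leq f_i(\A(\y))$.

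The more delicate case is $u$ in the ascending phase. Here the query is ``is your type $q$?'' with $q = \min_{a \in A} t_a$, so $y_i = q$ (yes) and $x_i > q$ (no). Under $\y$, the algorithm assigns $z_y$ jobs to $i$ at $u$; by the preceding lemma the constraint $\zeta\cdot q \leq \lceil\sqrt{n}\rceil\cdot p$ holds throughout the ascending phase, so the algorithm's choice can be written in closed form as
\[
z_y \;=\; \min\!\Bigl(C_y,\; \bigl\lfloor \lceil\sqrt{n}\rceil \cdot p / q\bigr\rfloor\Bigr),
\]
where $C_y$ is the residual job counter at $u$ and $p$ is fixed once and for all at the end of the descending phase. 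Under $\x$, machine $i$ answers no at $u$, so her level is raised above $q$ and the execution continues; since $i$ leaves $A$ only by answering yes, she must eventually answer yes at some later node $u'$ at a price $q' > q$, and the exact same closed form gives $z_x = \min(C_x,\lfloor \lceil\sqrt{n}\rceil \cdot p / q'\rfloor)$. Regardless of what the other agents do between $u$ and $u'$, one has $C_x \leq C_y$ (jobs can only be assigned, never returned), $p$ is unchanged, and $q' > q$, so $z_x \leq z_y$ as required.

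The main obstacle I expect is not the algebra in the closed-form step, but the bookkeeping needed to isolate the three monotonicity invariants of the ascending phase -- $p$ constant, $C$ nonincreasing, $i$'s queried price nondecreasing -- and to argue that they hold for the arbitrary execution path induced by $\x$ after the divergence at $u$, irrespective of the other agents' answers. Once this is in place, the comparison $z_x \leq z_y$ follows from the (obvious) monotonicity of $\min(C,\lfloor \lceil\sqrt{n}\rceil p/q\rfloor)$ in $C$ and in $1/q$.
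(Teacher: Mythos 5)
Your proposal is correct and takes essentially the same route as the paper's proof: reduce OSP to two-cycle monotonicity via Theorem~\ref{thm:2cycle3}, note that a ``yes'' during the descending phase removes $i$ from $A$ and yields $f_i(\A(\x))=0$, and in the ascending phase compare the assigned loads through the constraint $z\cdot q \leq \lceil\sqrt{n}\rceil\cdot p$ with $p$ fixed and the residual counter $C$ non-increasing. Your explicit bookkeeping of the invariants is simply a more detailed rendering of the paper's one-line claim that $f_i(\A(\x))$ is at most the largest $z'\leq C(u)$ with $z'\cdot x_i\leq\lceil\sqrt{n}\rceil\cdot p(u)$.
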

\begin{proof}
	We prove that $\Mmany$ satisfies the OSP two-cycle monotonicity.
	The claim then follows from Theorem~\ref{thm:2cycle3}.
	Specifically,
	for each machine $i$, for each node $u$ in which the mechanism makes a query to $i$,
	for each pair of type profiles $\x, \y$ compatible with $u$ such that $x_i$ and $y_i$ diverge at $u$,
	we need to prove that if $x_i > y_i$, then $f_i(\A(\x)) \leq f_i(\A(\y))$.
	
	Let us first consider a node $u$ corresponding to the descending phase of the mechanism.
	In this case, $x_i = p$, where $p$ is as at node $u$.
	Moreover, in all profiles compatible with $u$ there are at least $\left\lceil\sqrt{n}\right\rceil$ machines that either have a type lower than $p$,
	or they have type $p$ but they are queried after $i$.
	Hence, for every $\x_{-i}$ satisfying this property, we have that $f_i(\A(\x)) = 0$, that implies the claim.
	
	Suppose now that node $u$ corresponds to the ascending phase of the mechanism.
	Let $C(u)$, $A(u)$, $p(u)$ and $q(u)$ be the value of $C$, $A$, $p$ and $q$ at that node.
	Observe that for every profile compatible with $u$,
	the type of machines not in $A(u)$ is fixed,
	whereas for every machine in $A(u)$, the type is at least $q(u)$.
	Moreover,
	$y_i = q(u)$.
	Hence, $f_i(\A(\y))$ is the largest integer $z \leq C(u)$ such that $z\cdot y_i \leq \left\lceil\sqrt{n}\right\rceil \cdot p(u)$.
	On the other side, for every $x_i > y_i$, $f_i(\A(\x))$ is at most the largest integer $z' \leq C(u)$
	such that $z'\cdot x_i \leq \left\lceil\sqrt{n}\right\rceil \cdot p(u)$.
	Since $x_i > y_i$, then $z' \leq z$, and the lemma follows.
\end{proof}


\begin{proposition}
	\label{prop:algoapprox2}
	Mechanism $\A$ 
	is $\left\lceil\sqrt{n}\right\rceil$-approximate.
\end{proposition}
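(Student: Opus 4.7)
The plan is to bound the makespan of $\A$ by $\lceil\sqrt{n}\rceil\cdot p$, where $p$ denotes the final price reached at the end of the descending phase, and then split into two cases depending on whether $OPT(\x)\ge p$ or $OPT(\x)<p$. I would begin by recording two structural facts about the state at the end of the descending phase. First, every machine $j\notin A$ has true type $x_j\ge p$ (removed machines say ``yes'' at a non-increasing sequence of prices ending at $p$). Second, every $a\in A$ has $x_a\le p$: throughout the descending phase $t_a\ge x_a$, and at termination $\max_{a\in A}t_a\le p$, otherwise the loop would have continued. Together with the defining constraint $z\cdot q \le \lceil\sqrt{n}\rceil\cdot p$ of line~\ref{line:z}, this yields immediately $\A(\x)\le \lceil\sqrt{n}\rceil\cdot p$: any machine $i$ assigned $z_i$ jobs pays a load $z_iq_i\le \lceil\sqrt{n}\rceil p$, and machines outside $A$ receive zero jobs.

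\emph{Case A: $OPT(\x)\ge p$.} The bound $\A(\x)\le \lceil\sqrt{n}\rceil p\le \lceil\sqrt{n}\rceil\cdot OPT(\x)$ is immediate.

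\emph{Case B: $OPT(\x)<p$.} No optimal schedule can assign any job to a machine $j\notin A$, because a single job on such a machine contributes a load of $x_j\ge p$ and contradicts $OPT(\x)<p$. So $OPT$ must distribute all $m$ jobs inside $A$, and the water-filling LP relaxation then gives
\[
OPT(\x)\ \ge\ \frac{m}{\sum_{a\in A}1/x_a}\ \ge\ \frac{m\,q_1}{|A|}\ =\ \frac{m\,q_1}{\lceil\sqrt{n}\rceil},
\]
where $q_1:=\min_{a\in A}x_a$ and I use $1/x_a\le 1/q_1$ term by term.

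Combining $OPT(\x)<p$ with this LP bound forces $m\,q_1<\lceil\sqrt{n}\rceil\,p$, hence $\lfloor\lceil\sqrt{n}\rceil p/q_1\rfloor\ge m$. A straightforward invariant shows that in the ascending phase machines say ``yes'' in non-decreasing order of their true types, so the first to say ``yes'' is the machine $i_1$ with true type $q_1$; the algorithm then sets $z_{i_1}=\min(m,\lfloor\lceil\sqrt{n}\rceil p/q_1\rfloor)=m$. Consequently, in Case B the mechanism collapses to placing \emph{every} job on the single fastest machine of $A$, the makespan is exactly $m\,q_1$, and the ratio is at most $m\,q_1/(m\,q_1/\lceil\sqrt{n}\rceil)=\lceil\sqrt{n}\rceil$.

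The main delicate point I anticipate is the dovetail in Case B: the LP lower bound $OPT(\x)\ge m\,q_1/\lceil\sqrt{n}\rceil$ together with the hypothesis $OPT(\x)<p$ jointly force the algorithmic constraint $z\cdot q\le \lceil\sqrt{n}\rceil p$ to be non-binding for the first query, so that $\A$ effectively reduces to a single-machine schedule and the two bounds close the approximation gap exactly. Once this is in place, the remaining steps---the structural bookkeeping around $p$ and $A$, and the water-filling inequality---are routine.
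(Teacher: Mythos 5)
Your proof is correct, but it takes a genuinely different route from the paper's. The paper splits on whether the optimum assigns a job to a machine removed during the descending phase; the easy sub-case gives $OPT(\x)\ge p$ at once, while the harder sub-case (optimum confined to $A$) proceeds by a machine-by-machine comparison of the job counts assigned by $\A$ and by $\OPT$, exploiting monotonicity of the optimal allocation along the ascending order of revelation and the inequality $x_i\cdot f_i(\OPT(\x))\le |B|\cdot p$. You instead split on whether $OPT(\x)\ge p$, having first isolated the single key structural fact that the line-\ref{line:z} constraint gives the universal bound $\A(\x)\le\lceil\sqrt n\rceil\cdot p$ (the structural observations about $A$ at the end of the descending phase and the well-definedness lemma guarantee all $m$ jobs are placed, so this is a genuine makespan bound). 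Your Case A ($OPT\ge p$) then covers both the paper's easy sub-case and a sizable slice of its harder one for free. Your Case B is strictly narrower than the paper's harder case, which is exactly what lets you close it with a standard LP water-filling bound: $OPT(\x)\ge mq_1/\lceil\sqrt n\rceil$ combined with $OPT(\x)<p$ forces the line-\ref{line:z} constraint to be slack for the first ascending query, so $\A$ degenerates to a single-machine schedule of makespan exactly $mq_1$, and the LP bound closes the gap. The tradeoff is that the paper's combinatorial bookkeeping exposes more of the mechanism's internal behavior, whereas your dichotomy substitutes one clean fractional relaxation argument for it and is considerably easier to verify.
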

\begin{proof}
	Let $\OPT(\x)$ be the optimal outcome on a type profile $\x$.
	Let $A(\A)$ be the set of alive machines at the end of the descending phase of mechanism $\A$.
	
	We first consider the case that the optimal mechanism assigns jobs to only the $\left\lceil\sqrt{n}\right\rceil$ machines with lower type.
	In particular, let $B$ be the set of machines that received a job in the optimal allocation,
	i.e., $B = \left\{i \colon f_i(\OPT(\x)) > 0 \right\}$.
	Note that we can assume that the machines that receive a job are in $A(\A)$,
	and they are exactly the first $|B|$ machines to be removed from this set during the descending phase of our mechanism.
	We can also assume that assignment to these $|B|$ machines is monotone non-increasing,
	that is $f_i(\OPT(\x)) \leq f_j(\OPT(\x))$ if $j$ reveals her type before $i$.
	Indeed, these properties are satisfied by opportunely breaking ties among optimal outcomes.
	
	Observe that $f_i(\OPT(\x)) \cdot x_i \leq |B| \cdot p$,
	otherwise a lower makespan would be achieved by moving a job from every machine in $B$ to the one with type $p$,
	where $p$ is as at the end of the descending phase.
	However, this implies that the mechanism $\A$ also assigns at least $\min\left\{f_i(\OPT(\x)), |P_0|_i\right\}$ jobs to machine $i \in B$,
	where $|P_0|_i$ is the number of jobs that mechanism $\A$ has not assigned yet when the type of $i$ is revealed.
	
	Now, for every $i \in B$, let $B_i$ be the set of machines that revealed her type before $i$.
	Observe that, since the optimal outcome is monotone with respect to the order of revelation,
	then it must assign to $i$ at least $\frac{m-\sum_{j \in B_i} f_j(\OPT(\x))}{|B|-|B_i|}$ jobs.
	On the other side, our mechanism assigns to $i$ at most $m - \sum_{j \in B_i} f_j(\A(\x)) \leq m - \sum_{j \in B_i} f_j(\OPT(\x))$,
	where the inequality follows from the observation above.
	Hence, the approximation ratio is at most $|B|-|B_i|\leq |B| \leq \left\lceil\sqrt{n}\right\rceil$.
	
	Suppose now that optimal mechanism assigns jobs also to a machine that $\A$ remove during the descending phase.
	Observe that, since by hypothesis, $\A$ returns an outcome, then during the execution of the mechanism it always hold that $\zeta \cdot q \leq \left\lceil\sqrt{n}\right\rceil \cdot p$.
	Since $\OPT(\x)$ assigns at least one job to machines not in $A(\x)$
	In this case we have that $\max_i f_i(\OPT(\x)) \geq p$.
	By hypothesis, instead, we have that $f_i(\A(\x)) \leq \left\lceil\sqrt{n}\right\rceil \cdot p$ for every machine $i \in A(\x)$ that receives at least one job.
	The bound then immediately follows.
\end{proof}

\section{Set systems}
\label{sec:set_sys}
  


In this section we characterize when optimal OSP mechanisms exist for set systems. We will formally define the concept of alignment introduced above, with a different and more technical terminology. The main message is that the feasibility of optimal OSP mechanisms depends on structural properties of the feasible solutions \emph{and} the values in the agents' domains.

\subsection{Key concepts}
\newcommand{\D}{\textsf{\textbf{D}}}
\newcommand{\tD}{\tilde{\D}}
\newcommand{\td}{\tilde{D}}
\newcommand{\ttd}{\dbtilde{D}}
Consider a set system problem $(E, {\mathcal{F}}, \D)$ where
there are $|E|$ elements controlled by the agents, having three-value domains. Specifically, each $e\in E$ is an agent whose domain is $D_e\subseteq \{L_e, M_e, H_e\}$ with  $L_e < M_e < H_e$;  $\mathcal{F}$ is the set of feasible solutions $P \subseteq E$, and $\D=(D_e)_{e \in E}$ denotes the domain. We assume that this triple is the input to our mechanism design problem; the implementation tree will query the agents in order to elicit the input instance $\b$ from $\D$ and return an optimal solution for $\b$ whilst guaranteeing OSP. 

We next define some useful concepts and notation, to state our characterization and mechanism. 
Consider an arbitrary \emph{subdomain} $\tD$ of $\D$, that is, a type domain $\tD = ({\td}_e)_{e \in E}$ such that $\td_e \subseteq D_e$ for all $e\in E$. We denote by $L(e,\tD)=\min\{t \in \td_e\}$ and $H(e,\tD)=\max\{t \in \td_e\}$ the lowest  and the highest type for $e$ according to the subdomain $\tD$. Similarly, for any $P\subseteq E$, we let  $L(P, \tD)$ and $H(P, \tD)$
be the lowest and the highest possible cost of $P$ according to subdomain $\tD$, i.e.,  
\begin{align*}
L(P, \tD) = \sum_{e \in P} L(e, \tD) && \text{ and }
&& H(P, \tD) = \sum_{e \in P} H(e,\tD). 
\end{align*}
(When clear from the context, we omit the reference to 
$\tD$ in these notations.)
Finally, we let $\prec$ denote a total order among the feasible solutions in $\mathcal{F}$; this order will be used to select the optimal solution to return in case of ties. 

We now introduce some concepts, along with some properties of theirs, that relate  implementation trees of an extensive-form mechanism with the optimality of the solutions. Any implementation tree gradually shrinks $\D$ to subdomains $\tD$ by querying the agents. 
%
The concept of selectable solution for a certain subdomain $\tD$ captures the intuition that if the implementation tree has already shrunk the domain to $\tD$, the solution in question cannot be excluded a priori if we care about optimality (because, for some profile in $\tD$, it is either the unique optimum or the optimum preferred according to tie-breaking rule). 
\begin{definition}[selectable solution]\label{def:sel}
	A feasible solution $P\in \mathcal{F}$ is said \emph{selectable} for a subdomain $\tD$ if for every other $P' \in \mathcal{F}$ it holds that
	$$L(P \setminus P', \tD) < H(P' \setminus P, \tD)$$ 
	or
	$$L(P \setminus P', \tD) = H(P' \setminus P, \tD) \textrm { and } P \prec P'.$$
\end{definition}
Observe  that there is at least one selectable solution for every subdomain $\tD$. 

Since the implementation tree may shrink the subdomain $\tD$ even further, it is useful to relate the optimality for the bigger domain with the optimality for smaller domains. The next lemma shows that an optimal mechanism can ``forget'' about solutions that are 
\emph{not} selectable for the current subdomain $\tD$. 
\begin{lemma}\label{le:inheritance}
	If $P$ is not selectable for $\tD$, then it is not selectable for every subdomain $\tD'$ of $\tD$, that is, for every $\tD'$ such that  $\td'_e \subseteq \td_e$ for every $e \in E$. 
\end{lemma}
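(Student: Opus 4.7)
The plan is to prove the contrapositive-style transfer by exhibiting the same witness: if some $P' \in \mathcal{F}$ witnesses that $P$ is not selectable for $\tD$, I will show that the very same $P'$ witnesses that $P$ is not selectable for every subdomain $\tD' \subseteq \tD$. The whole argument rests on the monotonicity of $L(\cdot,\cdot)$ and $H(\cdot,\cdot)$ under restriction of the domain.

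First I would observe the elementary monotonicity fact: since $\td'_e \subseteq \td_e$ for every $e$, taking the minimum over a smaller set can only grow, so $L(e,\tD') \geq L(e,\tD)$, and taking the maximum over a smaller set can only shrink, so $H(e,\tD') \leq H(e,\tD)$. Summing these element-wise inequalities over any fixed $Q \subseteq E$ gives $L(Q,\tD') \geq L(Q,\tD)$ and $H(Q,\tD') \leq H(Q,\tD)$.

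Next, by the definition of non-selectability for $\tD$, there exists $P' \in \mathcal{F}$, $P' \neq P$, such that the disjunction in Definition~\ref{def:sel} fails, i.e., either
\[
L(P \setminus P',\tD) > H(P' \setminus P,\tD),
\]
or $L(P \setminus P',\tD) = H(P' \setminus P,\tD)$ with $P' \prec P$. Applying the monotonicity observation to $Q = P\setminus P'$ (for $L$) and $Q = P'\setminus P$ (for $H$) yields the chain
\[
L(P\setminus P',\tD') \;\geq\; L(P\setminus P',\tD) \;\geq\; H(P'\setminus P,\tD) \;\geq\; H(P'\setminus P,\tD').
\]
In the first case (strict inequality in the middle), strictness propagates and I obtain $L(P\setminus P',\tD') > H(P'\setminus P,\tD')$, so $P'$ witnesses that $P$ is not selectable for $\tD'$. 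In the second case (equality with $P' \prec P$), the chain gives $L(P\setminus P',\tD') \geq H(P'\setminus P,\tD')$; if it is strict, the first bullet of non-selectability holds in $\tD'$, otherwise equality holds in $\tD'$ and the tie-break $P' \prec P$ persists since $\prec$ does not depend on the domain. Either way, $P$ is not selectable for $\tD'$.

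There is no real obstacle here beyond bookkeeping: the only subtle point is making sure the tie-breaking case does not get lost in the chain of inequalities, which is handled by noting that the order $\prec$ is intrinsic to $\mathcal{F}$ and thus unchanged when we restrict the domain.
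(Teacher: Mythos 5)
Your proof is correct and rests on the same core observation as the paper's (monotonicity of $L(\cdot,\cdot)$ and $H(\cdot,\cdot)$ under domain restriction, feeding into the three-term chain of inequalities). The only cosmetic difference is that you argue directly from a witness $P'$ for non-selectability in $\tD$, whereas the paper argues by contradiction from assumed selectability in $\tD'$ -- logically the same contrapositive.
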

\begin{proof}
	By contradiction, assume that $P$ is selectable for $\tD'$ and not selectable for $\tD$. 
	Since $\td_e' \subseteq \td_e$ for every $e \in E$, we have $L(e, \tD') \geq L(e, \tD)$ and $H(e, \tD') \leq H(e, \tD)$. Since $P$ is selectable for $\tD'$, we have that for every other $P' \in \mathcal{F}$
	\begin{equation*}
	L(P \setminus P', \tD) \leq L(P \setminus P', \tD') \leq H(P' \setminus P, \tD') \leq H(P' \setminus P, \tD)\ , 
	\end{equation*}
	where the second inequality is either strict, or it holds with equality together with $P \prec P'$ (cf. Definition~\ref{def:sel}). This means that $P$ is selectable for $\tD$, thus a contradiction. 
\end{proof}

While the above concepts refer only to implementation tree and optimality, the next concept will turn out to be useful to study when there is a way to shrink $\D$ (i.e., an implementation tree) that returns an optimal solution but also that is compatible with OSP. 
\begin{definition}[strongly selectable solution]\label{def:strsel}
	A selectable solution $P$ is said \emph{strongly selectable} for a subdomain $\tD$ if, for all $e \in P$, it continues to be selectable even for the subdomain $(\tD_{-e}, H(e, \tD))$, where $\tD_{-f}=(\tilde{D}_e)_{e\neq f}$ and, with a slight abuse of notation,  $H(e, \tD)$ denotes $\{H(e, \tD)\}$. 
\end{definition}
In words, this means that 
solution $P$ is still potentially optimum 
when any one of its elements has the largest possible cost $H(e, \tD)$ in the subdomain $\tD$ under consideration.  That is, should any of the elements of a strongly selectable solution reveal to the mechanism type $H(e, \tD)$, then optimality implies that we cannot ``forget'' about this solution. 
\begin{example}[path auctions, selectable vs strongly selectable solutions]
	To understand the difference between selectable and strongly selectable solutions, consider path auction on the graph in Figure~\ref{graph21} for $\td_e=\{L,H\}$, with $2L < H$, for all $e$. Both solutions are selectable for this subdomain $\tD$: the bottom path because $L <2 H$ and the top path since $2L<H$. 
	However, only the bottom path is strongly selectable since the top path cannot be the optimum as soon as one of its edges has cost $H$. On the contrary, for the graph in Figure \ref{graph22}, both paths are strongly selectable since, for either path, we can always set the cost of the alternative solution to $2H > H+L$. 
\end{example}

\subsection{Necessary condition}
Our next two lemmas identify \emph{necessary} conditions for the implementation tree of an optimal OSP mechanism for set systems. The first roughly says that if there exist a subdomain where elements of strongly selectable solutions can be excluded when they reveal their type to be as low as possible, then there is no implementation tree which yields an optimal OSP mechanism.

\begin{lemma}
	\label{lem:bin2ter_cond1}
	There is no optimal OSP mechanism for a set system problem
	if there is a subdomain $\tD$ of $\D$   such that the following properties are both satisfied:
	\begin{itemize}
		\item[(i)] the set $\cal S$ of strongly selectable solutions for $\tilde{\D}$ contains at least one $P$ 
		with $f \in P$ such that $|\tilde{D}_f| > 1$;
		\item[(ii)] for every $P \in \cal S$ and every $f \in P$ such that $|\tilde{D}_f| > 1$,
		there is $\notP_f \in \mathcal{S}$ with $f \not\in \notP_f$ such that $\notP_f$ remains selectable even for $(\tD_{-f}, L(f,\tilde{\D}))$.
	\end{itemize}
\end{lemma}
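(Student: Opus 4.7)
The plan is to argue by contradiction: assume $\M$ is an optimal OSP mechanism with implementation tree $\T$, and exhibit a pair of profiles that violates OSP two-cycle monotonicity, which by Theorems~\ref{thm:cmon} and~\ref{thm:2cycle3} suffices for the contradiction.

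Step 1 (Canonical witnesses). First I would use (i) to pick $P \in \mathcal{S}$ and $f \in P$ with $|\td_f|>1$, so that $L(f,\tD) < H(f,\tD)$. Expanding Definition~\ref{def:strsel}, a direct calculation shows that the \emph{canonical} profile $\x$ defined by $x_f = H(f,\tD)$, $x_e = L(e,\tD)$ for $e\in P\setminus\{f\}$, and $x_e = H(e,\tD)$ for $e\notin P$ makes $P$ the (tie-broken) optimum, so optimality of $\M$ forces $\M(\x)=P$ and $f_f(\x)=1$. By~(ii), pick $\notP_f \in \mathcal{S}$ with $f\notin \notP_f$ selectable for $(\tD_{-f},L(f,\tD))$; the analogous canonical profile $\y$ defined by $y_f=L(f,\tD)$, $y_e=L(e,\tD)$ for $e\in \notP_f$, and $y_e=H(e,\tD)$ for $e\notin \notP_f\cup\{f\}$ satisfies $\M(\y)=\notP_f$ and $f_f(\y)=0$.

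Step 2 (Finding the separating node). Next I would trace the executions of $\x$ and $\y$ in $\T$ and let $v$ be the first node at which their paths diverge (which exists since $\M(\x)\neq \M(\y)$). If $S(v)=f$, then $v$ is $(H(f,\tD),L(f,\tD))$-separating for $f$, both $\x$ and $\y$ are compatible with $v$, and OSP two-cycle monotonicity gives
\[
\bigl(H(f,\tD)-L(f,\tD)\bigr)\bigl(f_f(\y)-f_f(\x)\bigr) = -\bigl(H(f,\tD)-L(f,\tD)\bigr) < 0,
\]
the required contradiction.

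Step 3 (Main obstacle: $S(v)=g\neq f$). The hard part is the case in which the canonical witnesses first diverge on an agent $g\neq f$. The plan is to iterate in the subtree rooted at the $v$-child $v_\x$ visited by $\x$, working on the restricted subdomain $\tD(v_\x)$: Lemma~\ref{le:inheritance} and a re-derivation of the canonical-profile calculation ensure that $P$ is still strongly selectable on $\tD(v_\x)$, so (i) persists with the same $f$. To preserve (ii), I either keep $\notP_f$ (whenever it remains selectable for $(\tD(v_\x)_{-f},L(f,\tD))$) or re-invoke (ii) on the restricted subdomain to extract a fresh witness solution not containing $f$; if needed, one can also transfer the role of ``$f$'' to $g$ (when $g$ is in some strongly selectable solution with $|\td_g|>1$) and apply (ii) to $(P',g)$ for some such $P'\ni g$. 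Because the depth of the current subtree strictly decreases at every iteration, the procedure terminates; at termination $f$ (or the current surrogate) must be queried along the remaining path, since otherwise both witnesses reach the same leaf, contradicting their differing $f$-allocations, and then Step~2 closes the argument. The delicate part will be the bookkeeping needed to show that (i) and (ii) continue to hold across restrictions, and that the canonical witnesses can be adjusted consistently without losing their optimality property.
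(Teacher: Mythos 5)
Your Steps 1 and 2 coincide with the paper's construction (the profiles $\b^{(H)},\b^{(L)}$ in the paper are exactly your canonical $\x,\y$), but Step 3 --- the only hard case --- contains a genuine gap, and it is precisely the case the paper handles with a different device. Two concrete problems. First, your recursion needs (i) and (ii) to hold again on the shrunken subdomain $\tD(v_\x)$, but the lemma assumes them only for the single subdomain $\tD$; nothing licenses ``re-invoking (ii) on the restricted subdomain'', and Lemma~\ref{le:inheritance} points the wrong way here (it propagates \emph{non}-selectability to smaller subdomains, not selectability or strong selectability). Second, even your fallback of transferring the role of $f$ to the divergent agent $g$ --- which is the right instinct, since (ii) is quantified over every $P\in\mathcal{S}$ and every member of it --- does not close the argument as you set it up: the fresh canonical profiles for a pair $(P^*,\notP_g)$ with $g\in P^*$ must put every agent of $P^*$ other than $g$ at its lowest value, but an agent $h\in P^*\setminus(P\cup\notP_f)$ has $x_h=y_h=H(h,\tD)$, so along the common path to $v$ it may already have separated $H(h,\tD)$ from $L(h,\tD)$ without making $\x$ and $\y$ diverge; the new profiles are then not compatible with $v$, optimality no longer pins down $\M$'s output on them, and no negative $2$-cycle follows. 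This compatibility problem recurs at every level of the recursion, and the ``delicate bookkeeping'' you defer is exactly where the argument breaks.

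The paper's proof avoids tracing the canonical profiles altogether. It restricts attention to the subtree of $\T$ in which every agent outside $\bigcup_{P\in\mathcal{S}}P$ acts according to its highest type $H(e,\tD)$, and takes $u$ to be the \emph{first} node of that subtree at which \emph{any} agent of $\bigcup_{P\in\mathcal{S}}P$ separates $L(\cdot,\tD)$ from $H(\cdot,\tD)$; your pair $\x,\y$ (the paper's $\b,\bar\b$) is used only to show such a node exists, via optimality. Whichever agent $f$ diverges at $u$, hypothesis (ii) applied to that agent and a strongly selectable $P^*\ni f$ supplies $Q^*$, and the two profiles built from $(P^*,Q^*)$ are compatible with $u$ automatically: by minimality of $u$ no agent of $\bigcup_{P\in\mathcal{S}}P$ has yet separated its lowest from its highest value, and all remaining agents sit at $H(e,\tD)$ in both profiles. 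If you replace ``first divergence of $\x$ and $\y$'' by this ``first $L/H$-separation over all agents of $\bigcup_{P\in\mathcal{S}}P$ in the all-$H$ subtree'', your Step 3 disappears and the rest of your argument goes through.
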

\begin{proof}
	Assume by contradiction that there is a subdomain $\tD$ for which the conditions above are satisfied and yet there is an optimal OSP mechanism $\M$;  let us denote with $\mathcal{T}$ its implementation tree. Fix the subdomain to be this particular $\tD$.
	
	\renewcommand{\L}{\vett{L}}
	\renewcommand{\H}{\vett{H}}
	
	Let $\mathcal{S}$ be the set of strongly selectable solutions defined in the statement, which is not empty by hypothesis.  Consider the first node $u \in \mathcal{T}$ in which
	an agent $f \in \bigcup_{P \in \mathcal{S}} P$
	diverges between $L(f,\tD)$ and $H(f,\tD)$
	in the subtree compatible with
	the type of every agent $e \in \bigcup_{P \in \mathcal{S}} P$ being in $\tilde{D}_e$ and the type of every remaining agent $e$ being $H(e,\tilde{\D})$.
	We first argue that, since the mechanism $\M$ is optimal, such a node $u$ must exist.
	Consider  the following two bid profiles $\b$ and $\bar\b$ in $\tD$ defined as
	\begin{align*}
	b_e = \begin{cases}
	L(e,\tD) & \text{if $e \in P$} \\
	H(e,\tD) & \text{if $e \not\in P$}
	\end{cases} && \text{ and } &&
	\bar b_e = \begin{cases}
	L(e,\tD) & \text{if $e \in \notP_f$} \\
	H(e,\tD) & \text{if $e \not\in \notP_f$}
	\end{cases}
	\end{align*}
	Since both $P$ and $\notP_f$ are (strongly) selectable for $\tD$, Definition~\ref{def:sel} says that the mechanism must return $P$ on input $\b$ and $\notP_f$ on input $\bar \b$. Therefore, $P = \M(\b) \neq \M(\bar \b)=\notP_f$. Since $\b$ and $\bar \b$ differ only in the bids of agents $e$ in $P \cup \notP_f$, there must be some node $u\in \T$ in which some agent $f \in P \cup \notP_f$ diverges   between $L(f,\tD)$ and $H(f,\tD)$.

	%
	
	Given the existence of $u$ and $f$ as above, we now apply the hypothesis (ii) as follows to show a negative OSP $2$-cycle. For the strongly selectable solution $P^*$ containing $f$ 
	there is another solution $Q^*$ which does not contain $f$, and $Q^*$ is also selectable for $(\tD,L(f,\tD))$. We next define two profiles
	$\b^{(H)}$ and $\b^{(L)}$ where $f$ has high and low cost, respectively, and it is selected only for the high cost:
	\begin{align}
	b_e^{(H)} = \begin{cases}
	H(f,\tD) & \text{for $e=f$} \\
	L(e,\tD) & \text{for $e\neq f$ and $e \in P^*$ } \\
	H(e,\tD) & \text{otherwise}
	\end{cases} &&
	b_e^{(L)} = \begin{cases}
	L(f,\tD) & \text{for $e=f$} \\
	L(e,\tD) & \text{for $e \in Q^*$ (NB $e \neq f$) } \\
	H(e,\tD) & \text{otherwise}
	\end{cases}
	\label{eq:bids:2cycle}
	\end{align} 
	It is not hard to see that $\b^{(H)}$ and $\b^{(L)}$ are compatible with node $u$, and thus the cycle between $\b^{(H)}$ and $\b^{(L)}$ exists. Since $P^*$ is strongly selectable,  and $Q^*$ is selectable for $(\tD,L(g,\tD))$, from \eqref{eq:bids:2cycle} we have $\M(\b^{(H)}) = P^*$ and $\M(\b^{(L)}) = Q^*$.  Therefore, the cycle between $\b^{(H)}$ and $\b^{(L)}$ is negative.
	%
\end{proof}

The second necessary property  regards  subdomains $\tD$ for which there are solutions that are selectable but \emph{not} strongly selectable. For each such solution $P$ 
there is an agent $w$, that we will call the \emph{witness} of $P$,
such that $P$ is no longer selectable for $\tD_{hw}=(\tD_{-w}, H(w, \tD))$. 
A couple of easy properties of witnesses are summarized in the next observation.

\begin{obs}\label{obs:witness}
	Let $P$ be a solution selectable but not strongly selectable for a subdomain $\tD$ and $w$ a witness of $P$. Then, we have, (i) $|\td_w|\geq 2$; (ii) there is $P' \in \mathcal{F}$ such that $w \in P \setminus P'$ and either $H(w) + L(P \setminus (P' \cup \{w\})) > H(P' \setminus P)$ or $H(w) + L(P \setminus (P' \cup \{w\})) = H(P' \setminus P)$ and 
	$P' \prec P$. 
\end{obs}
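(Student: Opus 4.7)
The plan is to unfold both Definition~\ref{def:sel} and the notion of witness, and then carry out a short case analysis. For part~(i), I would argue by contrapositive: if $|\td_w|=1$ then $L(w,\tD)=H(w,\tD)$, so $\tD_{hw}$ coincides with $\tD$ as a subdomain. Hence selectability of $P$ for $\tD$ would be identical to selectability of $P$ for $\tD_{hw}$, contradicting the defining property of a witness, namely that $P$ fails to be selectable at $\tD_{hw}$.

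For part~(ii), I would start from the negation of Definition~\ref{def:sel} applied at $\tD_{hw}$: since $P$ is not selectable there, there exists $P'\in\mathcal{F}$ with $P'\neq P$ such that either $L(P\setminus P',\tD_{hw})>H(P'\setminus P,\tD_{hw})$, or equality holds and $P'\prec P$ (I use here that $\prec$ is a \emph{total} order, in order to cleanly negate the tie-breaking branch of selectability). I would then split on the position of $w$ relative to $P$ and $P'$. In every case except $w\in P\setminus P'$ (that is, $w\notin P$, or $w\in P\cap P'$, or $w\in P'\setminus P$), the element $w$ contributes to neither $P\setminus P'$ nor $P'\setminus P$, so both $L(P\setminus P',\cdot)$ and $H(P'\setminus P,\cdot)$ take the same value at $\tD$ and at $\tD_{hw}$. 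The witness inequality would therefore hold verbatim at $\tD$ as well, contradicting the assumption that $P$ is selectable for $\tD$. This forces $w\in P\setminus P'$, giving the first part of~(ii).

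Once $w\in P\setminus P'$ is established, a direct substitution gives $L(P\setminus P',\tD_{hw})=L(P\setminus(P'\cup\{w\}),\tD)+H(w,\tD)$, while $H(P'\setminus P,\tD_{hw})=H(P'\setminus P,\tD)$, because $w$ is the only agent whose domain differs between $\tD$ and $\tD_{hw}$ and $w\notin P'\setminus P$. Plugging these two identities into the two branches of the ``not selectable'' witness inequality at $\tD_{hw}$ yields exactly the dichotomy in the statement of~(ii). The only mild obstacle is careful bookkeeping of the strict vs.\ tied cases induced by the tie-breaking order $\prec$; beyond this, the argument is essentially a direct verification against the definitions and requires no new ideas.
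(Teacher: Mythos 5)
Your proposal is correct and follows essentially the same route as the paper's proof: for part~(i), $|\td_w|=1$ forces $\tD_{hw}=\tD$; for part~(ii), negate Definition~\ref{def:sel} at $\tD_{hw}$ and observe that the selectability condition relative to a given $P'$ changes between $\tD$ and $\tD_{hw}$ only when $w\in P\setminus P'$. One small imprecision to fix: in the subcase $w\in P'\setminus P$ you assert that $w$ ``contributes to neither $P\setminus P'$ nor $P'\setminus P$'', which is not literally true (there $w\in P'\setminus P$); the correct justification is that $\tD_{hw}$ only raises the \emph{lower} end of $w$'s domain and leaves $H(e,\cdot)$ unchanged for every $e$, including $e=w$, so $H(P'\setminus P,\cdot)$ is unaffected regardless and only $L(P\setminus P',\cdot)$ can change, precisely when $w\in P\setminus P'$.
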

\begin{proof}
	The first observation is easy as if $|\td_w| = 1$ then $P$ would be selectable also for $\tD_{hw}$. Similarly, if $w \not\in P \setminus P'$ for every $P' \in \mathcal{F}$, then the conditions of Definition \ref{def:sel} would be satisfied for $\tD_{hw}$ since they are for $\tD$. But then, since $P$ is not selectable for $\tD_{hw}$, there is a feasible $P'$ such that $w \in P \setminus P'$ for which either $H(w) + L(P \setminus (P' \cup \{w\})) > H(P' \setminus P)$ or $H(w) + L(P \setminus (P' \cup \{w\})) = H(P' \setminus P)$ and 
	$P' \prec P$.
\end{proof}

The next lemma intuitively says that, if there exist subdomains where witnesses of solutions that are selectable but not strongly selectable can be excluded (included, respectively) when they reveal their type to be the lowest (highest, respectively) possible, then there is no implementation tree which yields an optimal OSP mechanism. Its proof uses ideas similar to that of Lemma \ref{lem:bin2ter_cond1}.

\begin{lemma}
	\label{lem:bin2ter_cond2}
	There is no optimal OSP mechanism for a set system problem
	if there is a subdomain $\tD$ of $\D$
	such that the following properties are both satisfied:
	\begin{itemize}
		\item[(i)] the set $\mathcal{S}$ of selectable solutions  for $\tilde{\D}$ has size $|\mathcal{S}| \geq 2$, and there is at least one $P \in \mathcal{S}$ such that $P$ is not strongly selectable; 
		\item[(ii)] for every $f$ for which there is at least one selectable solution to which it belongs and at least one selectable to which it does not belong (i.e., $f \in \bigcup_{(P,P') \in \mathcal{S} \times \mathcal{S}} P\setminus P'$) both the following are true:
		\begin{itemize}
			\item there is $\notP_f \in \mathcal{S}$ such that $f \notin \notP_f$ and $\notP_f$ is selectable for $\bar{\D} = (\tD_{-f}, L(f,\tD))$;
			\item there is $\check{P}_f \in \mathcal{S}$ such that $f \in \check{P}_f$ and $\check{P}_f$ is selectable for $\check{\D} = (\tD_{-f}, H(f,\tD))$.
		\end{itemize} 
	\end{itemize}
\end{lemma}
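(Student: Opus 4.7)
The plan is to adapt the proof of Lemma~\ref{lem:bin2ter_cond1}. Assume for contradiction that an optimal OSP mechanism $\M$ with implementation tree $\T$ exists while conditions~(i) and~(ii) hold for some subdomain $\tD$. I will exhibit a node $u$ of $\T$ and two profiles compatible with $u$ that form a negative two-cycle in $\ver$, contradicting Theorem~\ref{thm:cmon}.

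First I locate the divergent node $u$. By condition~(i), $|\mathcal{S}| \geq 2$, so pick distinct $P, Q \in \mathcal{S}$ and consider the canonical profiles $\b^{(P)}$ (with $b_e = L(e, \tD)$ for $e \in P$ and $b_e = H(e, \tD)$ otherwise) and $\b^{(Q)}$ (defined analogously). By optimality of $\M$, we have $\M(\b^{(P)}) = P$ and $\M(\b^{(Q)}) = Q$, so their paths in $\T$ reach distinct leaves. Mirroring the subtree construction of Lemma~\ref{lem:bin2ter_cond1}, restrict attention to the subtree of $\T$ compatible with profiles in which every agent outside $\bigcup_{R \in \mathcal{S}} R$ is frozen at $H(e, \tD)$; both $\b^{(P)}$ and $\b^{(Q)}$ lie in this subtree. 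Let $u$ be the first node of this subtree at which the paths of $\b^{(P)}$ and $\b^{(Q)}$ diverge. Since the two profiles agree outside $P \triangle Q$, the agent $f$ labelling $u$ lies in $P \triangle Q \subseteq \bigcup_{(R,R') \in \mathcal{S}^2} R \setminus R'$, and at $u$ agent $f$ separates $L(f, \tD)$ from $H(f, \tD)$.

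With $f$ in hand, invoke condition~(ii) to obtain $\notP_f \in \mathcal{S}$ with $f \notin \notP_f$, selectable for $\bar{\D} = (\tD_{-f}, L(f, \tD))$, and $\check P_f \in \mathcal{S}$ with $f \in \check P_f$, selectable for $\check{\D} = (\tD_{-f}, H(f, \tD))$. Define $\b^{(L)}$ by $b_f = L(f, \tD)$, $b_e = L(e, \tD)$ for $e \in \notP_f$, and $b_e = H(e, \tD)$ otherwise; define $\b^{(H)}$ by $b_f = H(f, \tD)$, $b_e = L(e, \tD)$ for $e \in \check P_f \setminus \{f\}$, and $b_e = H(e, \tD)$ otherwise. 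Observe that $\b^{(L)}$ is precisely the canonical profile of $\notP_f$ restricted to $\bar{\D}$, and $\b^{(H)}$ is the canonical profile of $\check P_f$ restricted to $\check{\D}$; selectability thus forces $\M(\b^{(L)}) = \notP_f$ (so $f_f(\b^{(L)}) = 0$) and $\M(\b^{(H)}) = \check P_f$ (so $f_f(\b^{(H)}) = 1$). By the same tree-pruning reasoning as in Lemma~\ref{lem:bin2ter_cond1}, both profiles are compatible with $u$, and the two-cycle at $u$ between $\b^{(L)}$ and $\b^{(H)}$ has weight $(H(f, \tD) - L(f, \tD))(0 - 1) < 0$, contradicting Theorem~\ref{thm:cmon}.

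The main obstacle, as in Lemma~\ref{lem:bin2ter_cond1}, is the compatibility of $\b^{(L)}$ and $\b^{(H)}$ with $u$: these profiles can differ from $\b^{(P)}$ and $\b^{(Q)}$ in many coordinates, so one must verify that every query preceding $u$ either concerns a frozen agent (handled by the subtree pruning) or does not split the low and high types of any agent in $\bigcup_{(R,R') \in \mathcal{S}^2} R \setminus R'$ (handled by the choice of $u$ as the first such divergent node). The non-strongly-selectable solution promised by condition~(i), although not invoked explicitly in the chain of equalities above, is what makes condition~(ii) a genuine restriction and ensures that the divergence forced by the mechanism's optimality cannot be resolved in the simpler style of Lemma~\ref{lem:bin2ter_cond1}.
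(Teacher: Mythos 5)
Your profiles and the cycle argument at the end are the right ones, but the choice of the divergent node $u$ is where the proof breaks: taking $u$ to be the first node at which the root-to-leaf paths of $\b^{(P)}$ and $\b^{(Q)}$ split does \emph{not} guarantee that no agent in $\bigcup_{(R,R')\in\mathcal{S}^2} R\setminus R'$ has been asked to separate her low type from her high type before $u$. The two canonical profiles $\b^{(P)}$ and $\b^{(Q)}$ agree on every agent $g \in (P\cap Q) \cup (E\setminus (P\cup Q))$, so the mechanism can ask such a $g$ to separate $L(g,\tD)$ from $H(g,\tD)$ along the common prefix and both profiles will still take the same branch; your $u$ comes strictly later. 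But such a $g$ can perfectly well lie in $\bigcup_{(R,R')} R\setminus R'$ (e.g.\ $g\in P\cap Q$ yet $g\notin R$ for some third selectable $R$), and after that earlier query $g$'s current domain is collapsed to one side of $\{L(g,\tD),H(g,\tD)\}$. If $\notP_f$ and $\check P_f$ happen to place $g$ on the other side --- which nothing rules out, since $\notP_f,\check P_f$ are arbitrary solutions supplied by condition~(ii) and need not coincide with $P$ or $Q$ on $g$ --- then the corresponding profile among $\b^{(L)},\b^{(H)}$ is simply not compatible with $u$, and the two-cycle you want to invoke does not exist in $\mathcal{O}_f^{\T}$. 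You acknowledge this as ``the main obstacle'' but then assert it is ``handled by the choice of $u$ as the first such divergent node''; that is exactly the step that fails, because your $u$ is the first divergence \emph{of two specific profiles}, not the first $L$-vs-$H$ query to an agent in $\bigcup R\setminus R'$.

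The paper's proof dodges this precisely by redefining $u$: it is declared to be the first node, inside a suitably frozen subtree, at which \emph{any} agent of $\bigcup_{(P,P')\in\mathcal{S}^2} P\setminus P'$ is asked to separate $L$ from $H$ (the witness argument from condition~(i) is then used only to certify that such a node exists). With that definition, every $L$-vs-$H$ query preceding $u$ goes to an agent on which the paper's $\b^{(L)}$ and $\b^{(H)}$ agree by construction, so compatibility follows. Your existence argument (that $\b^{(P)}\neq\b^{(Q)}$ must diverge somewhere, hence some agent of $\bigcup R\setminus R'$ is $L$-vs-$H$-queried) can still be used to certify the existence of the paper's $u$; the fix is to then \emph{take} $u$ to be the topologically first such node rather than the first divergence of $\b^{(P)}$ and $\b^{(Q)}$, and to define $\b^{(L)},\b^{(H)}$ so that they differ only on agents of $\bigcup R\setminus R'$. (The paper's more economical choice --- setting $L$ only on $\notP_f\setminus\check P_f$ and $\check P_f\setminus\notP_f$ --- is what lets the two profiles agree on everything outside $\bigcup R\setminus R'$, including agents in $\bigcap_{R\in\mathcal{S}} R$, within the paper's tighter subtree; your ``full canonical'' profiles put agents of $\bigcap R$ at $L$, which is fine for your looser subtree but is not the detail that causes the problem.)
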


\begin{proof}
	Assume by contradiction that there is a subdomain $\tD$ for which the conditions above are satisfied and yet there is an optimal OSP mechanism $\M$;  let us denote with $\mathcal{T}$ its implementation tree. Fix the subdomain to be this particular $\tD$.
	
	Let $\bar{\mathcal{S}} \subseteq \mathcal{S}$ be
	the subset of solutions that are not strongly selectable,
	and let $W(\bar{\mathcal{S}})$ the set of agents $f$ such that
	$f$ is a witness for some $P \in \bar{\mathcal{S}}$.
	Consider the first node $u \in \mathcal{T}$ in which
	$f \in \bigcup_{(P,P') \in \mathcal{S} \times \mathcal{S}} P\setminus P'$
	diverges between $L(f, \tD)$ and $H(f, \tD)$
	in the subtree compatible with
	the type of every agent $e \in \bigcup_{(P,P') \in \mathcal{S} \times \mathcal{S}} P\setminus P'$ being in $\td_e$
	and the type of every remaining agent $e$ being $H(e,\tilde{\D})$.
	Observe that since there is at least one solution that is not strongly selectable, then there is at least one witness $w$, and thus, 
	since from Observation~\ref{obs:witness} $w \in \bigcup_{(P,P') \in \mathcal{S} \times \mathcal{S}} P\setminus P'$ and $|\td_w| \geq 2$,
	there is at least one agent that can diverge at node $u$.
	We next prove that since the mechanism $\M$ is optimal,
	then such a node must exist.
	Indeed, let $P \in \mathcal{S}$ be a solution that is not strongly selectable,
	let $w$ be its witness, and let $P'$ be the solution whose existence is guaranteed by Observation \ref{obs:witness}.
	Consider  the following two bid profiles $\b$ and $\bar\b$ in $\tD$ defined as
	\begin{align*}
	b_e = \begin{cases}
	L(e,\tD) & \text{if $e \in P \setminus P'$} \\
	H(e,\tD) & \text{otherwise}
	\end{cases} && \text{ and } &&
	\bar b_e = \begin{cases}
	L(e,\tD) & \text{if $e \in P \setminus P'$ and $e \neq w$} \\
	H(e,\tD) & \text{otherwise}
	\end{cases}
	\end{align*}
	Since $P$ is selectable then $\M$ returns $P$ on input profile
	$\b$, but since it is not strongly selectable it returns $P'$ on input profile $P'$. Therefore $P = \M(\b) \neq \M(\bar \b)=P'$.
	Since $\b$ and $\bar \b$ differ only in the bids of agents $w \in W(\bar{\mathcal{S}}) \subseteq \bigcup_{(P,P') \in \mathcal{S} \times \mathcal{S}} P\setminus P'$, there must be some node $u\in \T$ in which $w$ diverges between $L(w,\tD)$ and $H(w,\tD)$.
	
	Now let $f$ be the agent that diverge at $u$ and let $\notP_f$ and $\check{P}_f$ as in the claim.
	We are going to show a negative OSP $2$-cycle for $f$.
	Consider profiles $\b^{(L)}$ and $\b^{(H)}$ as follows:
	\begin{align}
	b_e^{(H)} = \begin{cases}
	H(f,\tD) & \text{for $e=f$} \\
	L(e,\tD) & \text{for $e \in \check{P}_f \setminus \notP_f$ and $e\neq f$} \\
	H(e,\tD) & \text{otherwise}
	\end{cases} &&
	b_e^{(L)} = \begin{cases}
	L(f,\tD) & \text{for $e=f$} \\
	L(e,\tD) & \text{for $e \in \notP_f \setminus \check{P}_f$} \\
	H(e,\tD) & \text{otherwise}
	\end{cases}
	\end{align}
	It is immediate to see that both $\b^{(L)}$ and $\b^{(H)}$ are compatible with node $u$.
	However, in $\b^{(L)}$, where agent $f$ has type $L(f, \tD)$,
	an optimal mechanism must select $\notP_f$ (and thus it does not select $f$),
	since it is selectable even if the type of $f$ is $L(f, \tD)$.
	In $\b^{(H)}$, where agent $f$ has type $H(f, \tD) > L(f, \tD)$,
	instead mechanism $\M$ selects $\check{P}_f$ (and thus $f$),
	since it is selectable even if the type of $f$ is $H(f, \tD)$. 
	Therefore, the cycle between $\b^{(L)}$ and $\b^{(H)}$ is negative.
\end{proof}

\subsection{The main result (and the mechanism)}
The two necessary conditions suggest that it is possible to design an optimal OSP mechanism when both the following properties are satisfied for \emph{every} subdomain $\tD$ containing more than one instance. When all selectable solutions are also strongly selectable then there is an $f$ such that every $P'$ with $f \notin P'$ ceases to be selectable if the type of $f$ is $L(f, \tD)$ (this is the negation of Lemma \ref{lem:bin2ter_cond1}).
Moreover, if there is at least one selectable solution that is not strongly selectable for $\tD$,
then there is $f$ such that either every $P'$ with $f \notin P'$ ceases to be selectable if the type of $f$ is $L(f, \tD)$, or every $P'$ with $f \in P'$ ceases to be selectable if the type of $f$ is $H(f, \tD)$ (this is the negation of Lemma \ref{lem:bin2ter_cond2}). We prove that these properties are indeed sufficient by proving that Algorithm~\ref{algo:M_1} can be augmented with payments to give rise to an optimal OSP mechanism, that we call $\Sm$. 

\begin{algorithm}[!t]
	\DontPrintSemicolon
	\let\oldnl\nl
	\newcommand{\nonl}{\renewcommand{\nl}{\let\nl\oldnl}}
	\SetKwProg{Fn}{}{\string:}{}
	\KwIn{$E, {\mathcal{F}}, \D$}
	\KwOut{An optimal solution}
	Initialize $R$ with the set of $P \in \mathcal{F}$ that are not selectable for ${\D}$, $\mathcal{P}=\mathcal{F} \setminus R$ and set $\tD=\D$\;
	\While{$|R| < |\mathcal{F}| - 1$}{
		\While{there is $P \in \mathcal{P}$ that is not strongly selectable for $\tD$ \nllabel{line:strong}}{
			\uIf{$\exists f \in \bigcup_{P  \in \mathcal{P}} P$ s.t. every $P \in \mathcal{P}$, with $f \notin P$, is not selectable for $(\tD_{-f}, L(f, \tD))$}{
				Ask $f$ if her type is $L(f, \tD)$ \nllabel{line:divL1}\;
				\uIf{yes}{
					$\tD=(\tD_{-f}, L(f, \tD))$\; 
					Add to $R$ and remove from $\mathcal{P}$ every $P$ that is not selectable for $\tD$\;
				}
			}\Else{
				Pick $f \in \bigcup_{P  \in \mathcal{P}} P$ s.t. all $P \in \mathcal{P}$, with $f \in P$, are not selectable for $(\tD_{-f}, H(f, \tD))$\;
				Ask $f$ if her type is $H(f, \tD)$ \nllabel{line:divH}\;
				\uIf{yes}{
					$\tD=(\tD_{-f}, H(f, \tD))$\; 
					Add to $R$ and remove from $\mathcal{P}$ every $P$ that is not selectable for $\tD$\;
				}
			}
		}
		\uIf{$|R| < |\mathcal{F}| - 1$}{
			Pick $f \in \bigcup_{P\in\mathcal{P}} P$ s.t. every $P \in \mathcal{P}$, with $f \notin P$, are not selectable for $(\tD_{-f}, L(f, \tD))$\;
			Ask $f$ if her type is $L(f, \tD)$ \nllabel{line:divL2}\;
			\uIf{yes}{
				$\tD=(\tD_{-f}, L(f, \tD))$\; 
				Add to $R$ and remove from $\mathcal{P}$ every $P$ that is not selectable for $\tD$\;
			}
		}
	}
	Return the only solution in $\mathcal{P}$ \nllabel{line:unique}
	\caption{The implementation tree of the optimal algorithm for mechanism $\Sm$}
	\label{algo:M_1}
\end{algorithm}

\begin{theorem}\label{thm:iff_system}
	There is an optimal OSP mechanism for a set system problem
	with three-value domains
	if and only if there is no subdomain $\tD$
	such that conditions of Lemma~\ref{lem:bin2ter_cond1} or of Lemma~\ref{lem:bin2ter_cond2} hold.
\end{theorem}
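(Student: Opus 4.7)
The ``only if'' direction is an immediate corollary of Lemmas~\ref{lem:bin2ter_cond1} and~\ref{lem:bin2ter_cond2}: if any subdomain $\tD$ satisfies either set of hypotheses, then neither lemma allows an optimal OSP mechanism. For the constructive ``if'' direction I would prove that Algorithm~\ref{algo:M_1} defines the social choice function of an optimal mechanism $\Sm$ whose implementation tree admits OSP payments. Since all agents have at most three types, Theorem~\ref{thm:2cycle3} reduces the OSP check to two-cycle monotonicity, after which Theorem~\ref{thm:cmon} supplies compatible payments. The work therefore splits into (a)~well-definedness of the algorithm's queries, (b)~optimality of the returned solution, and (c)~OSP two-cycle monotonicity at every divergent node. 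In what follows I write $e$ for the agent that the algorithm picks (the pseudocode calls it $f$), in order to avoid clashing with the social choice function.

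For (a), I would maintain the invariant that at the start of every outer iteration the set $\mathcal{P}$ equals the selectable solutions for the current subdomain $\tD$. This is true initially and is preserved across both the ``yes'' transitions (in which $\tD_e$ is pinned to the queried value) and the implicit ``no'' transitions (in which the queried value is dropped from $\tD_e$), thanks to Lemma~\ref{le:inheritance}. When some $P\in\mathcal{P}$ is not strongly selectable, we enter the inner while loop: the negation of Lemma~\ref{lem:bin2ter_cond2} furnishes an agent $e\in\bigcup_{P,P'\in\mathcal{P}}P\setminus P'$ satisfying the $L$-branch (line~\ref{line:divL1}) or the $H$-branch (line~\ref{line:divH}) condition. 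When instead all $P\in\mathcal{P}$ are strongly selectable we fall to the line~\ref{line:divL2} branch, and the negation of Lemma~\ref{lem:bin2ter_cond1} yields the required $e$; the delicate observation to flag is that clause~(i) of Lemma~\ref{lem:bin2ter_cond1} is \emph{automatic} whenever $|\mathcal{P}|\geq 2$, because if every element of every $P\in\mathcal{P}$ had a singleton current domain, then the cost of each $P\in\mathcal{P}$ would be completely determined and Definition~\ref{def:sel} together with tie-breaking via $\prec$ would leave at most one selectable solution.

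For (b), the invariant gives that on termination with $|\mathcal{P}|=1$ the unique surviving $P$ is selectable for the final subdomain, which still contains the actual bid vector $\b$; by Definition~\ref{def:sel} any solution selectable for a subdomain containing $\b$ is exactly the tie-broken optimum for $\b$, proving optimality.

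For (c), every divergent node asks either ``is your type $L(e,\tD)$?'' or ``is your type $H(e,\tD)$?''. By the algorithm's choice of $e$, a ``yes'' to an $L$-query removes from $\mathcal{P}$ every solution not containing $e$, so the final output must contain $e$ throughout the yes subtree; symmetrically, a ``yes'' to an $H$-query removes every solution containing $e$. For two profiles $\a,\b$ compatible with such a node and differing at coordinate $e$, the two-cycle weight $(a_e-b_e)(f_e(\b)-f_e(\a))$ reduces in the $L$-query case to $(b_e-L(e,\tD))(1-f_e(\b))\geq 0$ (using $a_e=L(e,\tD)$ and $f_e(\a)=1$) and in the $H$-query case to $(H(e,\tD)-a_e)\,f_e(\a)\geq 0$ (using $b_e=H(e,\tD)$ and $f_e(\b)=0$); both are non-negative since $f_e\in\{0,1\}$ in a set system. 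The main obstacle is the well-definedness step~(a): one has to thread both lemmas' negations through the algorithm's branching and, in particular, justify the automatic satisfaction of clause~(i) of Lemma~\ref{lem:bin2ter_cond1} when $|\mathcal{P}|\geq 2$; the two-cycle verification is then clean thanks to the $0/1$ structure of $f_e$.
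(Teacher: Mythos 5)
Your route is the paper's own: the ``only if'' direction from Lemmas~\ref{lem:bin2ter_cond1} and~\ref{lem:bin2ter_cond2}; the ``if'' direction via Algorithm~\ref{algo:M_1}, with OSP reduced to two-cycle monotonicity through Theorem~\ref{thm:2cycle3} (and payments from Theorem~\ref{thm:cmon}); and the same dichotomy at divergent nodes (a ``yes'' to an $L$-query forces the agent into every surviving solution, a ``yes'' to an $H$-query excludes her from all of them), which is exactly how the paper verifies the two-cycles. Your parts (a) and (c) are sound, and your observation that clause~(i) of Lemma~\ref{lem:bin2ter_cond1} is automatic when at least two solutions survive is precisely the argument the paper deploys (in its termination analysis) for the case in which all current domains have become singletons; on this point you are, if anything, more explicit than the paper.

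The gap is in step (b). The claim ``any solution selectable for a subdomain containing $\b$ is exactly the tie-broken optimum for $\b$'' is false: by Definition~\ref{def:sel}, selectability only says that the solution wins against each competitor for \emph{some} profile of the subdomain (best case for it, worst case for the other), not for the particular $\b$ at hand --- at the root, typically many solutions are selectable while only one is optimal for $\b$. So knowing that the unique survivor is selectable for the final subdomain does not, by itself, give optimality. What closes the argument --- and what the paper invokes Lemma~\ref{le:inheritance} for --- is the complementary statement about \emph{discarded} solutions: any $P$ placed in $R$ was non-selectable for a subdomain containing the true profile $\b$, hence by Lemma~\ref{le:inheritance} it is non-selectable for the singleton subdomain $\{\b\}$, i.e.\ it is not the $\prec$-tie-broken optimum at $\b$; equivalently, the optimum at $\b$ is selectable for every subdomain containing $\b$ and so is never removed, and with $|\mathcal{P}|=1$ at termination it must be the returned solution. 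Relatedly, your invariant ``$\mathcal{P}$ equals the set of selectable solutions'' is not delivered by Lemma~\ref{le:inheritance} alone: that lemma guarantees only that what has been removed stays non-selectable (so selectable solutions are always in $\mathcal{P}$), whereas a ``no'' answer shrinks the domain without any pruning in the pseudocode, so solutions in $\mathcal{P}$ may temporarily cease to be selectable. Since both your clause-(i) argument and your step (b) lean on this invariant, you should either stipulate pruning after every domain update (the reading the paper tacitly adopts) or, as the paper does, run the optimality argument through the removed solutions, which needs only the easy containment.
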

\begin{proof}
	The ``only if'' direction follows from Lemma~\ref{lem:bin2ter_cond1} and Lemma~\ref{lem:bin2ter_cond2}.
	
	For the ``if'' direction, assume that
	for every subdomain of $\D$ the conditions of Lemmas~\ref{lem:bin2ter_cond1} and \ref{lem:bin2ter_cond2} {do not} hold. We have discussed above how this leads to the definition of Algorithm \ref{algo:M_1}. The algorithm incrementally removes from $\mathcal{F}$ the solutions that are found to be non-selectable for some subdomain (this is the set $R$ in the algorithm). The construction of the subdomains closely exploits the properties coming from the unsatisfied lemmas. Specifically, the algorithm looks for an agent $f$ we can safely ask for OSP-ness to diverge between their current $L(f)$ and $H(f)$; if $f$ reveals type $L(f)$, then she will be securely selected, or if she reveals type $H(f)$, then she will be never selected. By assumption, such an agent always exists; the algorithm makes the suitable queries. This query will update the current domain of agents, and thus the process can be repeated for the new domains. The algorithm continues until we are left with only one selectable solution, that will be returned. 
	
	Observe at each iteration either the set of $\mathcal{P}=\mathcal{F}\setminus R$ shrinks, or there is at least one agent whose type domain decreases in size. Hence, we eventually reach a point in which we either have one solution therein (and the algorithm stops) or the type of each agent has been revealed, i.e., $\tD$ is such that $\td_e=\{t_e\}$ for all $e \in E$. We now show that, also in this case, $|R|=\mathcal{|F|}-1$. Suppose that this is not the case and that $R$ contains two solutions $P$ and $P'$. Let $H(e) = L(e) = t_e$ for each $e \in E$. Observe that $L(P \setminus P') = H(P \setminus P')$ and $L(P' \setminus P) = H(P' \setminus P)$. Since $P$ and $P'$ are selectable for $\tD$, then 
	\[
	L(P' \setminus P) \leq H(P \setminus P')=L(P \setminus P') \leq H(P' \setminus P)=	L(P' \setminus P)
	\]
	and then the two solutions have the exact same cost. Thus, if $P \prec P'$, then only $P$ is selectable, otherwise, only $P'$ is selectable and one of them should have been in $R$.
	
	Moreover, we will never remove all solutions from $\mathcal{F}$ since, as noted above, for any possible subdomain, there is always a selectable solution. We can then conclude that $\Sm$ always returns a feasible solution. Moreover, it is immediate to check that  the mechanism returns an optimal solution if for each $e$, the type $t_e$ is compatible with the actions taken by agent $e$ during the execution of $\Sm$. Indeed, by Lemma \ref{le:inheritance} we know that the solutions removed for bigger subdomain, because they were not selectable, remain non-selectable for all the smaller domains. 
	
	We finally prove that $\Sm$ satisfies the $2$-cycle monotonicity,
	from which, by Theorem~\ref{thm:2cycle3}, it follows that $\Sm$ is OSP.
	First observe that for every agent $e$ and every node $u$
	at which agent $e$ interacts with the mechanism,
	given that $D_e(u)$ is the domain of $e$ compatible with the action previously taken by $e$,
	either $e$ is asked to diverge among type $H(e, D_e(u))$
	and $D_e(u) \setminus \{H(e, D_e(u))\}$ (at Line~\ref{line:divH}),
	or it is asked to diverge among type $L(e, D_e(u))$
	and $D_e(u) \setminus \{L(e, D_e(u))\}$ (at Line~\ref{line:divL1} or at Line~\ref{line:divL2}).
	In the first case, no $2$-cycle is possible,
	since whenever $e$ takes the action compatible with $H(e, D_e(u))$,
	then, by hypothesis, every solution to which $e$ belongs ceases to be selectable,
	and thus this agent is never selected.
	In the second case, no $2$-cycle is possible,
	since whenever $e$ takes the action compatible with $L(e, D_e(u))$,
	then, by hypothesis, every solution to which $e$ does not belong ceases to be selectable, and thus this agent is always selected.
\end{proof}


%



\section{Conclusions}
We have focused on OSP mechanisms, a compelling and needed notion of incentive compatibility for bounded rationality; \citet{li2015obviously} proves that OSP is the right and only notion of strategyproofness for agents who lack contingent reasoning skills. It is thus paramount to understand the limitations and the power of these mechanisms. 

We have introduced a new technique to look at OSP mechanisms, and shown its power by giving tight results on the approximation and a characterization of these mechanisms for paradigmatic problems in the area. Our contribution highlights how there are two dimensions, algorithms and their implementation, to the design of these mechanisms. The interplay between these dimensions is  encapsulated by OSP CMON and plays a central role, as shown by the limitations of fixing the implementation beforehand (as in DA auctions or direct revelation mechanisms).

We leave a number of open problems. A technical one is about the domain size and the difference between 2-cycles and longer ones; to what extent adding an extra type in the domain can deteriorate the approximation ratio of OSP mechanisms? A second, more conceptual question, is about dealing with multi-parameter agents. It does not seem immediate to characterize the implementation trees for these kind of agents as there is not a concept of relative ordering of types. 


%

\bibliographystyle{abbrvnat}
\bibliography{osp}

\newpage
\appendix

\section{Postponed proofs about OSP CMON}
\subsection{Extending CMON to characterize OSP} 
We here prove Theorem \ref{thm:cmon}.
To this aim, let us first formally prove Observation~\ref{obs:def}.
\begin{proof}[Proof of Observation~\ref{obs:def}]
	According to our definition, a mechanism $\M$ with implementation tree $\T$ is an OSP mechanism for the social choice function $f$
	if and only if there exists a payment function $\p$ such that $\M=(f, \p)$ and, for every agent $i$ with real type $t_i$, and for every vertex $u$ such that $i = S(u)$, it holds that 
	\[
	  p_i(t_i, \bi) - t_i(f(t_i, \bi)) = u_i(t_i, \M(t_i, \bi)) \geq u_i(t_i,\M(b_i, \bi')) =   p_i(b_i, \bi') - t_i(f(b_i, \bi'))
	\]
	for every $\bi, \bi'$ and for every $b_i \in D_i$, with $b_i \neq t_i$,
	such that $(t_i, \bi)$ and $(b_i, \bi')$ are compatible with $u$, but diverge at $u$. Since the true type of $i$ can be any value in $D_i$, then the mechanism is OSP if and only if this is true for any pair $t_i, b_i \in D_i$. 
\end{proof}
Henceforth, we will refer to condition~\eqref{eq:osp_constraint} as \emph{OSP constraint}.
	\begin{proof}[Proof of Theorem~\ref{thm:cmon}]
		We initially show that if $\ver$ has no negative-weight cycles then there exist payments $\p$ such that $\M=(f, \p)$ is OSP with implementation tree $\T$.
		Fix $i$ and $\T$ and consider the graph $\ver$. 
		The idea is that an edge 
		$((a,\ai),(b, \bi))$, for $a, b
		\in D_i$, $a \neq b$, and $\ai, \bi \in D_{-i}(u)$,  $u$ being an $ab$-separating vertex for player $i$, in $\ver$ encodes the OSP constraint $P_i(\b) - P_i(\a) \leq a(f(b, \bi))-a(f(a, \ai))$ where, $\a$ ($\b$, respectively) is a shorthand for $(a,\ai)$ ($(b, \bi)$, respectively). 
		
		Augment $\ver$ with a
		node $\omega$ and edges $(\omega, \b)$ for any $\b \in D$ each of
		weight $0$. Observe that $\omega$ does not belong to any
		cycle of the augmented $\ver$ since it has outgoing edges only.
		Therefore we can focus our attention on cycles of $\ver$. For any $\b$ in $D$, we let
		$\shp(\omega, \b)$ be the length of the shortest path from $\omega$ to $\b$ in the
		augmented \vgraph. Since $\ver$ is finite and does not have
		negative-weight cycles then $\shp(\omega,\b)$ is well defined. It suffices, for
		all $\b \in D$, to set $P_i(\b) = \shp(\omega, \b)$. Indeed, consider
		any edge $(\a,\b)$ in $\ver$. The fact that the shortest path from $\omega$ to $\a$ followed by the edge $(\a,\b)$ is a path form $\omega$ to $\b$ implies, by shortest path definition, that $\shp(\omega,\b) \leq \shp(\omega,\a) + a(f(b, \bi))-a(f(a, \ai))$. We can the conclude that for all $a, b
		\in D_i$, $a \neq b$, and $\ai, \bi \in D_{-i}(u)$, $P_i(\b) - P_i(\a) = \shp(\omega,\b) - \shp(\omega,\a) \leq
		a(f(b, \bi))-a(f(a, \ai))$. 
		
		Now we assume that the \vgraph\ $\ver$ contains a negative-weight
		cycle. 
		Let
		the negative-weight cycle be $C=\a^1 \rightarrow \a^2 \rightarrow
		\ldots \rightarrow \a^k \rightarrow \a^{k+1} = \a^1$. Cycle $C$ corresponds to
		the following OSP constraints:\footnote{Note that by the existence of $(\a^{j-1}, \a^{j})$ and $(\a^{j}, \a^{j+1})$, , $1 < j \leq k$, we know that $\ai^{j} \in D_{-i}(u^i_{a_i^{j-1},a_i^j}) \cap D_{-i}(u^i_{a_i^j,a_i^{j+1}})$ for opportune separating vertices $u^i_{a_i^{j-1},a_i^j}$ and $u^i_{a_i^j,a_i^{j+1}}$.}
		\begin{eqnarray*}
			P_i(\a^2) - P_i{(\a^1)} & \leq & a^1_i(f(\a^2)) - a^1_i(f(\a^1)), \\
			P_i(\a^3) - P_i{(\a^2)} & \leq & a^2_i(f(\a^3)) - a^2_i(f(\a^2)), \\
			\ldots \\
			P_i(\a^k) - P_i{(\a^{k-1})} & \leq & a^{k-1}_i(f(\a^{k})) - a^{k-1}_i(f(\a^{k-1})), \\
			P_i(\a^1) - P_i{(\a^{k})} & \leq & a^{k}_i(f(\a^{1})) - a^{k}_i(f(\a^k)).
		\end{eqnarray*}
		Suppose that there is a solution for the $P$'s satisfying each of
		these $k$ inequalities. This solution must also satisfy the
		inequality that results when we sum the $k$ inequalities together.
		If we sum the left-hand sides, each unknown $P^j$ is added in once
		and subtracted out once, so that the sum of the left-hand side is
		$0$. The right-hand side sums to the cycle weight $w(C)$, and thus
		we obtain $0 \leq w(C)$. But since $C$ is a negative-weight cycle,
		$w(C) < 0$, and we obtain the contradiction that $0 \leq w(C) < 0$.
	\end{proof}
	
\subsection{OSP two-cycle monotonicity and OSP CMON}\label{app:CMON-characterizes-OSP}
We here give the full proofs of Theorems \ref{thm:2cycle3} and \ref{thm:neg_res}.

\begin{proof}[Proof of Theorem~\ref{thm:2cycle3}]
 One direction follows from Theorem~\ref{thm:cmon}. As for the other direction,
 we prove that OSP two-cycle monotonicity implies OSP CMON.
 
 Fix agent $i$ and implementation tree $\T$, and consider a cycle $\mathcal{C} = (\x, \x', \x'', \ldots, \x)$ in the graph $\ver$.
 Since $\mathcal{C}$ is a cycle, we can assume without loss of generality that $x_i = L_i$
 if a profile exists in the cycle such that the type of $i$ is $L_i$, and $x_i = M_i$ otherwise.
 Observe that, since no edge exists among two profiles $\x$ and $\y$ such that $x_i = y_i$,
 then it must be the case that every cycle $\mathcal{C}$ can be partitioned in paths $(\mathcal{P}_1, \ldots, \mathcal{P}_t)$ such that
 for every $j = 1, \ldots, t$, $\mathcal{P}_j$ is as one of the following:
 \begin{enumerate}[noitemsep,leftmargin=20pt]
  \item\label{item:len2} $\mathcal{P}_j = (\x^{j-1},\y,\x^{j})$, where $x^{j-1}_i = x^{j}_i < y_i$;
  \item\label{item:len3} $\mathcal{P}_j = (\x^{j-1},\y,\z,\x^{j})$, where $x^{j-1}_i = x^{j}_i = L_i$, but $y_i \notin \{x^j_i, z_i\}$ and $z_i \notin \{x^j_i, y_i\}$;
  \item\label{item:leneven} $\mathcal{P}_j = (\x^{j-1}, \mathcal{P}_j^1, \ldots, \mathcal{P}_j^s, \x^{j})$, where $s \geq 1$,
  $x^{j-1}_i = x^{j}_i = L_i$ and $\mathcal{P}_j^h = (\y^{j,h-1}, \z^{j,h}, \y^{j,h})$, where either $y^{j,h-1}_i = y^{j,h}_i = M_i$ and $z^{j,h}_i = H_i$ for every $h \in \{1, \ldots, s\}$
  or $y^{j,h-1}_i = y^{j,h}_i = H_i$ and $z^{j,h}_i = M_i$ for every $h \in \{1, \ldots, s\}$;
  \item\label{item:lenodd} $\mathcal{P}_j = (\x^{j-1}, \mathcal{P}_j^1, \ldots, \mathcal{P}_j^s, \w^{j}, \x^{j})$, where $s \geq 1$,
  $x^{j-1}_i = x^{j}_i = L_i$, $w^{j}_i \neq L_i$ and, for every $h \in \{1, \ldots, s\}$, $\mathcal{P}_j^h = (\y^{j,h-1}, \z^{j,h}, \y^{j,h})$, where $y^{j,h-1}_i = y^{j,h}_i \notin \{L_i, w^{j}_i\}$ and $z^{j,h}_i = w^{j}_i$.
 \end{enumerate}
 Indeed, if no profile appears in the cycle such that the type of $i$ is $L_i$, then the only way of going from a profile $\x^{j-1}$ with $x^{j-1}_i = M_i$ to a profile $\x^j$ with $x^j_i = M_i$ is through a profile $\y$ such that $y_i = H_i$, and this is considered in case~\ref{item:len2}.
 As for cycles in which there is a profile such that the type of $i$ is $L_i$, then either we have a path of length 2 as considered in case~\ref{item:len2},
 or a path of length 3 as considered in case~\ref{item:len3},
 or a path in which the internal profiles are such that the type of $i$ is alternatively $M_i$ and $H_i$.
 In turn, for this last case we can distinguish two subcases:
 either the type of $i$ in the last internal profile is the same as in the first internal profile, as considered in case~\ref{item:leneven},
 or they are different, as considered instead in case~\ref{item:lenodd}.

 Since we are focusing on single-parameter settings, the cost of a path of type~\ref{item:len2} is
 $
  C(\mathcal{P}_j) = \Big(x^{j-1}_i \cdot f_i(\x^{j-1}) - x^{j-1}_i \cdot f_i(\y)\Big) + \Big(y_i \cdot f_i(\y) - y_i \cdot f_i(\x^j)\Big).
 $
 This cost can be written as follows
 \begin{align*}
  C(\mathcal{P}_j) & = y_i (f_i(\y) - f_i(\x^{j-1}) + f_i(\x^{j-1}) - f_i(\x^j)) - x^{j-1}_i (f_i(\y) - f_i(\x^{j-1}))\\
  & = (y_i-x^{j-1}_i) (f_i(\y) - f_i(\x^{j-1})) + y_i (f_i(\x^{j-1}) - f_i(\x^j))\\
  & \geq M_i (f_i(\x^{j-1}) - f_i(\x^j)),
 \end{align*}
 where we used that $y_i > x^{j-1}_i \geq L_i$ and that $f_i(\y) \geq f_i(\x^{j-1})$ by two-cycle monotonicity.

 Consider now paths of type~\ref{item:len3}.
 The cost of this path is
 $
  C(\mathcal{P}_j) = \Big(x^{j-1}_i \cdot f_i(\x^{j-1}) - x^{j-1}_i \cdot f_i(\y)\Big) + \Big(y_i \cdot f_i(\y) - y_i \cdot f_i(\z)\Big) + \Big(z_i \cdot f_i(\z) - z_i \cdot f_i(\x^j)\Big).
 $
 We can rewrite this cost as follows:
 \begin{align*}
  C(\mathcal{P}_j) & = - x^{j-1}_i(f_i(\y)-f_i(\x^{j-1})) + y_i(f_i(\y)-f_i(\z))\\
  & \qquad + z_i(f_i(\z) - f_i(\y) + f_i(\y) - f_i(\x^{j-1}) + f_i(\x^{j-1}) - f_i(\x^j))\\
  & \geq (z_i - x^{j-1}_i)(f_i(\y)-f_i(\x^{j-1})) + (z_i - y_i)(f_i(\z)-f_i(\y)) + z_i(f_i(\x^{j-1}) - f_i(\x^j))\\
  & \geq M_i (f_i(\x^{j-1}) - f_i(\x^j)),
 \end{align*}
 where we used that $z_i \geq M_i > L_i = x^{j-1}_i$ and, by two-cycle monotonicity, $f_i(\y) \geq f_i(\x^{j-1})$ (since $y_i > x^{j-1}_i$),
 and either $f_i(\z) - f_i(\y) = 0$ or $\mathtt{sign}(f_i(\z) - f_i(\y)) = \mathtt{sign}(z_i - y_i)$.

 For paths of type~\ref{item:leneven}, we have that:
 \begin{equation}
  \label{eq:path}
  C(\mathcal{P}_j) = x^{j-1}_i \left(f_i(\x^{j-1}) - f_i(\y^{j,0})\right) + \sum_{h=1}^s C(\mathcal{P}_j^h) + y^{j,s}_i \left(f_i(\y^{j,s}) - f_i(\x^j)\right),
 \end{equation}
 where
 \begin{equation}
  \label{eq:subpath}
  C(\mathcal{P}_j^h) = y^{j,h-1}_i \left(f_i(\y^{j,h-1}) - f_i(\z^{j,h})\right) + z^{j,h}_i \left(f_i(\z^{j,h}) - f_i(\y^{j,h})\right).
 \end{equation}
 Observe that
  \begin{align}
   & f_i(\y^{j,s}) - f_i(\x^j) = f_i(\y^{j,s})  - f_i(\x^j) - \sum_{h=1}^s \left(f_i(\z^{j,h}) - f_i(\z^{j,h}) + f_i(\y^{j,h-1}) - f_i(\y^{j,h-1})\right) = \label{eq:last_step}\\
  & \sum_{h=1}^s \left(f_i(\y^{j,h}) - f_i(\z^{j,h})\right) + \sum_{h=1}^s \left(f_i(\z^{j,h}) - f_i(\y^{j,h-1})\right) + \left(f_i(\y^{j,0}) - f_i(\x^{j-1})\right) + \left(f_i(\x^{j-1}) - f_i(\x^j)\right). \nonumber
  \end{align}
 By plugging \eqref{eq:subpath} and \eqref{eq:last_step} into \eqref{eq:path}, we have that
 \begin{align*}
  C(\mathcal{P}_j) & = \sum_{h=1}^s (y_i^{j,s} - z_i^{j,h}) \left(f_i(\y^{j,h}) - f_i(\z^{j,h})\right) + \sum_{h=1}^s (y_i^{j,s} - y_i^{j,h-1}) \left(f_i(\z^{j,h}) - f_i(\y^{j,h-1})\right)\\
  & \qquad + (y_i^{j,s} - x^{j-1}_i) \left(f_i(\y^{j,0}) - f_i(\x^{j-1})\right) + y_i^{j,s} \left(f_i(\x^{j-1}) - f_i(\x^j)\right)\\
  & = \sum_{h=1}^s (y_i^{j,h} - z_i^{j,h}) \left(f_i(\y^{j,h}) - f_i(\z^{j,h})\right) + (y_i^{j,0} - x^{j-1}_i) \left(f_i(\y^{j,0}) - f_i(\x^{j-1})\right)\\
  & \qquad + y_i^{j,s} \left(f_i(\x^{j-1}) - f_i(\x^j)\right),
 \end{align*}
 where we used the fact that the $y_i^{j,h}$'s are the same for every $h$.
 Since by two-cycle monotonicity, we have that $f_i(\y^{j,0}) \geq f_i(\x^{j-1})$ and either $f_i(\y^{j,h}) - f_i(\z^{j,h}) = 0$ or $\mathtt{sign}(f_i(\y^{j,h}) - f_i(\z^{j,h})) = \mathtt{sign}(y_i^{j,h} - z_i^{j,h})$,
 we can conclude that
 $
  C(\mathcal{P}_j) \geq y_i^{j,s} \left(f_i(\x^{j-1}) - f_i(\x^j)\right) \geq M_i \left(f_i(\x^{j-1}) - f_i(\x^j)\right).
 $

 Similarly, for paths of type~\ref{item:lenodd}, we have that:
 $
  C(\mathcal{P}_j) = x^{j-1}_i \left(f_i(\x^{j-1}) - f_i(\y^{j,0})\right) + \sum_{h=1}^s C(\mathcal{P}_j^h) + y^{j,s}_i \left(f_i(\y^{j,s}) - f_i(\w^j)\right) + w^j_i \left(f_i(\w^j) - f_i(\x^j)\right).
 $
 As above, the last step can be rewritten as follows:
 \begin{align*}
  f_i(\w^j) - f_i(\x^j) & = f_i(\w^j)  - f_i(\x^j)\\
  & \qquad - \sum_{h=1}^s \left(f_i(\y^{j,h}) - f_i(\y^{j,h}) + f_i(\z^{j,h}) - f_i(\z^{j,h})\right) - \left(f_i(\y^{j,0}) - f_i(\y^{j,0})\right)\\
  &  = \left(f_i(\w^j) - f_i(\y^{j,s})\right) + \sum_{h=1}^s \left(f_i(\y^{j,h}) - f_i(\z^{j,h})\right) + \sum_{h=1}^s \left(f_i(\z^{j,h}) - f_i(\y^{j,h-1})\right)\\
  & \qquad + \left(f_i(\y^{j,0}) - f_i(\x^{j-1})\right) + \left(f_i(\x^{j-1}) - f_i(\x^j)\right).
  \end{align*}
 Hence,
 \begin{align*}
  C(\mathcal{P}_j) & = \sum_{h=1}^s (w^j_i - z_i^{j,h}) \left(f_i(\y^{j,h}) - f_i(\z^{j,h})\right) + \sum_{h=1}^s (w^j_i - y_i^{j,h-1}) \left(f_i(\z^{j,h}) - f_i(\y^{j,h-1}))\right)\\
  & \quad + (w^j_i - y^{j,s}_i) \left(f_i(\w^j) - f_i(\y^{j,s})\right) + (w^j_i - x^{j-1}_i) \left(f_i(\y^{j,0}) - f_i(\x^{j-1})\right) + w^j_i \left(f_i(\x^{j-1}) - f_i(\x^j)\right)\\
  & = \sum_{h=1}^s (z^{j,h}_i - y_i^{j,h-1}) \left(f_i(\z^{j,h}) - f_i(\y^{j,h-1}))\right) + (w^j_i - y^{j,s}_i) \left(f_i(\w^j) - f_i(\y^{j,s})\right)\\
  & \quad + (y_i^{j,0} - x^{j-1}_i) \left(f_i(\y^{j,0}) - f_i(\x^{j-1})\right) + w^j_i \left(f_i(\x^{j-1}) - f_i(\x^j)\right),
 \end{align*}
 where we used that $z_i^{j,h} = w^j_i$ for every $h$.
 The, by two-cycle monotonicity, we have that
 $
  C(\mathcal{P}_j) \geq w^j_i \left(f_i(\x^{j-1}) - f_i(\x^j)\right) \geq M_i \left(f_i(\x^{j-1}) - f_i(\x^j)\right).
 $

 Finally, the cost of the cycle is the sum of the costs of paths in which this cycle has been partitioned.
 That is,
 $
  C(\mathcal{C}) = \sum_{j=1}^{t} C(\mathcal{P}_j) \geq M_i \sum_{j=1}^{t} (f_i(\x^{j-1}) - f_i(\x^j)) = M_i (f_i(\x^0) - f_i(\x^t)) = 0.
 $
 The theorem then follows.
\end{proof}

\begin{proof}[Proof of Theorem~\ref{thm:neg_res}]
 Let $D_1=\{H,M,B,L\}$ and $D_j=\{H,M,L\}$ for every $j \neq 1$, with $H > M > B > L$.
 and consider a social function $f$ aiming to select the agent $i$ with the smallest type.
 This can be seen as a procurement auction for a single item, or, alternatively, if types are negative,
 as a classical auction setting for a single item, in which we would like to select the agent with the maximum valuation for the item.
 
 Consider the implementation tree $\T$ corresponding to the following algorithm:
 keep a set of candidates, initialized to the set of all agents;
 for every $t \in D_1$ in decreasing order,
 first ask agent $1$ if her type is $t$,
 and then, if $b \neq t$, repeat the question to the remaining agents (in some order);
 if an agent gives a positive answer, then remove this agent from the set of candidates;
 this process continues until only two candidates are left;
 when this occurs, then, for every $t \in D_i$ in increasing order
 ask the candidates left whether their type is $t$, by giving precedence to agent $1$ if she is still a candidate;
 if an agent gives a positive answer, then assigns the item to this agent.
 In other words, we run a descending auction until we are left with only two candidates,
 and then we run an ascending auction to choose the winner among these two candidates.
 
 It is not hard to see that a mechanism that follows this implementation tree satisfies 2-cycle monotonicity.
 Indeed, during the descending phase, if the agent gives a positive answer, then she is never assigned the item.
 Instead, during the ascending phase, if the agent gives a positive answer, then she is always assigned the item.
 (For a more formal proof of 2-cycle monotonicity, we refer the reader to the proof of Proposition~\ref{prop:algo2mon2},
 where this property is proved for a generalization of the mechanism above.)

 We now consider a cycle of length four in the graph ${\mathcal O}_1^{\mathcal{T}}$
 involving the profiles $\x, \y, \z, \w$ defined as follows:
 in $\x$ we have $x_1 = H$ and $x_j < H$ for every $j \neq 1$;
 in $\w$ we have $w_1 = M$ and $w_j = H$ for every $j \neq 1$;
 in $\z$ we have $z_1 = L$ and $z_j \geq M$ for every $j \neq 1$;
 in $\y$ we have $y_1 = B$, there is $k \neq 1$ such that $y_k = L$, and $y_j \geq L$  for every $j \neq 1,k$.

 Observe that when agent $1$ is queried about type $H$, then by answering \emph{yes} the outcome $\x$ can occur,
 whereas $\w, \z, \y$ can all occur if agent $1$ answers \emph{no}. Hence, in the declaration graph of agent $1$
 edges $(\x,\w)$, $(\w,\x)$, $(\x, \z)$, $(\z, \x)$, $(\x, \y)$ and $(\y, \x)$ exist.
 Moreover, if agent $1$ is queried about type $L$, then by answering \emph{yes} the outcome $\z$ can occur,
 whereas $\w$ and $\y$ can occur if agent $1$ answers \emph{no}. Hence, in the declaration graph of agent $1$
 edges $(\z,\w)$, $(\w,\z)$, $(\z, \y)$ and $(\y, \z)$ exist.
 It is immediate to check that no other edge exists among these four profiles.
 Indeed, when agent $1$ is queried about $B$ ($M$) neither $\x, \z$ are possible since agent $1$ is queried about $L$ and $H$ before being queried about $B$ ($M$),
 nor $\w$ since machine $k$ has revealed type $L$ before agent $1$ is queried about $B$ ($M$).

 Observe that the algorithm assigns the item to agent $1$ in $\w$ and $\z$, while agent $1$ does not receive the item in $\x$ and $\y$.
 Hence, edges $(\x,\y)$, $(\y,\x)$, $(\w, \z)$ and $(\z, \w)$ have weight $0$, and weights for other edges are as in Figure~\ref{dec_graph}.
 \begin{figure}[htbp]
 \begin{center}
\begin{tikzpicture}[shorten >=1pt,node distance=3cm]
\tikzstyle{place}=[circle,draw,thick]
\node[place] (x)  		     {$\x$};
\node[place] (y)   [right of=x]  {$\w$};
\node[place] (z) [below of=y]  {$\z$};
\node[place] (w)  [below of=x]  {$\y$};
\path[->] (x.east) edge[bend left] node [label=above:{$H$}] {} (y);
\path[->] (y) edge[bend left] node [label=above:{$-M$}] {} (x.east);
\path[->] (x.south east) edge[bend left] node [label=below:{$H$}] {} (z.north west);
\path[->] (z.north west) edge[bend left] node [label=above:{$-L$}] {} (x.south east);
\path[->] (x.south) edge[bend left] node [label=left:{$0$}] {} (w);
\path[->] (w) edge[bend left] node [label=left:{$0$}] {} (x.south);
\path[->] (z.north) edge[bend left] node [label=right:{$0$}] {} (y);
\path[->] (y) edge[bend left] node [label=right:{$0$}] {} (z.north);
\path[->] (z.west) edge[bend left] node [label=below:{$-L$}] {} (w);
\path[->] (w) edge[bend left] node [label=below:{$B$}] {} (z.west);
\end{tikzpicture}
 \end{center}
\caption{Extract of the declaration graph of $i$}
 \label{dec_graph}
 \end{figure}
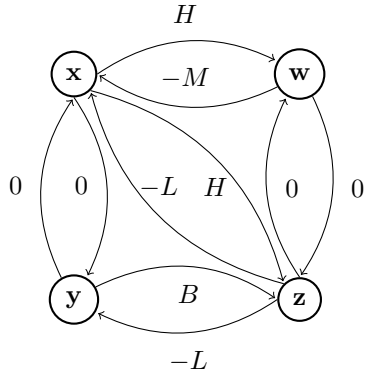

Consider then the cycle $(\x,\y,\z, \w)$. Its weight is $B - M < 0$, from which the claim follows.
\end{proof}

\section{Postponed material about machine scheduling}
\label{apx:scheduling}

\subsection{Characterizing the case of two jobs and two agents}
The next result then gives a complete characterization of the approximability for two jobs and two agents in the three-value domains.
	\begin{theorem}\label{th:schduling:three-values:ub}
		For the machine scheduling problem, with two jobs and two agents with three values domains $D_i=\{L,M,H\}$, there is an optimal OSP mechanism if and only if $M \leq 2L$ or $H \leq 2M$.
	\end{theorem}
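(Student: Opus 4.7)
The biconditional splits naturally. For the only-if direction, observe that if both $M > 2L$ and $H > 2M$ hold, Proposition~\ref{prop:scheduling:three-values:lb} already rules out any OSP mechanism of approximation better than $2$, so in particular no optimal OSP mechanism can exist.

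For the if direction I would construct an optimal OSP mechanism in each of the two subcases and invoke Theorem~\ref{thm:2cycle3} to reduce OSP-ness to 2-cycle monotonicity. In the subcase $M \leq 2L$, I would use a descending-style tree: the root asks agent $1$ whether her type equals $H$, and each child asks agent $2$ the same question. The both-$H$ leaf splits the two jobs. In each exactly-one-$H$ branch, separate $L$ from $M$ for the non-$H$ agent and return the optimum (assign both jobs to the low agent whenever $H$ is at least twice her type, and split otherwise). In the subtree where both agents' types lie in $\{L,M\}$, the hypothesis $M \leq 2L$ makes an even split optimal on all four profiles, so return $(1,1)$ with no further queries. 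In the subcase $H \leq 2M$, if $M \leq 2L$ the previous mechanism applies; otherwise $M > 2L$, and I would use the symmetric ascending-style tree: each agent is asked whether her type equals $L$. The both-$L$ leaf splits; in each exactly-one-$L$ branch assign both jobs to the agent who declared $L$ (optimal because the other agent has cost at least $M > 2L$); in the subtree where both agents' types lie in $\{M,H\}$, the hypothesis $H \leq 2M$ makes split optimal on all four profiles, so return $(1,1)$.

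Optimality is confirmed by inspection of the nine type profiles. The main obstacle is verifying 2-cycle monotonicity at every divergent node of each tree, and the critical check is at the root, where an agent diverges between an extreme type and the two remaining types: the minimum load she can receive on the low-cost side of the divergence must weakly dominate the maximum on the high-cost side. The hypotheses $M \leq 2L$ and $H \leq 2M$ are precisely what forces the both-low (respectively, both-high) subtree into the $(1,1)$ allocation; this raises the low-cost minimum to $1$ and makes the monotonicity bound hold with equality. The internal $L$-vs-$M$ divergences in the exactly-one-$H$ branches of the first mechanism are verified analogously: the low agent receives weakly more jobs when she declares $L$ than when she declares $M$, because the optimum is more inclined to give her both jobs as her declared cost decreases. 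The internal divergences of the second mechanism are immediate, since each remaining subtree has a single fixed allocation. Combining the two constructions establishes the if direction.
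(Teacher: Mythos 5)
Your proof is correct and your only-if direction is identical to the paper's (invoke Proposition~\ref{prop:scheduling:three-values:lb}). Your if direction, however, organises the case analysis differently from the paper and uses a different tree in one regime. The paper splits into three cases according to which of the consecutive pairs $(L,M)$ and $(M,H)$ are ``close'' (within a factor~$2$) versus ``far'': (1) $M\le 2L$, $H>2M$ uses a descending tree with no $L$-vs-$M$ divergence (both jobs always go to the non-$H$ agent, so distinguishing $L$ from $M$ is unnecessary); (2) $M>2L$, $H\le 2M$ uses the ascending tree exactly as you describe; (3) $M\le 2L$, $H\le 2M$ uses a tree with a \emph{three-way} split at the root for agent~$1$, or the trivial always-split mechanism when additionally $2L\ge H$. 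You instead fold cases (1) and (3) into a single descending construction governed only by $M\le 2L$, with a deferred two-way $L$-vs-$M$ query inside each exactly-one-$H$ branch whose allocation adapts to whether $H$ exceeds twice the low type. The net effect is the same social choice function, but your implementation tree uses only binary splits and treats the two cases uniformly, whereas the paper's case-3 tree is three-way at the root. Both constructions satisfy OSP 2-cycle monotonicity for the same reason (the ``min load on the low-type side'' equals the ``max load on the $H$ side'', namely one job). A further stylistic difference: the paper exhibits explicit payments $p_-,p_+$ and verifies OSP directly, while you abstract away the payments through Theorem~\ref{thm:2cycle3}, which is cleaner but relies on the machinery that the paper happens to have available anyway. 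Your variant is a valid and arguably more economical proof of the same result.
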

	\begin{proof}
		In the sequel, we write $x\approx y$ if $x<y\leq 2x$, and $x\prec y$ if $x< 2x < y$. With this notation, we require that $L\approx M$ or $M \approx H$. 
		The optimal schedule is as follows, depending on each possible case:
		\begin{itemize}
			\item 
			($L \approx M  \prec H$):
			If one agent has high cost and the other does not, schedule both jobs on the agent with lower cost. The OSP mechanism is depicted in Figure~\ref{fig:OSP_three_val_case3}, where leaf nodes show the allocation and the payment of agent $1$: The constants $p_{-}$ and $p_{+}$ satisfy
			$L < p_- < M < p_+ < H$
			and $2p_- \geq L + p_+$. As for agent $2$,  we pay $p_+$ for each assigned job.
			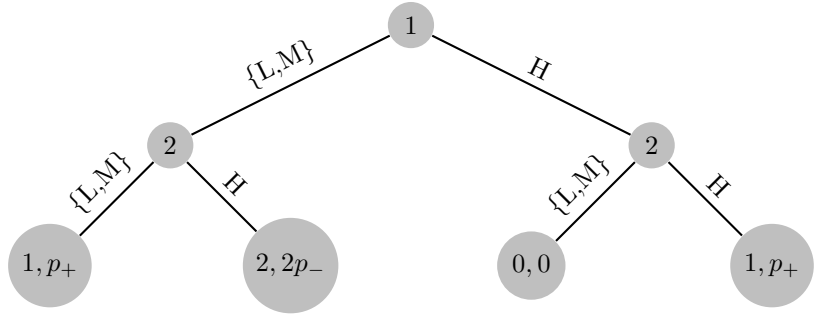
\begin{figure}[htbp]
	\centering
	\begin{tikzpicture}[scale=.8]
	\treeedges{\{L,M\}}{H}{\{L,M\}}{H}{\{L,M\}}{H}
	\tree{1, p_{+}}{2, 2p_{-}}{0,0}{1,p_{+}}
	\end{tikzpicture}
	\caption{An exact OSP mechanism for scheduling two identical jobs when $M \leq 2L$ and $H > 2M$. 
		}
	\label{fig:OSP_three_val_case3}
\end{figure}
			
			\item 
			($L \prec M \approx H$): If one agent has low cost and the other does not, schedule both jobs on the agent with low cost. The OSP mechanism is shown in Figure~\ref{fig:OSP_three_val_case2}, with payments and allocations being described as in the previous case, except that for agent $2$  we pay $p_-$ for each assigned job.
			
			\begin{figure}[htbp]
	\centering
	\begin{tikzpicture}[scale=.8]
	\treeedges{L}{\{M,H\}}{L}{\{M,H\}}{L}{\{M,H\}}
	\tree{1, p_{+}}{2, 2p_{-}}{0,0}{1,p_{+}}
	\end{tikzpicture}
	\caption{An exact OSP mechanism for scheduling two identical jobs when $M > 2L$ and $H \leq 2M$.}\label{fig:OSP_three_val_case2}\label{fig:scheduling:three:1}
\end{figure}
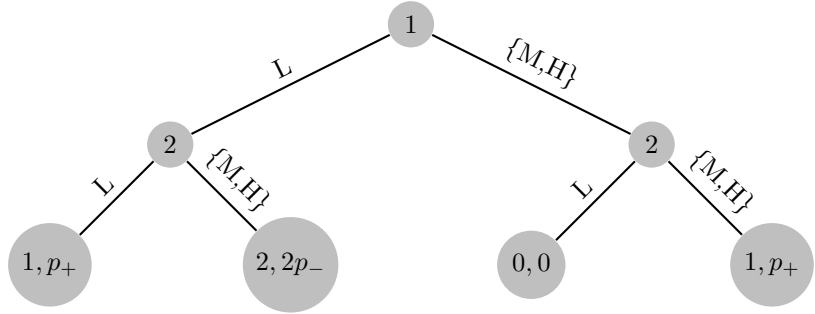

			\item 
			($L \approx M  \approx H$): Here we distinguish two subcases. If $2L \geq H$, then we always schedule one job per machine. Otherwise, for $2L < H$, we only need to distinguish if one agent has cost $L$ and the other has cost $H$ (in which case both jobs go to the low cost agent). The mechanism is shown in Figure~\ref{fig:OSP_three_val_case4}, with payments and allocations for agent 1 being described as in the previous cases. As for agent 2, her payment per unit of work is $p_+$ when separating $\{L,M\}$ from $H$ (left subtree), and $p_-$ when separating $L$ from $\{M,H\}$ (right subtree).			
			\begin{figure}[htpb]
	\centering
	\begin{tikzpicture}[scale=.8]
	\treeedges{L}{H}{\{L,M\}}{H}{L}{\{M,H\}}
	\tree{1, p_{+}}{2, 2p_{-}}{0,0}{1,p_{+}}
	\node[circle, fill=gray!50] (nodeM) at (6,2) {$1,p_+$};
	\draw[thick] (R) -- (nodeM) node[pos=.5,sloped,above] {M};
	\end{tikzpicture}
	\caption{An exact OSP mechanism for scheduling two identical jobs when $M \leq 2L$, $H \leq 2M$, and $H > 2L$.}
	\label{fig:OSP_three_val_case4}
\end{figure}
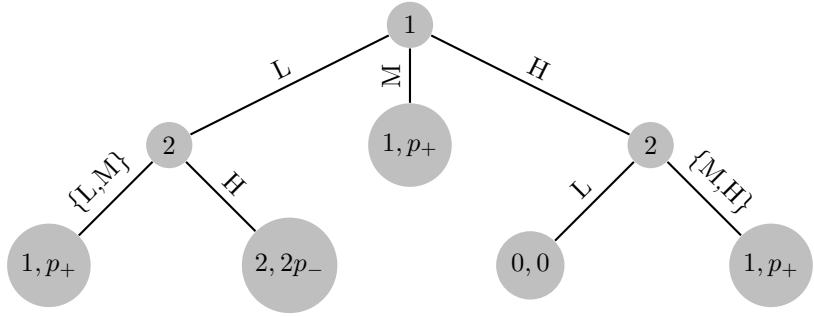 \qedhere
		\end{itemize}
	\end{proof}

\renewcommand{\L}{\vett{L}}
\renewcommand{\H}{\vett{H}}
\newcommand{\Opt}{{GR}}

\subsection{The case of heterogeneous two-value domains (proof of Theorem \ref{th:scheduling:many-agents:identical-jobs-})}
\label{apx:2het}
In this section, we consider machine scheduling on \emph{heterogeneous} two-value domains $D_i = \{L_i,H_i\}$, where $L_i\leq H_i$ for each agent $i$. In particular, we give an optimal polynomial-time OSP mechanism for this case (thus proving Theorem~\ref{th:scheduling:many-agents:identical-jobs-}). 
\paragraph{Notation.}
We denote the \emph{makespan} of a solution (scheduling) $\x$ with respect to a profile (machines reported costs) $\b$ by $MS(\x,\b) := \max_{i} b_i \cdot x_i$, that is, the maximum agent cost (machine completion time). For any subset $S$ of agents, and any profile $\b$, we denote by $\b_S$ the profile restricted to the agents in $S$ only, and by $\b_{-S}$ the profile restricted to the agents not in $S$. Moreover, $\L_S$ ($\H_S$, respectively) denotes the profile in which all agents in $S$ bid $L_i$ ($H_i$, respectively). Analogously, $\L_{-S}$ and $\H_{-S}$ are the profiles in which all agents \emph{not} in $S$ bid $L_i$ and $H_i$, respectively.

\paragraph{An optimal (greedy) algorithm.}
We next describe a simple optimal greedy algorithm $\Opt$ which 
will lead to an optimal OSP mechanism. 
\block{Algorithm $\Opt$: Allocate jobs one by one greedily, breaking \emph{ties} in favor of faster machines or, in case of same speed, in favor of lower index machines.}

It is well known that the greedy algorithm returns the optimal scheduling for {identical jobs}. 
In addition, this particular  tie-breaking rule will ensure that the algorithm returns the ``most balanced'' allocation. As stated in Fact~\ref{fac:balanced} below, if we  move a job from a 
machine to
another
one,
the makespan relative to these two machines only (the maximum cost between them) will increase.


\begin{fact}\label{fac:balanced}
	Algorithm $\Opt$ satisfies the following condition. For any two machines $i$ and $j$, 
	such that $b_i < b_j$, or $b_i = b_j$ and $i < j$,
	it holds that 
	\begin{align}
	\label{eq:greedy_cond}
	b_i \cdot (\Opt_i(\b)+1) > b_j \cdot \Opt_j(\b) && \text{
		and } && b_j \cdot (\Opt_j(\b)+1) \geq b_i \cdot \Opt_i(\b) \ .
	\end{align}
\end{fact}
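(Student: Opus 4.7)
The plan is to read Fact~\ref{fac:balanced} as a standard ``exchange/last-job'' property of list scheduling, and to extract each inequality from the greedy decision made at the critical step where the relevant machine most recently received a job. The crucial feature we will exploit is the tie-breaking rule: under the hypothesis $b_i<b_j$, or $b_i=b_j$ and $i<j$, machine $i$ is strictly preferred to $j$ in ties, so whenever the greedy picks $j$ over $i$ this reflects a \emph{strict} preference (and conversely for $i$ over $j$).

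First I would dispose of the degenerate cases. If $\Opt_i(\b)=0$, then $b_i\cdot \Opt_i(\b)=0\le b_j\cdot(\Opt_j(\b)+1)$, which yields the second inequality for free; and the first inequality $b_i\cdot 1>b_j\cdot \Opt_j(\b)$ requires a small argument only when $\Opt_j(\b)>0$, but in that case the greedy would have chosen $i$ (with current load $0$ and cost $b_i\le b_j$, winning at worst by tie-breaking) instead of $j$, a contradiction assuming $b_i>0$. Symmetrically, if $\Opt_j(\b)=0$, the first inequality is trivial and the second follows from the general argument below. I would state the result under the standing assumption that costs are positive (otherwise the fact is interpreted with the obvious conventions).

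Next, for the main cases, consider the last job that $\Opt$ assigned to $j$. At that moment $j$'s load was $\Opt_j(\b)-1$ and $i$'s load was some $k\le \Opt_i(\b)$. Greedy chose $j$ rather than $i$, so either $b_j\cdot \Opt_j(\b)<b_i\cdot(k+1)$, or equality holds and $j$ wins the tie; but tie-breaking favors $i$, so the equality branch is impossible and we get the strict inequality
\[
b_j\cdot \Opt_j(\b) \;<\; b_i\cdot(k+1) \;\le\; b_i\cdot(\Opt_i(\b)+1),
\]
which is the first claim. Symmetrically, consider the last job assigned to $i$: $i$'s prior load was $\Opt_i(\b)-1$ and $j$'s load was some $k'\le \Opt_j(\b)$; greedy chose $i$ over $j$, so $b_i\cdot \Opt_i(\b)\le b_j\cdot(k'+1)\le b_j\cdot(\Opt_j(\b)+1)$ (non-strict now, since $i$ may win via tie-breaking), giving the second claim.

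The only mildly subtle point—the one worth highlighting in the write-up—is precisely the asymmetry between the two inequalities (strict versus non-strict): it comes entirely from the direction in which the tie-breaking rule resolves the hypothetical tie $b_j\cdot \Opt_j(\b)=b_i\cdot(\Opt_i(\b)+1)$. Once this is made explicit, the rest is bookkeeping over the greedy trace. No other step looks like a real obstacle.
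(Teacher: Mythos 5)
Your proof is correct and follows essentially the same route as the paper's: both arguments look at the step where the greedy assigns the \emph{last} job to $j$ (giving the strict inequality via the tie-breaking rule favoring $i$) and the last job to $i$ (giving the non-strict one), and then use the fact that intermediate loads are bounded by the final loads $\Opt_i(\b)$, $\Opt_j(\b)$. Your explicit treatment of the degenerate cases $\Opt_i(\b)=0$ or $\Opt_j(\b)=0$ is a minor addition the paper leaves implicit, but it changes nothing essential.
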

\begin{proof}
	For every pair of machines $i$ and $j$ as above,
	$b_j \cdot (\Opt_j(\b)+1) \geq b_j \cdot (\Opt^t_j(\b)+1) \geq b_i \cdot (\Opt^t_i(\b) + 1) = b_i \cdot \Opt_i(\b)$,
	where $\Opt^t_j(\b)$ ($\Opt^t_i(\b)$, respectively) denotes the number of jobs assigned to machine $j$ ($i$, respectively) before step $t$ in which the greedy algorithm assigns the last job to $i$,
	and 
	$b_i \cdot (\Opt_i(\b)+1) \geq b_i \cdot (\Opt^\tau_i(\b)+1) > b_j \cdot (\Opt^\tau_j(\b) + 1) = b_j \cdot \Opt_j(\b)$, where $\tau$ is the time in which the greedy algorithm assigns the last job to $j$.
\end{proof}

Note also that it is impossible that both conditions in \eqref{eq:greedy_cond} are unsatisfied,
otherwise we have that $b_i \cdot (\Opt_i(\b)+1) \leq b_j \cdot \Opt_j(\b) < b_j \cdot (\Opt_j(\b)+1) < b_i \cdot \Opt_i(\b)$, thus  a contradiction (recall that $b_i>0$).

Moreover, if $b_j \cdot (\Opt_j(\b)+1) \geq b_i \cdot \Opt_i(\b)$,
then it is impossible to satisfy both conditions
even for the assignment $\x$ achieved from $\Opt(\b)$
by moving a job from $i$ to $j$: indeed, we have that
$b_i \cdot (x_i + 1) = b_i \cdot \Opt_i(\b) \leq b_j \cdot (\Opt_j(\b)+1) = b_j \cdot x_j$, so that the first condition is not satisfied.

Similarly, if $b_i \cdot (\Opt_i(\b)+1) > b_j \cdot \Opt_j(\b)$,
then it is impossible to satisfy both the conditions
even for the assignment $\y$ achieved from $\Opt(\b)$
by moving a job from $j$ to $i$: indeed, we have that
$b_j \cdot (y_j + 1) = b_j \cdot \Opt_j(\b) < b_i \cdot (\Opt_i(\b)+1) = b_i \cdot y_i$, so that the second condition is not satisfied.

\bigskip
The classical monotonicity condition says that, whenever an agent increases her own cost, her load  will either decrease or stay the same. The next lemma says that, in our algorithm $\Opt$, the load of any \emph{every} other agent will either increase or stay the same.

\begin{lemma}
	\label{lem:opt_prop}
	For any profile $\b$, and for any two distinct agents (machines) $i$ and $ j$, it holds that
	$$
	\Opt_i(\b_{-j},L_j) \leq \Opt_i(\b_{-j}, H_j).
	$$
\end{lemma}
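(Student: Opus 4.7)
The plan is to prove the lemma by contradiction, leveraging the balance characterization of the greedy output given in Fact~\ref{fac:balanced}. Set $\b^L := (\b_{-j}, L_j)$, $\b^H := (\b_{-j}, H_j)$, $\x := \Opt(\b^L)$, and $\y := \Opt(\b^H)$; we may assume $L_j < H_j$, for otherwise the claim is trivial. Suppose for contradiction that $x_i > y_i$ for some $i \neq j$. Since both allocations schedule the same total of $m$ jobs, either (Case~1) $x_j \geq y_j$, in which case $\sum_{k \neq i,j} x_k < \sum_{k \neq i,j} y_k$ and hence some $k \neq i,j$ satisfies $x_k < y_k$; or (Case~2) $x_j < y_j$.

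In Case~1, I would apply Fact~\ref{fac:balanced} to the pair $(i,k)$ in both profiles; since $k \neq j$, both machines have identical costs $b_i, b_k$ across $\b^L$ and $\b^H$. Regardless of the lex ordering of $(b_i, i)$ and $(b_k, k)$, the balance inequalities assemble into
\[
b_i x_i \;\leq\; b_k(x_k+1) \;\leq\; b_k y_k \;\leq\; b_i(y_i+1),
\]
where the middle step uses $x_k+1 \leq y_k$ and at least one of the outer inequalities is strict (the strict Fact~\ref{fac:balanced} bound kicks in on whichever side of the lex order applies). This forces $b_i x_i < b_i(y_i+1)$, hence $x_i \leq y_i$, contradicting the assumption.

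In Case~2, I would apply Fact~\ref{fac:balanced} to the pair $(i,j)$ in both profiles, splitting into subcases based on how $(b_i, i)$ sits relative to $(L_j, j)$ and $(H_j, j)$; three natural subcases arise since $L_j < H_j$. In each of them, the Fact~\ref{fac:balanced} inequalities assemble into a chain of the form
\[
L_j(x_j+1) \;\geq\; b_i x_i \;\geq\; b_i(y_i+1) \;>\; H_j y_j \;\geq\; L_j y_j,
\]
with strictness appearing at least once along the way, forcing $x_j+1 > y_j$, i.e., $x_j \geq y_j$, which contradicts the case hypothesis. The main obstacle is the careful bookkeeping of the lex-order subcases in Case~2, but each one closes because the only asymmetry between $\b^L$ and $\b^H$ is $L_j < H_j$, which makes the rightmost step $H_j y_j \geq L_j y_j$ of the chain run in the needed direction.
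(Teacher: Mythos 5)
Your proof is correct, but it takes a genuinely different route from the paper's. The paper argues in two steps: it first establishes the classical own-load monotonicity of machine $j$ (its load does not increase when its cost goes from $L_j$ to $H_j$) via the chain of makespan inequalities that uses optimality of the greedy schedule on both profiles together with the fixed tie-breaking, and it then handles machines $i\neq j$ by an exchange argument: if $i$ lost a job while some $k\neq j$ gained one, moving a job from $k$ to $i$ in the high-cost schedule yields another optimal schedule, and the tie-breaking (balance) conditions for the pair $(i,k)$ cannot hold for both schedules simultaneously. Your argument never invokes optimality or cross-profile makespan comparisons: both your Case~1 (the analogue of the paper's exchange step) and Case~2 (the analogue of its monotonicity step) follow purely from the local balance inequalities of Fact~\ref{fac:balanced} applied separately to the two greedy outputs, combined with job counting and a lexicographic case analysis. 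The strictness bookkeeping does close in every subcase: in Case~1, whichever of $(b_i,i)$ and $(b_k,k)$ is lexicographically smaller, exactly one of the two outer inequalities in your chain is strict, giving $b_i x_i < b_i(y_i+1)$ and hence $x_i\le y_i$; in Case~2, in each of the three positions of $(b_i,i)$ relative to $(L_j,j)$ and $(H_j,j)$ at least one strict inequality appears, giving $L_j(x_j+1)>L_j y_j$ and hence $x_j\ge y_j$. Two small points you should make explicit: you need $L_j>0$ to cancel $L_j$ in that last step (positive costs, which the paper itself assumes in the proof of Fact~\ref{fac:balanced}), and in Case~1 with only two machines there is no third machine $k$, but then $x_i+x_j>y_i+y_j$ already contradicts the fact that both schedules allocate all $m$ jobs. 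Overall, your approach is more self-contained, since it uses only the balance characterization of the greedy output, while the paper's two-step argument follows the standard monotonicity-plus-exchange template; both ultimately hinge on the specific tie-breaking rule encoded in Fact~\ref{fac:balanced}.
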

\begin{proof}
	Let $\b^{L} = (\b_{-j},L_j)$ and $\b^{H}= (\b_{-j},H_j)$,
	and $\y^{L} = \Opt(\b_{-j},L_j)$ and $\y^{H} = \Opt(\b_{-j},H_j)$.
	Using standard argument (see e.g.  \citep{AT01}), we first prove that the algorithm is monotone, that is, $y^{L}_j \geq y^{H}_j$. 
	Suppose by contradiction that $y^{L}_j < y^{H}_j$.
	Observe that 
	\begin{align}\label{eq:makspan:inequalitites}
	MS(\y^{L}, \b^{L}) \le MS(\y^{H}, \b^{L})  \le MS(\y^{H}, \b^{H}) \le MS(\y^{L}, \b^{H}) \ ,
	\end{align}
	where the first and the last inequalities are due to the optimality of the algorithm, and the second inequality follows from the fact that $\b^{H}$ is obtained from $\b^{L}$ by increasing one machine cost. 
	We next distinguish two cases according to which machine determines the makespan $MS(\y^{L}, \b^{H})$. If this is machine $j$, meaning that $MS(\y^{L}, \b^{H}) = y^L_j H_j$, then \eqref{eq:makspan:inequalitites} and the hypothesis $y^{L}_j < y^{H}_j$ imply  
	\[MS(\y^{H}, \b^{H}) \le MS(\y^{L}, \b^{H}) = y^L_j H_j < y^H_j H_j\le  MS(\y^{H}, \b^{H})\]
	where the last inequality follows by definition of makespan, thus a contradiction. Otherwise, $MS(\y^{L}, \b^{H}) = y^L_i b^H_i$ for some machine $i\neq j$, which means that $b^H_i = b^L_i$ as $\b^L$ and $\b^H$ differ only in machine $j$. Therefore, $MS(\y^{L}, \b^{H}) = y^L_i b^H_i = y^L_i b^L_i \le MS(\y^{L}, \b^{L})$. This inequality,  combined with \eqref{eq:makspan:inequalitites}, implies that all inequalities in \eqref{eq:makspan:inequalitites} hold with `='. We  argue that this contradicts the fact that the algorithm breaks ties in a fixed manner, since 
	\begin{align}
	MS(\y^L,\b^H) = MS(\y^H,\b^H) && MS(\y^H,\b^L) = MS(\y^L,\b^L)
	\end{align}
	and thus $\y^L$ would also be optimal for $\b^H$ and $\y^H$ would also be optimal for $\b^L$.

	Since we assumed $y^{H}_j > y^{L}_j$, there must be another machine $k$ such that
	$y^{H}_k < y^{L}_k$.
	On input $\b^H$, by the definition of greedy algorithm implies
	\begin{equation}
	\label{eq:tie0}
	b_k \cdot (y^{H}_k + 1) \geq H_j \cdot y^{H}_j
	\end{equation}
	because otherwise the algorithm would have assigned the last job to machine $j$ to machine $k$ instead. 
	Similarly, on input $\b^{L}$, the greedy algorithm must satisfy
	$L_j \cdot (y^{L}_j + 1) \geq b_k \cdot y^{L}_k$.
	Since $H_j \cdot y^{H}_j > L_j (y^{L}_j + 1)$ and $b_k \cdot y^{L}_k \geq b_k \cdot (y^{H}_k+1)$, we have that $b_k \cdot (y^{H}_k+1) \leq b_k y_k^L \leq L_j (y_j^L+1)< H_j \cdot y^{H}_j$, which contradicts \eqref{eq:tie0}.
	
	
	\medskip
	The lemma then follows from our choice for the tie-breaking rule.
	Suppose indeed that there is a machine $i \neq j$ that receives less jobs in $\y^{H}$ than in $\y^{L}$.
	Then it must exist also another machine $k \neq j$ that receives more jobs in $\y^{H}$ than in $\y^{L}$.
	Observe that it must be the case that, starting from $\y^H$, moving a job from $k$ to $i$ does not increase the makespan, i.e., $b_i (y^{H}_i + 1) \leq MS(\y^{H},\b^{H})$,
	otherwise $MS(\y^{L},\b^{L}) \geq b_i y_i^L \geq b_i (y^{H}_i + 1) > MS(\y^{H},\b^{H}) \geq MS(\y^{H},\b^{L})$, which contradicts the optimality of $\y^{L}$ with respect to $\b^{L}$.
	Hence, both $\y^{H}$ and the assignment $\bar{\y}^{H}$ achieved from $\y^{H}$ by moving a job from machine $k$ to machine $i$ are optimal for $\b^{H}$.
	Since $\y^{H}$ has been returned, it must be because
	it satisfies the tie-breaking condition for each pair of machines.
	In particular it must hold that
	\begin{equation}
	\label{eq:tie}
	b_i \cdot (y^{H}_i + 1) \geq b_k \cdot y^{H}_k
	\end{equation}
	with strict inequality if $b_k > b_i$, or $b_k = b_i$ and $k > i$.
	
	Moreover, as showed above, it must be the case that
	$b_i \cdot (\bar{y}^{H}_i + 1) \leq b_k \cdot \bar{y}^{H}_k$
	(with strict inequality if $b_k < b_i$ or $b_k = b_i$ and $k < i$).
	Since only one of the two tie-breaking conditions may fail,  we have that
	$b_k \cdot y^{H}_k = b_k \cdot (\bar{y}^{H}_k + 1) \geq b_i \cdot \bar{y}^{H}_i = b_i \cdot (y^{H}_i + 1)$
	(with strict inequality if $b_k < b_i$ or $b_k = b_i$ and $k < i$),
	that contradicts \eqref{eq:tie}.
\end{proof}

Two next  corollaries  follow from a repeated application of Lemma~\ref{lem:opt_prop}.
\begin{corollary}
	\label{cor:1}
	For any subset  $S$ of agents (machines), for any $\b_S$, and for any $i \in S$,  it holds that $\Opt_i(\b_S, \L_{-S}) \leq \Opt_i(\b_S,\H_{-S})$.
\end{corollary}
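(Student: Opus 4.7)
The plan is to prove the corollary by a straightforward induction on the number of agents outside $S$ whose bid is switched from low to high, using Lemma~\ref{lem:opt_prop} as the inductive step. Enumerate the agents in $[n]\setminus S$ arbitrarily as $j_1,j_2,\dots,j_k$, and define a sequence of intermediate profiles $\b^{(0)},\b^{(1)},\dots,\b^{(k)}$, where $\b^{(0)}=(\b_S,\L_{-S})$, $\b^{(k)}=(\b_S,\H_{-S})$, and $\b^{(t)}$ is obtained from $\b^{(t-1)}$ by changing the bid of agent $j_t$ from $L_{j_t}$ to $H_{j_t}$. Thus $\b^{(t)}$ and $\b^{(t-1)}$ agree on all coordinates except $j_t$.

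For each $t\in\{1,\dots,k\}$, I would apply Lemma~\ref{lem:opt_prop} with the fixed agent being $i$ and the ``changing'' agent being $j_t$. Since $i\in S$ and $j_t\notin S$, we have $i\neq j_t$, so the lemma's hypothesis is met and yields
\[
\Opt_i(\b^{(t-1)}) \;=\; \Opt_i\bigl(\b^{(t-1)}_{-j_t}, L_{j_t}\bigr) \;\leq\; \Opt_i\bigl(\b^{(t-1)}_{-j_t}, H_{j_t}\bigr) \;=\; \Opt_i(\b^{(t)}).
\]
Chaining these inequalities along the sequence gives
\[
\Opt_i(\b_S,\L_{-S}) \;=\; \Opt_i(\b^{(0)}) \;\leq\; \Opt_i(\b^{(1)}) \;\leq\; \cdots \;\leq\; \Opt_i(\b^{(k)}) \;=\; \Opt_i(\b_S,\H_{-S}),
\]
which is precisely the desired conclusion.

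There is no real obstacle: the only subtlety is ensuring that Lemma~\ref{lem:opt_prop} applies at each step, which is guaranteed by $i\in S$ and $j_t\notin S$. The argument also handles the edge case $S=[n]$ trivially, since then the telescoping sequence has length zero and both sides of the inequality coincide.
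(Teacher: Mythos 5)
Your proof is correct and is exactly the argument the paper intends: the paper states that the corollary ``follows from a repeated application of Lemma~\ref{lem:opt_prop}'', which is precisely your one-coordinate-at-a-time telescoping from $(\b_S,\L_{-S})$ to $(\b_S,\H_{-S})$. Nothing further is needed.
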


\begin{corollary}
	\label{cor:2}
	For any partition of the agents (machines) into three subsets $(S,T,\{k\})$, and for every $\b_S$, $\b_T$,  and $b_k$,
	it holds that \[\Opt_k(\b_S, b_k, \L_T) \leq \Opt_k(\b_S, b_k, \b_T) \leq \Opt_k(\b_S, b_k, \H_T). \]
\end{corollary}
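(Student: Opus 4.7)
The plan is to derive Corollary~\ref{cor:2} as an immediate consequence of Lemma~\ref{lem:opt_prop} by transforming the profile one coordinate at a time. I would fix $\b_S$ and $b_k$, enumerate $T$ as $\{j_1, \ldots, j_{|T|}\}$, and define a sequence of profiles $\b^{(0)} = \L_T, \b^{(1)}, \ldots, \b^{(|T|)} = \b_T$, where $\b^{(r)}$ agrees with $\b_T$ on the first $r$ coordinates $j_1, \ldots, j_r$ and with $\L_T$ on the remaining ones.

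At each step, the transition from $\b^{(r-1)}$ to $\b^{(r)}$ changes only agent $j_r$'s declared type, from $L_{j_r}$ to $b_{j_r} \in \{L_{j_r}, H_{j_r}\}$. If $b_{j_r} = L_{j_r}$ the profile does not change; otherwise Lemma~\ref{lem:opt_prop}, applied with $i = k$ and $j = j_r$ (all remaining coordinates held fixed as the common prefix), yields $\Opt_k(\b_S, b_k, \b^{(r-1)}) \leq \Opt_k(\b_S, b_k, \b^{(r)})$. Composing these $|T|$ inequalities proves the first bound $\Opt_k(\b_S, b_k, \L_T) \leq \Opt_k(\b_S, b_k, \b_T)$. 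The second bound $\Opt_k(\b_S, b_k, \b_T) \leq \Opt_k(\b_S, b_k, \H_T)$ is obtained by the symmetric telescoping, this time starting from $\b_T$ and upgrading each remaining $L_{j_r}$-entry to $H_{j_r}$, again invoking Lemma~\ref{lem:opt_prop} at every step in which the bid actually changes.

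There is no real obstacle here: Lemma~\ref{lem:opt_prop} is precisely the ``one coordinate at a time'' statement needed, and since $k$ is held fixed throughout the telescoping, the hypothesis ``$i$ and $j$ distinct'' is automatically satisfied at each step (the changing agent always lies in $T$, hence differs from $k$). The argument therefore reduces to $|T|$ invocations of Lemma~\ref{lem:opt_prop} for each of the two inequalities, and no additional properties of $\Opt$ or of the tie-breaking rule are required beyond what is already encapsulated in the lemma.
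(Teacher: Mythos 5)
Your proof is correct and is exactly what the paper intends: it states only that Corollaries~\ref{cor:1} and~\ref{cor:2} ``follow from a repeated application of Lemma~\ref{lem:opt_prop},'' which is precisely the coordinate-by-coordinate telescoping you spell out. The only observation worth double-checking --- that $k$ is distinct from each $j_r \in T$ so the lemma's hypothesis applies --- you have noted explicitly, so nothing is missing.
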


\begin{lemma}
	For any subset $S \subset [n]$ of agents (machines), and for every $\b_S$, there is an agent $i_S \notin S$ such that
	\begin{equation}
	\label{eq:pending}
	\Opt_{i_S}(\b_S, \L_{-S}) \geq \Opt_{i_S}(\b_S, \H_{-S}).
	\end{equation}
\end{lemma}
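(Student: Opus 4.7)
The plan is to derive the claim by a simple averaging argument that combines Corollary~\ref{cor:1} with the fact that $\Opt$ assigns exactly $m$ jobs in total.

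First, I would observe the job conservation identity: for every profile $\b$,
\[
\sum_{i\in [n]} \Opt_i(\b) \;=\; m\ .
\]
Applying this to the two profiles $(\b_S,\L_{-S})$ and $(\b_S,\H_{-S})$ and subtracting gives
\[
\sum_{i\in S}\bigl(\Opt_i(\b_S,\H_{-S})-\Opt_i(\b_S,\L_{-S})\bigr) \;+\; \sum_{i\notin S}\bigl(\Opt_i(\b_S,\H_{-S})-\Opt_i(\b_S,\L_{-S})\bigr) \;=\; 0.
\]

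Next I would invoke Corollary~\ref{cor:1}: every term in the first sum is nonnegative, so the first sum is $\geq 0$. Consequently, the second sum is $\leq 0$, i.e.\
\[
\sum_{i\notin S} \Opt_i(\b_S,\L_{-S}) \;\geq\; \sum_{i\notin S} \Opt_i(\b_S,\H_{-S}).
\]
Since $S\subsetneq [n]$ (otherwise the statement is vacuous), the set $[n]\setminus S$ is nonempty, and a standard averaging (pigeonhole) argument guarantees the existence of at least one index $i_S\notin S$ for which the individual inequality
$
\Opt_{i_S}(\b_S,\L_{-S})\geq \Opt_{i_S}(\b_S,\H_{-S})
$
holds, which is the desired conclusion. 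There is no real obstacle here: the only subtlety worth noting is that the argument crucially uses job conservation of $\Opt$ together with the ``reverse'' monotonicity on $S$ provided by Corollary~\ref{cor:1}, and does \emph{not} need the tie-breaking refinement used in Lemma~\ref{lem:opt_prop} beyond what Corollary~\ref{cor:1} already encodes.
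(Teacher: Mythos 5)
Your argument is correct and is essentially the paper's own proof, just spelled out in slightly more detail: both use Corollary~\ref{cor:1} to show that the total load on $S$ cannot decrease when going from $\H_{-S}$ to $\L_{-S}$ is reversed, then invoke job conservation and pigeonhole to extract a single $i_S\notin S$ with the desired reverse inequality. Nothing to add.
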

\begin{proof}
	We use Corollary~\ref{cor:1} to compare the total number of jobs allocated to $S$ in the two inputs, namely, we have that $\sum_{i \in S}\Opt_i(\b_S,\L_{-S}) \le \sum_{i \in S}\Opt_i(\b_S,\H_{-S})$. As all jobs must be allocated to some machine, there must be a machine $i_S \not \in S$ such that  the reverse inequality holds for this machine, i.e., $\Opt_{i_S}(\b_S,\L_{-S}) \ge \Opt_{i_S}(\b_S,\H_{-S})$.
\end{proof}

Consider now the following mechanism:

\block{
	Mechanism  $\M_{\Opt}$:
	\begin{enumerate}
		\item Initially set $S = \emptyset$ and $\b_S = \emptyset$;
		\item While $S \subset [n]$ do the following:
		\begin{itemize}
			\item Pick an arbitrary agent  $i_S\not \in S$ satisfying \eqref{eq:pending}; 
			\item Ask to $i_S$ her bid $b_{i_S}$;
			\item Add $i_S$ to $S$ and update $\b_S$ accordingly;
		\end{itemize}
		\item Return $\Opt(\b)$.
	\end{enumerate}
}

\begin{theorem}\label{th:scheduling:two-value:heterogeneous}
	For any two-value domain $D_i = \{L_i, H_i\}$, the above mechanism $\M_{\Opt}$ is OSP. 
\end{theorem}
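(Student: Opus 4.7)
The approach is to invoke Theorem~\ref{thm:2cycle3}, which applies here since every $|D_i|=2\leq 3$: it suffices to establish OSP two-cycle monotonicity for $\M_{\Opt}$, after which Theorem~\ref{thm:cmon} guarantees existence of payments making the mechanism obviously strategyproof. So the plan is to verify the two-cycle condition at every divergent node of the implementation tree induced by $\M_{\Opt}$.

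Fix such a node $u$, and let $i = i_S$ be the agent queried there. By construction the set $S$ of already-queried agents has bids frozen to some $\b_S$, while the bids of the remaining agents in $T := [n]\setminus(S\cup\{i\})$ are still unconstrained. If $\x,\y$ are profiles compatible with $u$ that diverge at $u$, then $\x_S=\y_S=\b_S$; since $|D_i|=2$ I can assume $x_i = H_i$ and $y_i = L_i$, while $\x_T,\y_T\in D_T$ are arbitrary. In this single-parameter setting, two-cycle monotonicity at $u$ amounts to
\[
\Opt_i(\b_S,L_i,\y_T) \;\geq\; \Opt_i(\b_S,H_i,\x_T) \qquad\text{for every }\x_T,\y_T\in D_T.
\]

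To prove this, I would first use Corollary~\ref{cor:2} twice to eliminate the two free coordinates: the left side is minimized by taking $\y_T=\L_T$ and the right side maximized by taking $\x_T=\H_T$. It is therefore enough to show
\[
\Opt_i(\b_S,L_i,\L_T) \;\geq\; \Opt_i(\b_S,H_i,\H_T),
\]
but this is exactly inequality~\eqref{eq:pending}, which is the defining property of the agent $i_S$ that $\M_{\Opt}$ chooses at node $u$. Hence two-cycle monotonicity holds at every divergent node of the mechanism, and the theorem follows from Theorems~\ref{thm:cmon} and~\ref{thm:2cycle3}.

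The main thing to worry about is that $\M_{\Opt}$ is actually well-defined, i.e.\ that at each iteration some agent outside $S$ satisfies~\eqref{eq:pending}, so that the forward-looking step described above can always be carried out. This is exactly the content of the preceding lemma, proved by a pigeonhole argument built on Corollary~\ref{cor:1}: the total load on agents in $S$ cannot decrease when outside bids move from $\L_{-S}$ to $\H_{-S}$, so the total load on agents outside $S$ cannot increase, forcing at least one such agent to satisfy~\eqref{eq:pending}. Beyond this, no delicate calculation is needed: the algorithmic characterization yields the payments, and all the required structural properties of $\Opt$ have been neatly bundled into Corollary~\ref{cor:2}.
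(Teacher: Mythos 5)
Your proof is correct and follows essentially the same route as the paper: reduce to OSP two-cycle monotonicity at each divergent node, use Corollary~\ref{cor:2} to push the free coordinates to the extremes $\L_T$ and $\H_T$, and then appeal to the defining property~\eqref{eq:pending} of the agent $i_S$ selected at that node. The only cosmetic difference is that you cite Theorem~\ref{thm:2cycle3} explicitly while the paper just states it verifies OSP 2-CMON; the substance is identical.
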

\begin{proof}
	We show that $\Opt$ satisfies OSP 2-CMON. By definition of $\M_{\Opt}$, each node where some agent diverges is a node $u_{i_S}$, and the agent $i_S$ that diverges satisfies \eqref{eq:pending}. We  next show that  $i_S$ will receive at least $\Opt_{i_S}(\b_S,\L_{-S})$ jobs
	when her  bid is $b_{i_S} = L_{i_S}$, and at most $\Opt_{i_S}(\b_S,\H_{-S})$ jobs
	when her bid is $b_{i_S} = H_{i_S}$:
	\begin{align}
	\Opt_{i_S}(\b_S,L_{i_S},\b_T) \geq & \  \Opt_{i_S}(\b_S,L_{i_S},\L_T)   = \Opt_{i_S}(\b_S,\L_{-S}) \\
	\Opt_{i_S}(\b_S,H_{i_S},\b'_T) \leq & \  \Opt_{i_S}(\b_S,H_{i_S},\H_T)   = \Opt_{i_S}(\b_S,\H_{-S})
	\end{align} 
	where the inequalities are due to Corollary~\ref{cor:2}. Since $i_S$ satisfies \eqref{eq:pending}, i.e., $\Opt_{i_S}(\b_S,\L_{-S}) \geq \Opt_{i_S}(\b_S,\H_{-S})$, the above inequalities imply $\Opt_{i_S}(\b_S,L_{i_S},\b_T) \geq \Opt_{i_S}(\b_S,L_{i_S},\b'_T)$ for all $\b_S$, $\b_T$ and $\b'_T$.
\end{proof}



\subsection{Ascending and Descending Auctions have linear approximation ratio for machine scheduling}
\label{apx:asc_desc_sched}
Consider $m = n$ and,
for every $i$, $L_i = L$, $M_i = M \geq n^2 \cdot L$, and $H_i = H \geq n^2 \cdot M$.

Consider a mechanism $\M$ and let $\T$ be its implementation tree. We restrict our attention to \emph{full} mechanisms $\M$, i.e., mechanisms where all agents diverge on the path from the root of $\T$ where players play according to $H$. 

\begin{lemma}
	\label{lem:nonfull}
	Every non-full mechanism for the machine scheduling problem has approximation ratio at least $n$.
\end{lemma}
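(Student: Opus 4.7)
The plan is to exhibit, for any non-full mechanism $\M$, two type profiles that the mechanism cannot distinguish, and then show that this forced tie already costs a factor of $n$ in the approximation ratio. First I would unpack the definition of non-full: along the root-to-leaf path $\pi$ of $\T$ where every queried agent takes an action compatible with $H$, there must be an agent $i$ that does not diverge, meaning $i$ is either never queried on $\pi$ or is queried at a node where her action is the same under $L$, $M$ and $H$.

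Next I would define the two witness profiles $\x$ and $\y$ with $x_j=H$ for all $j$, and $y_i=L$ while $y_j=H$ for $j\neq i$. Since $i$'s actions under $L$ and under $H$ coincide on $\pi$, both $\x$ and $\y$ remain compatible with $\pi$ throughout; they therefore reach the same leaf and $\M(\x)=\M(\y)$. In particular, the number of jobs $k=f_i(\M(\x))=f_i(\M(\y))$ that agent $i$ receives is the same in both instances.

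Now I would split on $k$. If $k<n$, then on $\y$ some other machine $j\neq i$ gets at least one job; since $y_j=H$, the mechanism's makespan on $\y$ is at least $H$, whereas $OPT(\y)=nL$ (give all jobs to $i$, which is optimal since moving one job to another machine already costs $H\geq nL$). Using $H\geq n^2 M\geq n^4 L$, the ratio on $\y$ is at least $H/(nL)\geq n^3\geq n$. If instead $k=n$, all jobs are assigned to $i$ on $\x$, giving makespan $nH$ against $OPT(\x)=H$ (one job per machine), so the ratio on $\x$ is exactly $n$. Either way, the approximation ratio of $\M$ is at least $n$.

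The one step requiring a little care is arguing that both $\x$ and $\y$ are truly compatible with the same path $\pi$; the definition of the implementation tree guarantees that an edge selected on the basis of $i$'s action cannot depend on the types of the remaining agents, so when $i$ does not diverge on $\pi$ the same edge out of $i$'s nodes is taken for any type of $i$, and the $H$-actions of the other agents are valid on both profiles. Once this is in place, the case analysis is immediate, so I do not anticipate a substantive obstacle beyond being precise about this compatibility argument.
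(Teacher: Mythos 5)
Your proposal is correct and follows essentially the same route as the paper's proof: it identifies an agent $i$ that never diverges on the all-$H$ path, observes that the mechanism therefore returns the same outcome on $(H,\bi)$ and $(L,\bi)$ with $b_j=H$ for $j\neq i$, and splits on whether $i$ receives all $n$ jobs, losing a factor $n$ on one of the two profiles in either case. The only difference is that you spell out the path-compatibility argument and the optimum values (and the resulting $n^3$ bound on the low-type profile) more explicitly than the paper does, which is harmless.
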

\begin{proof}
	Suppose $\M$ is a non-full mechanism for machine scheduling. This means that there exists at least one agent, say $i$, who will not diverge between $H$ and $L$ when all other agents have played according to $H$. But then the mechanism must return the same outcome on both the profiles $\x=(L, \bi)$ and $\y=(H, \bi)$ where $b_j=H$, for all $j \neq i$. Now if the outcome does not assign all jobs to $i$ then the approximation of $\M$ on $\x$ is $n$; on the contrary, if $\M$ assigns all jobs to $i$ then the approximation of $\M$ on $\y$ is $n$.
\end{proof}

Let us now rename the agents as follows: Agent $1$ is the $1$st agent that diverges in $\T$; note that agent $1$ is well defined for all non-trivial mechanisms. Agent $2$ is the $2$nd agent that diverges, different from agent $1$, in the subtree of $\T$ defined by agent $1$ taking an action compatible with type $H$; more generally, agent $i$ is the $i$th agent that diverges, different from the agents $1, 2, \ldots, i-1$, in the subtree of $\T$ in which the actions taken by agents that previously diverged are compatible with their type being $H$. We let $u_i$ denote the node of $\T$ in which $i$ diverges as from above. Moreover, we let $c_i = M$ if, at node $u_i$, $i$ diverges between $L$ and $M$, and $c_i = H$ otherwise.

We have the following results, whose
proofs follow the same ideas of the one given in Section~\ref{sec:sched3_lb}.

\begin{theorem}
	\label{lem:c1_low}
	Every non-trivial OSP mechanism for the machine scheduling problem such that $c_1 = M$ has approximation ratio at least $n$.
\end{theorem}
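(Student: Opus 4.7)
My plan is to argue by contradiction using OSP two-cycle monotonicity, which by Theorem~\ref{thm:2cycle3} characterizes OSP in our three-value setting. Suppose $\M$ is a non-trivial OSP mechanism with $c_1=M$ whose approximation ratio is strictly less than $n$. Non-triviality guarantees that the first divergent node $u_1$ of $\T$ exists; since it is the \emph{first} divergent node, no type information has been revealed along the path to $u_1$, and hence every profile in $D$ is compatible with $u_1$. Moreover, the hypothesis $c_1=M$ says that at $u_1$ the two types $L$ and $M$ of agent~$1$ are routed to different children, so any two profiles that agree nowhere except that one assigns $L$ to agent~$1$ and the other $M$ will diverge at $u_1$.

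To manufacture a negative two-cycle I will use the profiles
\[
\a := (L,L,\ldots,L) \qquad\text{and}\qquad \b := (M,H,H,\ldots,H),
\]
which by the previous paragraph are both compatible with $u_1$ and diverge there. I will pin down $f_1(\a)$ and $f_1(\b)$ purely from the ``strictly better than $n$-approximation'' hypothesis combined with the scale separation $M\geq n^{2}L$ and $H\geq n^{2}M$. On $\b$ the optimal allocation dumps all $n$ jobs onto machine~$1$, giving $OPT(\b)=nM$; any allocation that places even one job on a machine of cost $H$ has makespan at least $H\geq n^{2}M > n\cdot OPT(\b)$, so a strictly-better-than-$n$ approximation forces $f_1(\b)=n$. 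On $\a$ all machines are symmetric of cost $L$, so $OPT(\a)=L$ and the only schedule with makespan $nL=n\cdot OPT(\a)$ is to load everything onto a single machine; a strictly-better-than-$n$ approximation therefore forbids any machine from receiving all $n$ jobs, yielding in particular $f_1(\a)\leq n-1$.

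Plugging these into the two-cycle cost formula recalled just after Theorem~\ref{thm:cmon},
\[
(a_1-b_1)\bigl(f_1(\b)-f_1(\a)\bigr) \;=\; (L-M)\bigl(n-f_1(\a)\bigr) \;<\; 0,
\]
since $L<M$ and $n-f_1(\a)\geq 1$. This negative two-cycle in $\mathcal{O}_{1}^{\mathcal{T}}$ contradicts OSP via Observation~\ref{obs:def}, completing the proof.

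The only subtle point, and hence the main obstacle, is the bookkeeping at $u_1$: one must verify that $\a$ and $\b$ are compatible with $u_1$ and diverge there. Both properties are handed to us for free, the first because $u_1$ is the \emph{first} divergent node of $\T$ and the second because $c_1=M$ places the $L$- and $M$-branches of $u_1$ on different children. Everything else is a one-line approximation argument driven by the large multiplicative gaps $M\geq n^{2}L$ and $H\geq n^{2}M$ built into the instance.
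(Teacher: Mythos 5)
Your proof is correct and matches the paper's argument essentially verbatim: the paper also uses the two profiles $(M,H,\ldots,H)$ and $(L,\ldots,L)$, which are both compatible with $u_1$, forces $f_1=n$ on the former and $f_1<n$ on the latter from the better-than-$n$ approximation hypothesis, and concludes a violation of OSP. Your only cosmetic difference is that you spell out the negative two-cycle weight explicitly, and you cite Theorem~\ref{thm:2cycle3} where only the easier necessity direction from Theorem~\ref{thm:cmon} (OSP implies no negative cycles) is actually needed.
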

\begin{proof}
	Suppose that there is a $k$-approximate OSP mechanism $\M$ with $k < n$ --- $\M$ is clearly non-trivial.  Consider the type profile $\x$ such that $x_1 = M$, and $x_j = H$ for every $j \neq 1$.
	Observe that $\x$ is compatible with node $u_1$.
	The optimal allocation for the type profile $\x$ assigns all jobs to machine $1$,
	with cost $OPT(\x) = n \cdot M$.
	Since $\M$ is $k$-approximate, then it also assigns all jobs to machine $1$.
	Indeed, if a job is assigned to a machine $j \neq 1$, then the cost of the mechanism would be at least
	$H \geq n^2 \cdot M > k \cdot OPT(\x)$, that contradicts the approximation bound.
	
	Consider now the profile $\y$ such that $y_j = L$ for every $j$.
	Observe that also $\y$ is compatible with node $u_1$.
	Clearly, $OPT(\y) = L$.
	Since $\M$ is $k$-approximate, then it cannot assign all jobs to machine $1$.
	Indeed, in this case the cost of the mechanism would be
	$n L > k \cdot OPT(\y)$, that contradicts the approximation bound.
	
	Hence, we have that if agent $1$ takes actions compatible with $M$, then there exists a type profile compatible with $u_1$ such that $1$ receives $n$ jobs,
	whereas, if $1$ takes a different action compatible with a lower type, then there exists a type profile compatible with $u_1$ such that $1$ receives less than $n$ jobs.
	However, this contradicts that the mechanism is OSP.
\end{proof}

\begin{lemma}
	\label{lem:ci_high}
	For every $i < n$, and every full OSP $k$-approximate mechanism $\M$, with $k < n$,
	if $c_i = H$ and $i$ at node $u_i$ takes an action compatible with her type being $H$, then $\M$ does not assign any job to $i$,
	regardless of the actions taken by the other machines.
\end{lemma}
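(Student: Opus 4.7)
The plan is to mirror the strategy of Lemma \ref{lem:separ3}(2), adapted to the current parameter regime ($L$, $M$, $H$ with $M \geq n^2 L$, $H \geq n^2 M$ and $m=n$). Suppose for contradiction that there exist $i < n$ with $c_i = H$ and a profile $\x_{-i}$ compatible with $u_i$ such that, when $i$ takes at $u_i$ the action consistent with type $H$, the mechanism $\M$ assigns at least one job to $i$. I will exhibit a second profile $\y$ compatible with $u_i$ in which $i$ declares $M$ and receives $0$ jobs, and then read off a negative two-cycle in $\ver$.

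First I would carefully argue compatibility. Since $\M$ is full and by the renaming, the agents $1,\dots,i-1$ diverge on the all-$H$ branch before $u_i$ is reached; thus any profile in which $b_j = H$ for all $j<i$ and $b_i$ is an action available at $u_i$ is compatible with $u_i$, regardless of the values chosen for $j>i$ (who have not yet been queried). Since $c_i = H$, at $u_i$ the agent $i$ separates $H$ from the rest of her remaining domain (in particular from $M$), so both $x_i = H$ and $y_i = M$ are compatible with $u_i$. Define $\y$ by $y_i = M$, $y_j = H$ for $j<i$, and $y_j = L$ for $j>i$; such a $j>i$ exists because $i<n$.

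Next I would bound the optimum on $\y$. The $n-i \geq 1$ machines of type $L$ can absorb all $m=n$ jobs, giving
\[
OPT(\y) \;\leq\; \left\lceil\frac{n}{n-i}\right\rceil\cdot L \;\leq\; n\cdot L.
\]
If on input $\y$ the mechanism $\M$ assigned at least one job to $i$, its makespan would be at least $M$. Using $M \geq n^2 L$ and $k < n$,
\[
M \;\geq\; n^2 L \;>\; k\cdot n\cdot L \;\geq\; k\cdot OPT(\y),
\]
contradicting that $\M$ is $k$-approximate. Hence $f_i(\M(\y)) = 0$.

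Finally, I would close by invoking OSP two-cycle monotonicity at $u_i$. Consider the two profiles $(H,\x_{-i})$ and $(M,\y_{-i})$, both compatible with $u_i$ and diverging at $u_i$ on $H$ vs.\ $M$. By assumption $f_i(H,\x_{-i}) \geq 1$, while we just showed $f_i(M,\y_{-i}) = 0$. The corresponding two-cycle in $\ver$ therefore has weight
\[
M\bigl(f_i(H,\x_{-i})-f_i(M,\y_{-i})\bigr) \;+\; H\bigl(f_i(M,\y_{-i})-f_i(H,\x_{-i})\bigr) \;=\; (M-H)\bigl(f_i(H,\x_{-i})-f_i(M,\y_{-i})\bigr) \;<\; 0,
\]
since $M<H$. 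By Theorem \ref{thm:cmon}, this contradicts the OSP property of $\M$, completing the proof. The only subtle step is step one (compatibility of both profiles with $u_i$), where I rely crucially on the full-mechanism assumption together with the definition of the renaming and of $c_i$; the numerical contradiction and the two-cycle argument are then routine given the huge gaps imposed on $L,M,H$.
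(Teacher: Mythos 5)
Your proposal is correct and follows essentially the same route as the paper's own proof: the same profile $\y$ (with $y_i=M$, $y_j=H$ for $j<i$, $y_j=L$ for $j>i$), the same approximation-based argument forcing $f_i(\M(\y))=0$ via $M\geq n^2L$, and the same conclusion through a violated OSP (two-cycle) monotonicity condition at $u_i$. Your only additions — the explicit compatibility argument using fullness and the renaming, and spelling out the negative two-cycle weight — are details the paper leaves implicit, and they are handled correctly.
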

\begin{proof}
	Suppose that there is $i < n$ and $\x_{-i}$ compatible with $u_i$ such that if $i$ takes an action compatible with type $H$, then $\M$ assigns a job to $i$.
	According to the definition of $c_i$, machine $i$ diverges at node $u_i$ on $H$ and $M$.
	
	Consider then the profile $\y$ such that $y_i = M$, $y_j = H$ for $j < i$, and $y_j = L$ for $j > i$.
	It is easy to see that the optimal allocation has cost $OPT(\y)=\left\lceil\frac{n}{n-i}\right\rceil \cdot L$.
	Since $\M$ is $k$-approximate, then it does not assign any job to machine $i$.
	Indeed, in this case the cost of the mechanism would be at least
	$M \geq n^2 L \geq n \cdot \left\lceil\frac{n}{n-i}\right\rceil L > k \cdot OPT(\x),$
	that contradicts the approximation bound.
	
	Hence, we have that if $i$ takes actions compatible with $H$, then there exists a type profile compatible with $u_i$ such that $i$ receives one job,
	whereas, if $i$ takes a different action compatible with a lower type, then there exists a type profile compatible with $u_i$ such that $i$ receives zero jobs.
	However, this contradicts that the mechanism is OSP.
\end{proof}

\begin{theorem}
	\label{thm:special}
	Every full OSP mechanism for the machine scheduling problem such that $c_i = H$ for every $i < n$ has approximation ratio at least $n$.
\end{theorem}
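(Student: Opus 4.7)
The strategy is to argue by contradiction and let Lemma~\ref{lem:ci_high} do the heavy lifting, evaluating the resulting mechanism on a single carefully chosen profile. Suppose for contradiction that $\M$ is a full OSP mechanism that is $k$-approximate for some $k < n$.

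I would first examine the all-high profile $\x$ defined by $x_i = H$ for every $i \in [n]$. By the renaming convention adopted just after Lemma~\ref{lem:nonfull}, the node $u_i$ lies along the path in $\T$ on which every previously renamed agent has played an action compatible with type $H$; because $\M$ is full, all $u_1, \ldots, u_n$ exist along this path. Consequently $\x$ is compatible with $u_i$ for every $i$, and at $u_i$ agent $i$ takes an action compatible with $H$. Since by hypothesis $c_i = H$ for every $i < n$, Lemma~\ref{lem:ci_high} applies and yields $f_i(\M(\x)) = 0$ for each $i < n$.

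Hence on input $\x$ all $n$ jobs must be assigned to the single remaining machine $n$, giving $\M$ a makespan of at least $n \cdot H$. On the other hand, the optimal allocation for $\x$ places exactly one job on each machine, so $\OPT(\x) = H$. The resulting approximation ratio on $\x$ is therefore at least $n$, contradicting $k < n$.

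The main obstacle in this argument is essentially absent: it has already been discharged inside Lemma~\ref{lem:ci_high}, which converts the OSP two-cycle condition (together with $k$-approximation and $c_i = H$) into the clean statement ``playing $H$ at $u_i$ forbids assigning any job to $i$.'' The only sanity check to perform in the present proof is that the all-$H$ profile really is compatible with every $u_i$ for $i < n$, and this is immediate from the definition of the $u_i$'s and the fullness assumption.
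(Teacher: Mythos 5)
Your proof is correct and follows essentially the same route as the paper's: both evaluate the mechanism on the all-$H$ profile, invoke Lemma~\ref{lem:ci_high} to force zero load on every machine $i<n$, and contrast the resulting makespan $n\cdot H$ with $OPT(\x)=H$ to contradict $k<n$. Your added remark verifying that the all-$H$ profile is compatible with every $u_i$ is a harmless (and correct) elaboration of what the paper states without comment.
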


\begin{proof}
	Suppose now that there is an OSP $k$-approximate mechanism $\M$, for some $k < n$, such that $c_i = H$ for every $i < n$.
	Consider $\x$ such that $x_i = H$ for every $i$.
	Observe that $\x$ is compatible with $u_i$ for every $i$.
	The optimal allocation consists in assigning each job to a different machine, and has cost $OPT(\x) = H$.
	
	According to Lemma~\ref{lem:ci_high}, if machines take actions compatible with $\x$,
	then the mechanism $\M$ does not assign any job to machine $i$ for every $i < n$.
	Hence, the outcome that $\M$ returns for type profile $\x$ consists in assigning all jobs to the remaining machine.
	Therefore, the cost of $M$ is $n \cdot H > k OPT(\x)$, that contradicts the approximation bound.
\end{proof}

This result implies a lower bound of $n$ for the approximation of ascending and descending auctions.

\begin{corollary}
	Every ascending or descending auction has approximation ratio at least $n$ for the machine scheduling problem.
\end{corollary}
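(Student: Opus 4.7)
The plan is to reduce any ascending or descending auction to the bounds already established by Lemma~\ref{lem:nonfull}, Theorem~\ref{lem:c1_low}, and Theorem~\ref{thm:special}, by reading off the structural parameter $c_i$ that the auction format forces. First I dispose of the degenerate cases: a trivial mechanism outputs one fixed allocation $\x$; taking a profile in which the most heavily loaded machine in $\x$ has type $H$ and every other machine has type $L$ gives makespan at least $H \geq n^2 M \geq n^3 L$ against optimum $O(L)$, which is far worse than $n$. If the mechanism is non-trivial but non-full, Lemma~\ref{lem:nonfull} already yields the desired $n$ lower bound. Thus from now on I may assume $\M$ is non-trivial and full.

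Suppose $\M$ is an ascending auction. By definition of the ascending format, at every divergent node the queried agent separates the minimum remaining type in her current domain from the remaining higher types. In particular, at the root agent~$1$ has $D_1(u_1) = \{L,M,H\}$, and the ascending rule forces her split at $u_1$ to be $\{L\}$ versus $\{M,H\}$. Since $L$ and $M$ therefore lie on different sides of this split, $c_1 = M$ by the definition introduced before Theorem~\ref{lem:c1_low}, and that theorem immediately gives an approximation of at least $n$.

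Suppose instead $\M$ is a descending auction. Dually, every divergent agent must separate the maximum type in her current domain from the remaining lower types. Consider, as in the setup preceding Theorem~\ref{thm:special}, the subtree consistent with all earlier-diverging agents playing according to type $H$; conditioning on the types of \emph{other} agents does not shrink agent~$i$'s own domain, so at her first divergent node $u_i$ we still have $D_i(u_i) = \{L,M,H\}$, and the descending rule forces the split $\{H\}$ versus $\{L,M\}$. Hence $L$ and $M$ lie on the same side and $c_i = H$ for every $i$, in particular for every $i < n$. Since each agent is queried along the all-$H$ path, the mechanism is full, and Theorem~\ref{thm:special} concludes the bound.

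The main obstacle is really bookkeeping around the relabelling of agents and the conditioning subtree: one must justify that when we reach $u_i$ in the canonical ``all past agents answered as if type $H$'' subtree, agent~$i$'s own current domain has not yet been restricted, so that the ascending/descending rule applies to the full three-value domain and cleanly produces $c_1 = M$ (respectively $c_i = H$). Once this observation is made, the corollary is a two-line invocation of the previous two theorems.
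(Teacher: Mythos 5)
Your proof is correct and follows essentially the same route as the paper: ascending auctions force $c_1=M$ so Theorem~\ref{lem:c1_low} applies, while descending auctions are either non-full (Lemma~\ref{lem:nonfull}) or have $c_i=H$ for all $i<n$ so Theorem~\ref{thm:special} applies. Your extra bookkeeping (disposing of trivial mechanisms and checking that $D_i(u_i)=\{L,M,H\}$ along the all-$H$ subtree) just makes explicit what the paper leaves implicit.
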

\begin{proof}
	An ascending mechanism has $c_1=M$ and, therefore, the lower bound follows from Theorem~\ref{lem:c1_low}.
	
	A full descending auction, instead, has $c_i=H$ for all $i$ and then in particular for all $i < n$.  The lower bound then follows from Lemma \ref{lem:nonfull} and Theorem~\ref{thm:special}.
\end{proof}

\subsection{Proof of Proposition~\ref{prop:algoapprox}}

\begin{proof}[Proof of Proposition~\ref{prop:algoapprox}]
	We show that the mechanism $\Mmany$ returns  $\left(\frac{m+\left\lceil\sqrt{n}\right\rceil-1}{m} \cdot \left\lceil\sqrt{n}\right\rceil\right)$-approx\-imate allocation. 
 We denote with $OPT(\x)$ the makespan of the optimal assignment when machines have type profile $\x$.
 We will use the same notation both if the optimal assignment is computed on a set of $n$ machines
 and if it is computed and on a set of $\left\lceil\sqrt{n}\right\rceil$ machines, since these two cases
 can be distinguished through the input profile.

 Fix a type profile $\x$. 
 Let $A$ be as at the beginning of the ascending phase.
 Let $w$ be the last query done in this phase and let $i$ be the last queried agent, i.e., the one that answered yes.
 Moreover, let $t = \min_{j \notin A} t_j$.
 It is immediate to see that $OPT(\x) \geq OPT(\y)$ where $\y$ is such that $y_j = w$ for every $j \in A$, and $y_j = t$, otherwise.
 Moreover, let $\M(\x)$ be the makespan of the assignment returned by our mechanism on the same input.
 Then, $\M(\x)$ is equivalent to $OPT(\hat{\z})$, where $\hat{\z}$ is such that $\hat{z}_i = w$ and $\hat{z}_j = t$ for each remaining machine $j$.
 Hence, the theorem follows by proving that $\frac{OPT(\hat{\z})}{OPT(\y)} \leq \frac{m+\left\lceil\sqrt{n}\right\rceil-1}{m} \cdot \left\lceil\sqrt{n}\right\rceil$.

 Observe that the optimal assignment on input $\hat{\z}$ assigns $\alpha$ jobs to machine $i$ and the remaining $m-\alpha$ jobs to the remaining $\left\lceil\sqrt{n}\right\rceil - 1$ machines, for some $\alpha \in \{1, \ldots, m\}$.
 In case of multiple assignment achieving the optimal makespan, we pick the one that maximizes $\alpha$.
 Hence, $OPT(\hat{\z}) = \max\left\{\alpha \cdot w, \left\lceil\frac{m-\alpha}{\left\lceil\sqrt{n}\right\rceil - 1}\right\rceil \cdot t\right\}$.
 We next show that $OPT(\hat{\z}) \leq \frac{(m+\left\lceil\sqrt{n}\right\rceil-1)tw}{t + (\left\lceil\sqrt{n}\right\rceil-1)w}$.

 Suppose first that $OPT(\hat{\z}) = \alpha \cdot w$. Since this is the optimal assignment,
 then it must be the case that
 $$\alpha \cdot w \leq \max\left\{(\alpha-1) \cdot w, \left\lceil\frac{m-\alpha+1}{\left\lceil\sqrt{n}\right\rceil - 1}\right\rceil \cdot t\right\} = \left\lceil\frac{m-\alpha+1}{\left\lceil\sqrt{n}\right\rceil - 1}\right\rceil \cdot t = \frac{m-\alpha+1+\delta}{\left\lceil\sqrt{n}\right\rceil - 1} \cdot t,$$
 where $\delta = 0$ if $r = (m-\alpha+1) \mod \left(\left\lceil\sqrt{n}\right\rceil - 1\right) = 0$, and
 $\delta = \left\lceil\sqrt{n}\right\rceil - 1 - r$, otherwise.
 Hence, we have that
 $\alpha \leq \frac{(m+1+\delta)t}{t + (\left\lceil\sqrt{n}\right\rceil-1)w}$, and, consequently, $OPT(\hat{\z}) \leq \frac{(m+1+\delta)tw}{t + (\left\lceil\sqrt{n}\right\rceil-1)w} \leq \frac{(m+\left\lceil\sqrt{n}\right\rceil-1)tw}{t + (\left\lceil\sqrt{n}\right\rceil-1)w}$.

 If instead $OPT(\hat{\z}) = \left\lceil\frac{m-\alpha}{\left\lceil\sqrt{n}\right\rceil - 1}\right\rceil \cdot t = \frac{m-\alpha+\delta'}{\left\lceil\sqrt{n}\right\rceil - 1}$,
 where $\delta' = \left\lceil\sqrt{n}\right\rceil - 2$ if $\delta=0$ and $\delta' = \delta-1$, otherwise,
 then it must be the case that, by our tie-breaking rule among optimal assignments,
 $\frac{m-\alpha+\delta'}{\left\lceil\sqrt{n}\right\rceil - 1} \cdot t < (\alpha + 1)w$,
 from which we have $\alpha > \frac{(m+\delta')t - (\left\lceil\sqrt{n}\right\rceil-1)w}{t + (\left\lceil\sqrt{n}\right\rceil-1)w}$, and, consequently,
 $$OPT(\hat{\z}) < \frac{m + \delta' - \frac{(m+\delta')t - (\left\lceil\sqrt{n}\right\rceil-1)w}{t + (\left\lceil\sqrt{n}\right\rceil-1)w}}{\left\lceil\sqrt{n}\right\rceil - 1} \cdot t = \frac{(m+1+\delta')tw}{t + (\left\lceil\sqrt{n}\right\rceil-1)w} \leq \frac{(m+\left\lceil\sqrt{n}\right\rceil-1)tw}{t + (\left\lceil\sqrt{n}\right\rceil-1)w}.$$

 The optimal assignment on input $\y$ assigns $\beta$ jobs to the $\left\lceil\sqrt{n}\right\rceil$ machines in $A$,
 for some $\beta \in \{1, \ldots, m\}$,
 and
 $m-\beta$ jobs to the remaining $n - \left\lceil\sqrt{n}\right\rceil$ machines.
 Hence, $OPT(\y) =\max\left\{\left\lceil\frac{\beta}{\left\lceil\sqrt{n}\right\rceil}\right\rceil \cdot w, \left\lceil\frac{m-\beta}{n- \left\lceil\sqrt{n}\right\rceil}\right\rceil \cdot t\right\}$.
 We next show that $OPT(\y) \geq \frac{mtw}{\left\lceil\sqrt{n}\right\rceil t + (n-\left\lceil\sqrt{n}\right\rceil)w}$.

 If $OPT(\y) = \left\lceil\frac{\beta}{\left\lceil\sqrt{n}\right\rceil}\right\rceil \cdot w = \frac{\beta + \gamma}{\left\lceil\sqrt{n}\right\rceil} \cdot w$,
 where $\gamma = 0$ if $r = \beta \mod \left\lceil\sqrt{n}\right\rceil = 0$, and
 $\gamma = \left\lceil\sqrt{n}\right\rceil - r$, otherwise,
 then $\frac{\beta + \gamma}{\left\lceil\sqrt{n}\right\rceil} \cdot w \geq \left\lceil\frac{m-\beta}{\left\lceil\sqrt{n}\right\rceil - 1}\right\rceil \cdot t = \frac{m-\beta+\lambda}{n-\left\lceil\sqrt{n}\right\rceil} \cdot t$,
 where $\lambda = 0$ if $\rho = (m-\beta) \mod (n-\left\lceil\sqrt{n}\right\rceil) = 0$, and
 $\lambda = n - \left\lceil\sqrt{n}\right\rceil - \rho$, otherwise.
 Hence, we have $\beta \geq \frac{(m+\lambda)\left\lceil\sqrt{n}\right\rceil t - (n-\left\lceil\sqrt{n}\right\rceil)\gamma w}{\left\lceil\sqrt{n}\right\rceil t + (n-\left\lceil\sqrt{n}\right\rceil)w}$, and, consequently,
 $OPT(\y) \geq \frac{(m+\gamma+\lambda)tw}{\left\lceil\sqrt{n}\right\rceil t + (n-\left\lceil\sqrt{n}\right\rceil)w} \geq \frac{mtw}{\left\lceil\sqrt{n}\right\rceil t + (n-\left\lceil\sqrt{n}\right\rceil)w}$.

 If $OPT(\y) = \left\lceil\frac{m-\beta}{n- \left\lceil\sqrt{n}\right\rceil}\right\rceil \cdot t = \frac{m-\beta+\lambda}{n-\left\lceil\sqrt{n}\right\rceil}$, then,
 $\frac{\beta + \gamma}{\left\lceil\sqrt{n}\right\rceil} \cdot w \leq \frac{m-\beta+\lambda}{n-\left\lceil\sqrt{n}\right\rceil} \cdot t$.
 Hence, $\beta \leq \frac{(m+\lambda)\left\lceil\sqrt{n}\right\rceil t - (n-\left\lceil\sqrt{n}\right\rceil)\gamma w}{\left\lceil\sqrt{n}\right\rceil t + (n-\left\lceil\sqrt{n}\right\rceil)w}$, and, consequently,
 $OPT(\y) \geq \frac{(m+\gamma+\lambda)tw}{\left\lceil\sqrt{n}\right\rceil t + (n-\left\lceil\sqrt{n}\right\rceil)w} \geq \frac{mtw}{\left\lceil\sqrt{n}\right\rceil t + (n-\left\lceil\sqrt{n}\right\rceil)w}$.

 Thus, $\frac{OPT(\hat{\z})}{OPT(\y)} \leq \frac{m+\left\lceil\sqrt{n}\right\rceil-1}{m} \cdot \frac{\left\lceil\sqrt{n}\right\rceil t + (n-\left\lceil\sqrt{n}\right\rceil)w}{t + (\left\lceil\sqrt{n}\right\rceil-1)w} \leq \frac{m+\left\lceil\sqrt{n}\right\rceil-1}{m} \cdot \left\lceil\sqrt{n}\right\rceil$, where we used that $n-\left\lceil\sqrt{n}\right\rceil \leq \left\lceil\sqrt{n}\right\rceil^2 - \left\lceil\sqrt{n}\right\rceil = \left\lceil\sqrt{n}\right\rceil (\left\lceil\sqrt{n}\right\rceil - 1)$.
\end{proof}

\end{document}